\documentclass[journal ]{new-aiaa}
\usepackage[utf8]{inputenc}
\usepackage{textcomp}
\usepackage{subcaption}
\usepackage{booktabs}  
\usepackage{multirow}  
\usepackage{hyperref}
\usepackage[section]{placeins} 
\usepackage{array}

\usepackage{amsthm}
\theoremstyle{plain}
\newtheorem{remark}{Remark}
\newtheorem{problem}{Problem}

\newtheorem{lemma}{Lemma}
\newtheorem{proposition}{Proposition}

\newtheorem{theorem}{Theorem}

\usepackage{xltabular}  
\usepackage{booktabs}
\usepackage{makecell}
\usepackage{array}
\usepackage{enumitem}
\usepackage{bm}
\usepackage{algorithm}
\usepackage{algpseudocode}
\usepackage{todonotes}

\newcolumntype{Y}{>{\raggedright\arraybackslash}X}
\newcolumntype{M}{>{\raggedright\arraybackslash}p{2.0cm}}

\newcommand{\rv}[1]{#1}

\usepackage{graphicx}
\usepackage{amsmath}
\usepackage[version=4]{mhchem}
\usepackage{siunitx}
\usepackage{longtable,tabularx}
\makeatletter
\let\oldthebibliography\thebibliography
\let\endoldthebibliography\endthebibliography

\renewenvironment{thebibliography}[1]{%
  \oldthebibliography{#1}%
  \setlength{\itemsep}{0pt}%
  \setlength{\parsep}{0pt}%
  \setlength{\parskip}{0pt}%
}{%
  \endoldthebibliography
}
\makeatother

\title{Physics-informed Learning for Orbital Uncertainty Propagation with Error Bounds}

\author{Chun-Wei~Kong\footnote{Graduate Research Assistant, Department of Aerospace Engineering Sciences. Corresponding Author: chun-wei.kong@colorado.edu.},
Morteza~Lahijanian\footnote{Associate Professor, Department of Aerospace Engineering Sciences: morteza.lahijanian@colorado.edu.},
and Jay~McMahon\footnote{Associate Professor, Department of Aerospace Engineering Sciences: jay.mcmahon@colorado.edu.}%
}

\affil{University of Colorado, Boulder, Colorado 80301}

\begin{document}

\maketitle

\begin{abstract}
The Fokker-Planck partial differential equation (FP-PDE) governs uncertainty evolution in stochastic dynamical systems. In orbital dynamics, solving the FP-PDE is challenging because of nonlinear motion, high-dimensional states, and large space-time domains. We develop a physics-informed neural network (PINN) approach that approximates the FP-PDE solution as a single space-time probability density, while also quantifying its worst-case approximation error. This approach is, in principle, independent of the choice of state coordinates and neural network architecture. Specifically, to enforce probability density function (PDF) properties into the neural network, we design a Physics-informed Gaussian mixture model (PINN-GMM). Then a companion error PINN learns the dynamics of the approximation error and yields time-dependent bounds that define an ambiguity set of PDFs. This ambiguity set enables rigorous computation of upper and lower bounds on event probabilities through tractable linear programs. Numerical studies on illustrative 1D examples and several 4D--6D orbital test cases demonstrate accurate uncertainty propagation, correct and informative error bounds, and improved reliability over common uncertainty-propagation baseline methods (Gaussian approximation, unscented transform, and Gaussian mixture model). Constructing the PINN-GMM requires offline training, making it costlier than the baseline approximations; once trained, however, a single forward pass returns the density at any time in sub-millisecond time $(0.16~\mathrm{ms}$ in our implementation).

\end{abstract}




\section{Introduction}\label{sec:intro}
Uncertainty propagation of physical systems over continuous time can be modeled by a stochastic differential equation (SDE).
This model describes the \emph{exact} time evolution of uncertainty as a probability density function, which is governed by the Fokker-Planck partial differential equation (FP-PDE)~\cite[Ch. 5]{sarkka2019applied}.
Nevertheless, FP-PDE does not admit closed-form solutions for general physical systems.
Recent advances in physics-informed neural networks (PINNs) point toward an efficient and accurate framework for learning solutions of general PDEs~\cite{lu2021deepxde,sirignano2018dgm}. 
\rv{A PINN trains a neural approximation by minimizing a physics-informed loss that combines initial/boundary-condition mismatch with the residual of the governing PDE evaluated at sampled collocation points.}
\rv{Thus, PINNs do not require labeled data from the true PDE solution, which is the probability density function (PDF) $p(x,t)$ in the FP-PDE setting; instead, the governing equation supplies physics-based constraints throughout the computational domain.}
This provides a new perspective on the \emph{exact} uncertainty-propagation problem: instead of propagating uncertainty forward in time, we seek to approximate the full space-time PDF as \emph{a single function} of space and time.
Its main advantage is that PINN can represent the uncertainty at \emph{any time} as an \emph{arbitrary} probability distribution, provided that the neural network has sufficient approximation capacity.

However, casting uncertainty propagation as learning the solution of the FP-PDE yields a substantially harder problem than traditional approaches.
First, training a PINN is inherently a non-convex optimization task, and the loss function induced by the FP-PDE typically creates a highly complex loss landscape~\cite{krishnapriyan2021characterizing}.
Second, existing analyses for PINNs give error estimates that grow exponentially over time, with exponents that themselves increase with the size of the state domain~\cite{mishra2023estimates}.
These effects are especially severe in orbital-dynamics applications, where uncertainty evolves over a large state space over long durations.
Third, although arbitrarily many space-time samples are available for PINN training in principle, one works with a finite number of samples in practice~\cite{sirignano2018dgm}. 
As the state dimension and computational domain grow, this finite \rv{sampling budget suffers from curse-of-dimensionality effects: samples become increasingly sparse over the computational domain, so relatively few points fall in the regions that contain PDF information.}
This \emph{information dilution} is particularly problematic in orbital dynamics, where probability mass often concentrates in small subsets of the state space.

Because directly solving or learning the FP-PDE at orbital scales is difficult, a large body of work instead seeks accurate and efficient uncertainty propagation \emph{without} explicitly solving the FP-PDE.
We group the existing approaches into (i) methods that propagate an initial PDF forward in time using structured approximations and (ii) methods that approximate the FP-PDE more directly through reduced or discretized representations.
For the first group, \rv{the state PDF is represented as a Gaussian mixture, with the mean and covariance of each component propagated using the deterministic drift and linearized dynamics~\cite{terejanu2008uncertainty}.}
To improve accuracy, \rv{the mixture weights are updated} using the FP-PDE, leading to a quadratic optimization problem that requires state-space integrals.
As a result, scalability to higher-dimensional systems is limited, and demonstrations are restricted to 1D and 2D examples.
\rv{Adaptive splitting and merging criteria are introduced to improve the accuracy of Gaussian-mixture uncertainty approximations~\cite{vishwajeet2018adaptive}.}
\rv{Higher-order state transition tensors are used to propagate Gaussian-mixture components and better capture the nonlinearity of orbital dynamics~\cite{park2006nonlinear,fujimoto2012analytical,khatri2023nonlinear}.}
These approaches are developed for dynamical systems without process noise, i.e., for the FP-PDE without the diffusion term. 
Similarly, \rv{higher-order moment propagation based on state-transition tensors is developed for systems without stochastic noise~\cite{acciarini2025nonlinear}.}
\rv{A Gaussian mixture is also propagated through a surrogate dynamical model constructed by polynomial chaos expansions, but this approach is restricted to deterministic discrete-time systems with an initial Gaussian distribution~\cite{vittaldev2016spacecraft}.}
Beyond these modeling restrictions, all of the above methods lack rigorous error bounds to quantify the discrepancy between the propagated PDF and the unknown true PDF.

A second line of work approximates the FP-PDE solution more directly. 
\rv{A Galerkin projection method represents the PDF as a tensor expansion in B-spline bases~\cite{acciarini2024uncertainty}.}
This approach has several limitations.
First, the number of tensor-product B-spline basis functions, and the cost of the associated state-space integrals, grows quickly with state dimension. 
Second, because the B-spline basis does not enforce non-negativity and normalization, PDF properties must be restored by post-processing, which may degrade solution quality. 
Third, due to the decoupled tensor expansion, the formulation is restricted to SDEs whose drift and diffusion do not explicitly depend on time. 
Finally, as with the traditional methods above, no \textit{a priori} error bounds are provided.
In numerical studies, the method is tested on a 2D stochastic damped harmonic oscillator and on 3D orbital cases in modified equinoctial elements.
Accuracy is assessed against high-fidelity Monte Carlo simulations using moment errors and the Hellinger distance, but no comparisons are made to traditional uncertainty-propagation methods.
\rv{A tensor-decomposition method represents the space-time PDF in tensor form discretized with Chebyshev spectral differentiation to solve the FP-PDE~\cite{sun2016uncertainty}.}
In this formulation, the number of tensor grid points grows with both state dimension and desired resolution, so runtime and memory requirements increase rapidly. 
They demonstrate the method on 4D and 6D two-body problems, both unperturbed and perturbed, in rotating spherical coordinates and compare against high-fidelity Monte Carlo simulations.
However, they do not compare against traditional uncertainty-propagation methods, and their validation focuses on errors in low-dimensional marginals rather than the full joint PDF. 
Again, no \emph{a priori} error bound is available to certify the quality of the approximation. 
Overall, existing methods either propagate restricted parametric families without rigorous \emph{a priori} error bounds, or approximate the FP-PDE with representations whose computational cost and constraint handling become limiting in high dimensions \rv{over extended time horizons}, again without providing rigorous error guarantees.
These gaps motivate the approach developed in this paper.

In this work, we develop theory and practical methods to learn orbital uncertainty propagation as a single space-time PDF governed by the FP-PDE, together with worst-case error bounds for the learned approximation.
First, we introduce several coordinate designs that reduce the effective computational domain for PINNs.
These range from special analytical ones (e.g., the normalized rotating spherical coordinates~\citep{sun2016uncertainty}) to more general mappings.
Next, we outline neural networks architectures for representing a space-time probability density function in PINN training.
We focus on a time-conditioned Gaussian-mixture PINN (PINN-GMM), which enforces PDF properties by construction and admits analytical expressions for marginal densities.
We further propose an encoder-decoder structure for the PINN-GMM to learn the PDF in a latent space.
To illustrate how the PINN approximation and its worst-case error bounds can be used in downstream tasks, we consider the problem of bounding event probabilities.
This problem aims to compute upper and lower bounds on the probability measure of an event when the underlying (unknown) probability density is governed by the FP-PDE.
We formulate this problem as a linear program (LP) with interval arithmetic, and show that when a PINN-GMM represents the approximate PDF, the resulting interval computations reduce to convex optimization problems.
Finally, we develop a training procedure for learning approximate space-time PDFs and their worst-case error bounds.
We propose a combined sampling strategy that mixes random uniform samples with an adaptive component to mitigate the \emph{information dilution} characteristic of orbital-dynamics problems.
\rv{We emphasize that these choices are problem-structured mitigation strategies for learning time-varying PDFs; they do not remove the curse of dimensionality for general PINNs or arbitrary high-dimensional PDEs.}

In short, our main contributions are:
\begin{itemize}
    \item a formulation of orbital uncertainty propagation as learning a single space-time probability density function governed by the FP-PDE, together with a \rv{two-stage physics-informed procedure that trains a density approximation $\hat p(x,t)$ and an error approximation $\hat e(x,t)$ to construct} worst-case error bounds,
    \rv{\item a time-conditioned PINN-GMM representation, with an encoder-decoder architecture, that yields a valid PDF at every time in the continuous horizon and improves trainability for hours-scale, large-state-space orbital problems,}
    \item a computationally tractable \rv{LP} that computes rigorous upper and lower bounds on event probabilities via \emph{PINN-GMM} and its error bounds,
    \item a practical training procedure, including a combined random-adaptive sampling strategy to handle PDF concentration over the vast space-time domain \rv{for the systems of up to six dimensions considered here,} and
    \item several numerical studies on 1D examples and \rv{four} 4D--6D orbital cases that quantify accuracy, benchmark against existing methods, validate worst-case error bounds \rv{empirically}, and compute certified event-probability intervals.
\end{itemize}

\section{Problem}\label{sec:problem}
In this work, we study uncertainty propagation in orbital dynamics modeled by a stochastic differential equation (SDE)~\citep{terejanu2008uncertainty,fujimoto2012analytical,sun2016uncertainty,acciarini2024uncertainty}:
\begin{equation}
d\bm{x}(t) = f(\bm{x}(t),t)\, dt + g(\bm{x}(t),t)d\bm{w}(t),
\label{eq:sde_general}
\end{equation}
where $t \in \rv{\mathcal{T}} \subseteq \mathbb{R}_{\geq 0}$ is time, $\bm{x}(t) \in \rv{\mathcal{X}} \subseteq \mathbb{R}^n$ is the state at time $t$, and $\bm{w}(t) \in \mathbb{R}^m$ is a standard Brownian motion. 
The deterministic part of the dynamics is given by the vector field $f: \rv{\mathcal{X} \times \mathcal{T}} \rightarrow \mathbb{R}^n$ and the noise coupling by the matrix-valued function $g: \rv{\mathcal{X} \times \mathcal{T}} \rightarrow \mathbb{R}^{n\times m}$. For clarity, the $i$-th component of $f$ is denoted $f_i$, and the $(j,k)$-th entry of $g$ is denoted $g_{jk}$. We assume that $f$ and $g$ satisfy standard regularity conditions~\citep[Ch.~5]{oksendal2003stochastic}. 
For simplicity, we denote $t=0$ as the initial time unless further specified.
The initial state $\bm{x}(0)$ is a random variable drawn from a given initial probability density function (PDF) $p_0: \rv{\mathcal{X}} \rightarrow \mathbb{R}_{\geq 0}$, which is assumed smooth and bounded.

The evolution of the state uncertainty over time of Eq.~\eqref{eq:sde_general} is described by the time-dependent PDF $p(x,t)$, governed by the associated FP-PDE:
\begin{equation}    \label{eq:fp_pde}
    \frac{\partial p(x,t)}{\partial t} + \sum_{i=1}^n \frac{\partial}{\partial x_i} [f_i p(x,t)] -
    \frac{1}{2} \!\! \sum_{i=1,j=1}^n \!\! \frac{\partial ^2}{\partial x_i \partial x_j} \left[ \sum_{k=1}^m g_{ik}g_{jk}p(x,t) \right] = 0,
\end{equation}
subject to the initial condition
\begin{equation}
    p(x,0) = p_0(x) \qquad \forall x \in \mathcal{X}.
    \label{eq:fp_pde_init_cond}
\end{equation}
To simplify notation, define the differential operator $\mathcal{D}[\cdot]$ associated with the FP-PDE as
\begin{equation}\label{eq:fp_diff_opt}
    \mathcal{D}[\cdot]:= \frac{\partial}{\partial t}[\cdot] + \sum_{i=1}^n \frac{\partial}{\partial x_i} [f_i \cdot] - \frac{1}{2} \! \sum_{i,j=1}^n \frac{\partial ^2}{\partial x_i \partial x_j} \left[ \sum_{k=1}^m g_{ik}g_{jk} \cdot \right].
\end{equation}
Then, Eqs.~\eqref{eq:fp_pde} and \eqref{eq:fp_pde_init_cond} can be rewritten in a compact form  as
\begin{equation}
    \mathcal{D}[p(x,t)] = 0 \quad \text{ subject to } \quad p(x,0) = p_0(x).
    \label{eq:fp_pde_compact}
\end{equation}
Under the regularity assumptions on $f$ and $g$, the FP-PDE in Eq.~\eqref{eq:fp_pde_compact} is well-posed, i.e., it admits a unique smooth solution $p(x,t)$, which is a reasonable assumption for most physical systems.

\begin{figure}[htbp!]
    \centering
    \includegraphics[width=1.0\linewidth]{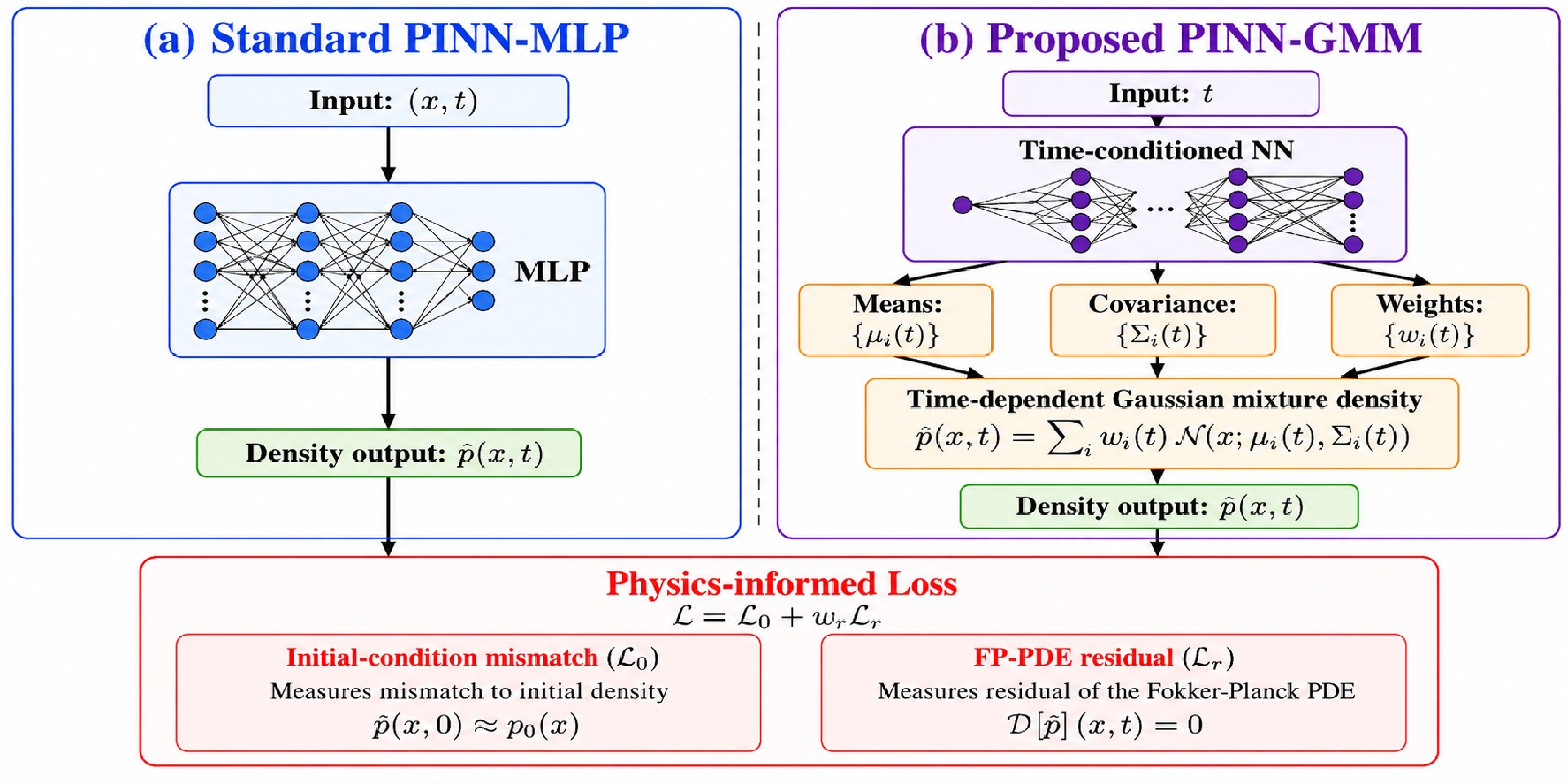}
    \caption{\rv{Physics-informed neural networks considered: a) standard PINN-MLP and b) proposed PINN-GMM, both trained with initial-condition and FP-PDE residual losses.}}
    \label{fig:pinn_concept}
\end{figure}

In general, Eq.~\eqref{eq:fp_pde_compact} cannot be solved in closed form.
We therefore approximate $p(x,t)$ with \rv{a physics-informed neural-network representation $\hat p(x,t)$.
Figure~\ref{fig:pinn_concept} contrasts the standard PINN-MLP representation~\cite{sirignano2018dgm,lu2021deepxde} with the proposed time-conditioned PINN-GMM representation; both are trained by minimizing a physics-informed loss derived from Eq.~\eqref{eq:fp_pde_compact}.}
\rv{Unlike data-driven methods that rely on finite samples of the true solution, this training uses only the FP-PDE operator $\mathcal{D}[\cdot]$ and the initial condition $p_0$, requiring no labeled data for $p(x,t)$ at $t>0$.}
Nevertheless, as with any learning-based solution, quantifying the error remains critical.
We therefore pose the following problem.

\begin{problem}\label{prob:1}
    Given the FP-PDE in Eq.~\eqref{eq:fp_pde_compact}\rv{, a compact state domain $X \subset \mathcal{X}$, and a compact time interval $T=[0,T_{\mathrm{end}}] \subset \mathcal{T}$, let $\Omega := X \times T$.}
    \rv{T}rain a PINN
    $\hat{p}(x,t)$ that approximates the solution $p(x,t)$ \rv{on $\Omega$}, and construct an error bound $B:T \rightarrow \mathbb{R}_{\geq 0}$ such that 
    \begin{equation}\label{eq:prob_1}
        \sup_{x\in X} |p(x,t)-\hat{p}(x,t)| \leq B(t),\; \forall t \in T.
    \end{equation} 
\end{problem}

\rv{Here, $X$ is a computational region of interest, not a finite PDE domain with prescribed artificial boundary conditions.
The FP-PDE is associated with the SDE on the full state space $\mathcal{X}$, where the PDF is assumed to be integrable and to decay at infinity.
The bound in Eq.~\eqref{eq:prob_1} is therefore an in-domain guarantee over $X \times T$; evaluating $\hat p$ outside this region, including time extrapolation beyond $T$, is out of scope.}
Although our prior work~\citep{kongerror} develops \rv{a} theoretical foundation for constructing such error bounds, training a PINN approximation $\hat p$ for orbital uncertainty propagation remains challenging in practice due to the large space-time domain (see the discussion in Sec.~\ref{sec:intro}), let \rv{alone} constructing the corresponding error bound $B$.

Furthermore, the neural-network $\hat p$ introduces an additional practical hurdle. In orbital conjunction analysis, one must integrate the PDF over an event set to compute the event probability. 
Since the true $p$ is unknown, we can only bound such event probability using $\hat p$ together with the error bound $B$, which naturally leads to integrals of $\hat p \pm B$. 
Computing such integrals is itself nontrivial when $\hat p$ is represented by a neural network. This motivates the following problem of computing upper and lower bounds on event probabilities. 
Formally,

\begin{problem}\label{prob:integral_pdf}
    Consider the setting in Problem~\ref{prob:1}, a bounded event set $X_{\text{event}} \subseteq X$.
    Define the event probability (probability mass) at time $t$ by
    $
        \mathbb{P}(x \in X_{\text{event}}, t) := \int_{x \in X_{\text{event}}} p(x,t) dx.
    $
    Given approximate PDF $\hat p(x,t)$ and bound $B(t)$,
    compute lower and upper probability bounds $\mathbb{P}^-(t;X_{\text{event}},\hat p, B)$ and $\mathbb{P}^+(t; X_{\text{event}},\hat p, B)$, respectively, such that
    $
        \mathbb{P}^-(t;X_{\text{event}},\hat p, B) \leq \mathbb{P}(x \in X_{\text{event}}, t) \leq \mathbb{P}^+(t;X_{\text{event}},\hat p, B).
    $
\end{problem}



\section{Methods}
This section presents the theoretical framework for approximating the space–time density 
$p(x,t)$ in orbital dynamics. 
We first describe alternative coordinate choices and how they affect domain size and dynamical structure.
We then introduce a representation of $\hat p(x,t)$ that encodes PDF properties.
Finally, we show how the resulting approximation, together with a time-dependent error bound, enables rigorous downstream analyses; in particular, we compute upper and lower bounds on event probabilities (i.e., integrals of the PDF over event regions) via a linear program.

\subsection{Choice of Coordinates}
The choice of state coordinates $\bm{x}$ in Eq.~\eqref{eq:sde_general} affects both the domain of $p(x,t)$ and the structure of the dynamics.
We first consider problem-specific coordinates and then introduce more general coordinates.

\subsubsection{Rotating Spherical Coordinates}\label{subsubsec:rsc}
We adopt standard spherical coordinates $(r,\theta,\phi)$, related to Cartesian position $(x,y,z)$ by
\begin{equation}\label{eq:def_sphere}
    r = \sqrt{x^2+y^2+z^2},\quad \theta = \cos^{-1}(z/r),\quad \phi=\tan^{-1}(y/\rv{x}),
\end{equation}
where $r,\theta,\phi$ are the radial distance, polar angle (from vertical) and azimuthal angle (from positive Eastward) respectively.
The rotating non-dimensional spherical coordinates $(r',\theta', \phi', t')$ are defined by the mapping between $(r,\theta,\phi,t)$ and $(r',\theta',\phi',t')$:
\begin{equation}\label{eq:def_rotsphere}
    r = Rr',\quad \theta = \Theta \theta',\quad \phi = \Phi \phi' + W(T_{\mathrm{orb}}t'),\quad t = T_{\mathrm{orb}} t',
\end{equation}
where $W$, $R$ and $T_{\mathrm{orb}}$ are the characteristic rotation rate, radius and period of the nominal orbit, respectively, and $\Phi$, $\Theta$ are scalars for scaling and non-dimensionalization. 
\rv{For completeness, Ref.~\cite{sun2016uncertainty} provides the explicit $J_2$-perturbed Keplerian \rv{stochastic differential equation (SDE)} in these coordinates.}

Propagating uncertainty in rotating spherical coordinates can reduce the computation domain~\citep{sun2016uncertainty}.
Besides, transforming back to Cartesian coordinates yields a closed-form expression for the Cartesian-space marginal PDF, i.e., $p_{XYZt}(x,y,z,t) = \frac{1}{R\Theta \Phi r^2 \sin \theta} p_{r'\theta' \phi' t'}(r',\theta',\phi',t')$,
where $\{x,y,z,t\}$ is computed from $\{r',\theta',\phi',t'\}$ through Eqs.~\eqref{eq:def_sphere} and~\eqref{eq:def_rotsphere}.
Note that this expression is singular when $\sin \theta = 0$. 
For instance, if the Cartesian coordinate is defined on the \rv{Earth-Centered Inertial} frame, then singularity occurs when spherical coordinates in question are over one of the poles.

\subsubsection{\rv{Orbital-Elements} Coordinates}\label{subsubsec:oec}
The state $\bm{x}$ may also be expressed in orbital elements (e.g., classical Keplerian or equinoctial elements).
\rv{Here, $a,e,i,\Omega,\omega,\theta$ denote semi-major axis, eccentricity, inclination, right ascension of the ascending node, argument of periapsis, and true anomaly, respectively; $p_{\mathrm{sl}}$ denotes semilatus rectum; $\lambda$ and $L$ denote mean and true longitude, respectively; and $(P_1,P_2)$ and $(Q_1,Q_2)$ are nonsingular equinoctial variables associated with eccentricity and inclination, respectively.}
Propagating uncertainty in the equinoctial orbit elements $\{a,P_1,P_2,Q_1,Q_2,\lambda\}$ can substantially reduce the computation domain for $p(x,t)$. 
In fact, for unperturbed Keplerian dynamics the system becomes decoupled, and nonlinearity appears only in $a$ and $\lambda$ \rv{(see Appendix~\ref{appendix:oe_dyn} or Ref.~\cite{khatri2023nonlinear} for more details)}.
However, a simple closed-form density transformation from these elements to Cartesian coordinates is generally not available, because it requires solving Kepler’s equation to convert mean anomaly to true anomaly.
As a result, the Jacobian and the induced PDF do not admit a straightforward analytic expression.

An alternative is the modified equinoctial set $\{p_{\mathrm{sl}},P_1,P_2,Q_1,Q_2,L\}$, which is related to the classical Keplerian orbit elements $\{a,e,i,\Omega,\omega,\theta \}$ through the true anomaly.
This choice admits compact expressions for common perturbations (e.g., $J_2$, aerodynamic drag) and provides a direct state mapping to Cartesian space, yielding analytical PDF transformation to Cartesian coordinates. 
More generally, one can construct generalized equinoctial element sets with smooth, nonsingular transformations to Cartesian position and velocity, and derive the associated Jacobian matrices explicitly~\citep{bau2021generalization}. 
These Jacobians can then be used to map probability densities between elements and Cartesian coordinates via the standard change-of-variables formula.

\subsubsection{Reference Coordinates}
Let $x_{\mathrm{ref}}(t)$ be a reference trajectory governed by $d(x_{\mathrm{ref}}) = f(x_{\mathrm{ref}})dt$, and define the deviation coordinates by
$
    \delta x(t) := x(t) - x_{\mathrm{ref}}(t).
$
The induced SDE for $\delta x(t)$ 
is
\begin{equation}\label{eq:SDE_ref}
    d(\delta x(t)) = \Big[ f(\delta x(t) + x_{\mathrm{ref}}(t), t) -  f(x_{\mathrm{ref}}(t),t) \Big] dt + g \Big( \delta x(t) + x_{\mathrm{ref}}(t), t \Big) dw.
\end{equation}
To keep the notation light, we drop the explicit time argument $(t)$ and write variables without time dependence unless the context is unclear.
Working in deviation coordinates shifts the learning task from $p(x,t)$ to $p(\delta x,t)$, which is advantageous when $x$ remains concentrated near $x_{\mathrm{ref}}$.
This typically (i) shrinks the computational domain, since the support of $p(\delta x, t)$ is localized near the origin, and (ii) simplifies the local dynamics, i.e.,
$
    f(\delta x + x_{\mathrm{ref}},t) - f(x_{\mathrm{ref}},t) \approx \frac{\partial f}{\partial x} |_{x_{\mathrm{ref}}}
$
for small $\delta x$.
In fact, the rotating spherical coordinates discussed above can be viewed as a specific choice of reference coordinates, combined with normalization scaling factors.

\subsubsection{Generalized Mapped Coordinates}\label{subsec:general_coord}
We introduce a generalized coordinate framework for the class of smooth, invertible transformations.
Let $z = \Phi(x,t) \in \mathbb{R}^{n}$ be a latent state \rv{with the same dimension as $x$}, where for each fixed $t$ the map $\Phi(\cdot,t)$ is smooth and invertible.
Let the original SDE be
$
    dx = f(x,t)\,dt + G(x,t)\,dW,
$
and define $A(x,t) := G(x,t)G(x,t)^{\mathsf T} \in \mathbb{R}^{n \times n}$.
By Itô’s lemma~\citep{oksendal2003stochastic}, the latent process $z(t) = \Phi(x(t),t)$ satisfies an SDE of the form
$
    dz = f_z(z,t)\,dt + G_z(z,t)\,dW,
$
with components
\[
\begin{aligned}
    & dz_i
    = f_{z,i}(z,t)\,dt
       + \sum_{j=1}^m [G_z(z,t)]_{ij}\,dW_j, \quad
    f_{z,i}(z,t)
    = \Bigg[
          \frac{\partial \Phi_i}{\partial t}
          + \sum_{k=1}^{n} \frac{\partial \Phi_i}{\partial x_k} f_k
          + \frac{1}{2} \sum_{k=1}^{n} \sum_{l=1}^{n}
              A_{kl} \frac{\partial^2 \Phi_i}{\partial x_k \partial x_l}
       \Bigg]_{x = \Phi^{-1}(z,t)}, \\
    & [G_z(z,t)]_{ij}
    = \Bigg[
          \sum_{k=1}^{n} \frac{\partial \Phi_i}{\partial x_k} G_{kj}
       \Bigg]_{x = \Phi^{-1}(z,t)}, \quad
    A_{kl}(x,t)=\sum_{j=1}^m G_{kj}(x,t)\,G_{lj}(x,t).
\end{aligned}
\]
The corresponding FP-PDE for the latent-space PDF $p(z,t)$ is
\[
    \frac{\partial p(z,t)}{\partial t}
    + \sum_{i=1}^{n}
        \frac{\partial}{\partial z_i}
        \big[f_{z,i}(z,t)\,p(z,t)\big]
    - \frac{1}{2} \sum_{i=1}^{n} \sum_{j=1}^{n}
        \frac{\partial^2}{\partial z_i \partial z_j}
        \big[A_{z,ij}(z,t)\,p(z,t)\big] = 0,
\]
where the diffusion tensor in $z$-coordinates is
$
    A_{z,ij}(z,t)
    = \Bigg[
        \sum_{k=1}^{n} \sum_{l=1}^{n}
        \frac{\partial \Phi_i}{\partial x_k}\,
        A_{kl}(x,t)\,
        \frac{\partial \Phi_j}{\partial x_l}
      \Bigg]_{x = \Phi^{-1}(z,t)}.$
\rv{As a special case, the affine mapping $z=x-x_{\mathrm{ref}}$ recovers the reference-coordinate SDE of Eq.~\eqref{eq:SDE_ref}.}

The map $\Phi$ can be specified analytically, semi-analytically with learnable parameters or learned entirely by neural networks.
Recent work~\citep{pmlr-v288-peper25a} provides principles for learning physically interpretable mappings, yielding latent variables aligned with the underlying mechanics.
When combined with a space–time density representation (next subsection), this leads to an encoder--latent-dynamics--decoder architecture for the physics-informed neural network $\hat p(x,t)$.

\subsection{Space-time Probability Density Representation}\label{subsec:pdf_representation}
We briefly review the standard multi-layer perceptron (MLP) parameterization of the space-time density $\hat p(x,t)$ commonly used in physics-informed learning.
We then introduce a physics-informed neural-network-based Gaussian mixture model (PINN-GMM) as an alternative representation, outline its desirable properties, and propose a concrete architecture motivated by generalized mapped coordinates.

\subsubsection{Standard Multilayer Perceptron (MLP)}
A fully connected multilayer perceptron (MLP) takes $(x,t)\in\mathbb{R}^{n+1}$ as input and produces a scalar output for the space-time density.
Let the layer widths be $d_0, d_1,\dots, d_L$, with $d_0=n+1$ matching the input dimension $(x,t)$ and $d_L=1$ corresponding to the scalar probability density output.
For each layer $\ell=1,\dots,L$, let $W_{\ell}\in\mathbb{R}^{d_\ell\times d_{\ell-1}}$ and $b_\ell\in\mathbb{R}^{d_\ell}$, and define the affine map
$
    h_\ell(z)=W_\ell z+b_\ell,
$
with element-wise activation $\sigma_\ell:\mathbb{R}\to\mathbb{R}$.
With neural network parameters $\theta=\{(W_\ell,b_\ell)\}_{\ell=1}^L$, the MLP representation of $\hat p$ is
$
    \hat p(x,t;\theta)
    = \big(\sigma_L\circ h_L\circ \sigma_{L-1}\circ h_{L-1}\circ \cdots \circ \sigma_1\circ h_1\big)(x,t).
$
To enforce $\hat p \geq 0$, the output activation $\sigma_L$ is chosen non-negative, e.g., $\sigma_L=\operatorname{Softplus}$.

\subsubsection{Physics-Informed Gaussian Mixture Model (PINN-GMM)}\label{subsec:pinn_gmm_model}
The standard MLP representation above does not enforce the normalization constraint $\int_X \hat p(x,t) dx = 1$.
To enforce this constraint by construction, we introduce a time-conditioned physics-informed Gaussian mixture model
\begin{equation}
\label{eq:phat_GMM}
    \hat{p}(x,t) = \sum_{i=1}^{K_{\mathrm{GMM}}} w_i(t) \mathcal{N}(x\rv{;} \mu_i(t), \Sigma_i(t)).
\end{equation}
\rv{The time-dependent mixture parameters are specified below as neural-network outputs of time, yielding a valid PDF of $x$ at any time $t$.}
\rv{The number of components $K_{\mathrm{GMM}}$ determines the expressiveness of this density representation. Smaller $K_{\mathrm{GMM}}$ yields a more compact, cheaper-to-train model; larger $K_{\mathrm{GMM}}$ captures more complex structure at higher cost, and also increases the training cost of the associated error-PINN $\hat{e}$. We do not claim a systematic rule for selecting $K_{\mathrm{GMM}}$; deriving one is a nontrivial open problem that depends on the complexity of the true FP-PDE solution, the state dimension $n$, and the propagation horizon. We therefore treat $K_{\mathrm{GMM}}$ as a practical accuracy--cost trade-off: start small, inspect the resulting error bound, and increase $K_{\mathrm{GMM}}$ only if the bound is not sufficiently informative for the application.}
\rv{
We apply this guideline in all orbital experiments, yielding $K_{\mathrm{GMM}}$ = 5 (equinoctial orbit-element case) and $K_{\mathrm{GMM}}$ = 11 (rotating-spherical cases); the detailed physical reasoning is explained with the experiment setting in Appendix~\ref{appendix:6d_equin}.
}
\rv{In addition, sampling for PINN-GMM} is straightforward, 
and any marginal density has a closed-form expression---features not available for standard MLP representations.

\paragraph{Time–conditioned Gaussian Mixture Architecture}
We design PINN-GMM \(\hat p(x,t):\mathbb{R}^n \times \mathbb{R} \rightarrow \mathbb{R}_{\geq 0}\) that respects mixture constraints and remains numerically stable under disparate state scales. 
The state is whitened by a fixed encoder 
$z=\Phi(x)=\Sigma_{\mathrm{en}}^{-1}(x-\mu_{\mathrm{en}}),\; 
\Sigma_{\mathrm{en}}=\mathrm{diag}(\sigma_{\mathrm{en}}),\ \sigma_{\mathrm{en}}\in\mathbb{R}^n_{>0}$,
where $\mathrm{diag}(s)$ denotes the diagonal matrix with entries of $s$ on its diagonal.
In this work, the encoder parameter is set using the initial mean and covariance of the state, i.e.,  $\mu_{\mathrm{en}}=\mu_{t_0}$ and $\Sigma_{\mathrm{en}} = \mathrm{diag}(\Sigma_{t_0})$.
The density is learned in latent space, \(\hat p_Z(z,t)\), and decoded by the change of variables
$
\hat p(x,t)=\hat p_Z\!\big(\Sigma^{-1}_{\mathrm{en}}(x-\mu_{\mathrm{en}}),t\big)\prod_{i=1}^n \sigma_{\mathrm{en},i}^{-1}.
$
\rv{The time-dependent mixture parameters are generated by differentiable neural networks, implemented either as separate networks or as output heads sharing a common feature network:}
(i) a mean \rv{output} \(\mu_z(t; \theta_{\mu_z})\in\mathbb{R}^{K_{\mathrm{GMM}}\times n}\);
(ii) a covariance \rv{output} \(\ell_z(t; \theta_{L_z})\in\mathbb{R}^{K_{\mathrm{GMM}}\times \frac{n(n+1)}{2}}\), assembled into a lower-triangular Cholesky factor \(L_z(t)\in\mathbb{R}^{K_{\mathrm{GMM}}\times n\times n}\) with positive diagonals enforced by \(\mathrm{softplus}+\varepsilon_{\min}\)\footnote{$\varepsilon_{\min}$ is set to $10^{-4}$ in our experiments};
and (iii) a \rv{weight-logit output} \(v(t;\theta_{v})\in\mathbb{R}^{K_{\mathrm{GMM}}}\).
\rv{These logits are not the mixture weights directly; they are converted to the weights in Eq.~\eqref{eq:phat_GMM} by}
\[
\rv{
w_i(t)
=(1-\alpha)\frac{\exp(v_i(t))}{\sum_{j=1}^{K_{\mathrm{GMM}}}\exp(v_j(t))}
+\frac{\alpha}{K_{\mathrm{GMM}}},
\quad i=1,\ldots,K_{\mathrm{GMM}} .
}
\]
\rv{Thus, \(w_i(t)\ge \alpha/K_{\mathrm{GMM}}\) and \(\sum_{i=1}^{K_{\mathrm{GMM}}}w_i(t)=1\).}
Here \(\alpha\in(0,1)\) is a small mixing coefficient\footnote{$\alpha$ is set to $0.01$ in our experiments} \rv{that mitigates weight collapse by keeping every mixture component active.}
For any given normalized time $t$, the corresponding \(x\)–space Gaussian mixture parameters can be queried by decoding, i.e., for all $k \in K_{\mathrm{GMM}}$,
$
\mu_{x,k}(t) = \mu_{\mathrm{en}} + \mu_{z,k}(t)\odot\sigma_{\mathrm{en}},\;
\Sigma_{x,k}(t)= L_{x,k}(t)\,L_{x,k}(t)^\top$ and $ 
L_{x,k}(t) = \mathrm{diag}(\sigma_{\mathrm{en}})\,L_{z,k}(t).
$

\paragraph{Key Difference between PINN-GMM and Traditional GMM Propagation}
\textbf{PINN-GMM} (this work) learns a time-conditioned GMM $\hat p(x,t)$ that directly approximate the full space-time density $p(x,t)$ of the \emph{exact} dynamics governed by the FP-PDE.
In contrast, \textbf{traditional GMM propagation} starts from an initial GMM and propagates each mixture through dynamics (typically \emph{approximate}) dynamics, updating the mean and covariances parameters.
In most cases~\citep{park2006nonlinear,fujimoto2012analytical,khatri2023nonlinear}, the weights are held fixed unless additional information is introduced (e.g., observations, splitting criteria).
\subsection{Error Bound Construction and Application}\label{sec:error_bound_application}
Above we introduce different constructions for representing the PDF approximation $\hat p$ with a PINN.
Here, we summarize a key result from our previous work~\citep{kongerror} that bounds the approximation error of PINN $\hat p$, yielding an admissible PDF set.
In the remainder of this section, we specialize this construction to the admissible PDF set induced by PINN-GMM. 
We then focus on Problem~\ref{prob:integral_pdf} to compute upper and lower bounds on event probabilities (i.e., integrals of the PDF) via this admissible set.

Let $\hat p$ be a learned PINN approximation solution to the true PDF \(p(x,t)\).
Define the point-wise approximation error
$
    e(x,t) := p(x,t) - \hat p (x,t)
$
Then $e(x,t)$ is governed by another PDE:
\begin{equation}\label{eq:error_pde}
    \mathcal{D}[e] + \mathcal{D}[\hat{p}] = 0
     \text{ s.t. } 
    e(x,0) = p_0(x) - \hat{p}(x,0).
\end{equation}
\rv{
The error PDE in Eq.~\eqref{eq:error_pde} is not necessarily easier to solve than the original FP-PDE in Eq.~\eqref{eq:fp_pde_compact}.
It has the same differential operator and is forced by the residual $\mathcal{D}[\hat p]$, so a large or oscillatory residual can make error-PINN training difficult.
To address this, we use a smooth PINN-GMM representation for $\hat p$ together with the coordinate/encoder choices introduced above, which aim to keep $\mathcal{D}[\hat p]$ smooth and small, and we train $\hat e$ on informative samples via the combined sampling of Sec.~\ref{sec:train_implementation}; Sec.~\ref{sec:exp} provides empirical evidence that the resulting error-learning step is tractable on the orbital benchmarks considered here.}

Our prior work~\citep{kongerror} showed that, under some sufficient conditions, one can construct a time-conditioned error bound that solves Problem~\ref{prob:1} by training another error-PINN $\hat e(x,t)$ that approximates the solution to the PDE in Eq.~\eqref{eq:error_pde}.
Specifically, the time-dependent error bound takes the first-order form proposed in Ref.~\cite{kongerror}:
\rv{\begin{equation}\label{eq:first_order_error_bound}
  B(t) = 2 \cdot \max_{x\in X} |\hat e(x,t)|,\quad \forall t \in T.
\end{equation}}
\rv{The factor $2$ in Eq.~\eqref{eq:first_order_error_bound} is not a heuristic safety factor; it follows from Ref.~\cite[Cor.~1]{kongerror} under a sufficient accuracy condition on the error-PINN $\hat e$.
A conservative checking condition is given in Ref.~\cite[Prop.~1]{kongerror} but is not evaluated here, as it depends on problem-dependent constants that are difficult to estimate tightly in high-dimensional orbital settings.
Sec.~\ref{sec:exp} empirically validates the constructed bounds against high-fidelity reference solutions.
We note that if Eq.~\eqref{eq:first_order_error_bound} yields a $B(t)$ satisfying Eq.~\eqref{eq:prob_1}---as theoretically grounded by Ref.~\cite[Cor.~1]{kongerror} and empirically supported in Sec.~\ref{sec:exp}---then all subsequent claims, including the admissible PDF set in Eq.~\eqref{eq:set_of_PDFs} and the LP-based event-probability bounds in Proposition~1, follow rigorously.} 

\rv{Formally, let $B(t)$ be constructed by Eq.~\eqref{eq:first_order_error_bound} such that Eq.~\eqref{eq:prob_1} holds, then $\hat p(x,t)$ and $B(t)$} defines a set of admissible PDFs
\begin{equation}\label{eq:set_of_PDFs}
\mathcal{P}(\hat p,B,t)
:=
\Bigl\{\, p(x,t)\ \Big|\ 
\forall x\in X \subset \mathbb{R}^n:\ |p(x,t)-\hat p(x,t)|\le B(t),\ 
p(x,t)\ge 0,\ 
\int_{\mathbb{R}^n} p(x,t)\,dx = 1
\Bigr\}
\end{equation}
such that for every $t \in T$, the true PDF $p(x,t)$ is guaranteed to lie in this set, i.e.,
$
    p(x,t) \in \mathcal{P}(\hat p, B, t).
$

\rv{\begin{remark}
The bound $B(t)$ in Eq.~\eqref{eq:first_order_error_bound} is uniform in space: the first-order construction retains only the spatial maximum of $\hat e(x,t)$, discarding spatial variation and making the admissible set conservative. This form is used because it yields a simple ambiguity set compatible with the LP-based event-probability bounds in Proposition~\ref{proposition:lp}. Tighter spatially varying bounds are possible but require higher-order recursive error learning~\cite{kongerror}, which is not practical for the orbital examples here.
\end{remark}}

\begin{remark}
    The admissible PDF set $\mathcal{P}$ in Eq.~\eqref{eq:set_of_PDFs} is a type of ambiguity set.
    It corresponds to the \emph{band model}~\cite{kassam2003robust}, the \emph{density ratio class}~\cite{berger1990robust,rinderknecht2011eliciting}, and the \emph{intervals of measures}~\cite{deroberts1981bayesian}.
    Moreover, it can be translated into an $\epsilon$-contamination model with bounded outlier distribution~\citep{fauss2016old}.
\end{remark}

Several applications of $\mathcal{P}$ are discussed in Refs.~\cite{berger1990robust,rinderknecht2011eliciting,sriwastava2025sensitivity} and references therein.
Here we focus on Problem~\ref{prob:integral_pdf}, which seeks upper and lower bounds on the probability of an event, such as entering an unsafe region or reaching a target set.
Following the optimization approach used in formal methods~\cite{lahijanian2015formal}, we formulate an LP to solve Problem~\ref{prob:integral_pdf}.

\begin{proposition}[Solving Problem~\ref{prob:integral_pdf} by LP]\label{proposition:lp}
    Given $t \in T$, a bounded set $X \subset \mathbb{R}^n$, and a event set $X_{\mathrm{event}} \subseteq X$.
    Let $C = \{c_1, \ldots, c_N$\}, where $c_i \subseteq X$, be a partition of $X$,
    i.e., $\cup_{i=1}^N c_i = X, c_i \cap c_j = \emptyset$ 
    for all $i\neq j \in \{1,\ldots,N\}$.
    Let $\hat p$ and $B$ be given such that Eq.~\eqref{eq:set_of_PDFs} holds.
    Define computable lower and upper relaxations $\underline p_i,\overline p_i$ satisfying
    \begin{equation}\label{eq:relaxed_cell_bounds}
    \underline p_i \le \min_{x\in c_i}\hat p(x,t)-B(t),\;
    \overline p_i \ge \max_{x\in c_i}\hat p(x,t)+B(t). 
    \end{equation}
    and define
    $
    p_i^- := \max\{0,\underline p_i\}$ and $ p_i^+ := \overline p_i.
    $
    Let $\mathrm{Vol}(c_i)$ be the volume of cell $c_i \in C$.
    Then the upper bound of event probability 
    $\mathbb{P}^+(t; X_{\text{event}}) \geq \mathbb{P}(x \in X_{\text{event}}, t)$ can be obtained by solving:
    \begin{subequations}
        \begin{align}
            & \mathbb{P}^+(t; X_{\text{event}}) = \max_{p_i} \sum_{c_i \in C \, \land \, c_i \cap X_{\text{event}} \neq \emptyset } p_i \mathrm{Vol}(c_i) 
            \\
            & \text{ subject to} \nonumber \\
            & \qquad \ p_i \in [p_i^-, p_i^+] \qquad \forall i\in \{1,\ldots, N\}
            \qquad \sum_{i=1}^N p_i  \mathrm{Vol}(c_i) \leq 1 \label{eq:lp_constraints}.
        \end{align}
    \end{subequations}
    
    Similarly, the lower bound of event probability $\mathbb{P}^-(t; X_{\text{event}}) \leq \mathbb{P}(x \in X_{\text{event}}, t)$ can be obtained by solving $
        \mathbb{P}^-(t; X_{\text{event}})  = \min_{p_i} \sum_{c_i \in C \, \land \, c_i \subseteq X_{\text{event}}} p_i\mathrm{Vol}(c_i)$ subject to Eq.~\eqref{eq:lp_constraints}.
\end{proposition}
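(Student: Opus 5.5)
The proof is a feasibility argument. I will exhibit a feasible point of each LP that is built from the true density $p(\cdot,t)$. Its objective value dominates (or is dominated by) the true event probability. Optimality of $\mathbb{P}^+$ (resp. $\mathbb{P}^-$) then gives the claim.

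\textbf{Step 1: the feasible point.} Fix $t\in T$. By the assumption that Eq.~\eqref{eq:set_of_PDFs} holds, $p(\cdot,t)\in\mathcal{P}(\hat p,B,t)$. For each cell define the cell average
\[
q_i := \frac{1}{\mathrm{Vol}(c_i)}\int_{c_i} p(x,t)\,dx ,
\]
discarding or treating trivially any null-volume cells. For every $x\in c_i$ we have
\[
p(x,t)\ge \hat p(x,t)-B(t)\ge \min_{c_i}\hat p - B(t)\ge \underline p_i,
\qquad p(x,t)\ge 0,
\]
so averaging gives $q_i\ge \max\{0,\underline p_i\}=p_i^-$. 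In the same way, $p(x,t)\le \max_{c_i}\hat p+B(t)\le \overline p_i$ gives $q_i\le p_i^+$. Since the $c_i$ are disjoint with union $X$,
\[
\sum_i q_i\,\mathrm{Vol}(c_i)=\int_X p(x,t)\,dx\le \int_{\mathbb{R}^n}p(x,t)\,dx=1 ,
\]
using $p\ge0$ and normalization. So $(q_i)$ satisfies Eq.~\eqref{eq:lp_constraints}. This also shows both LPs are feasible and bounded, since each lies in a box.

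\textbf{Step 2: the upper bound.} $X_{\text{event}}\subseteq X$ is covered by the cells that intersect it, and $p\ge0$. Hence
\[
\mathbb{P}(x\in X_{\text{event}},t)\le \sum_{c_i\cap X_{\text{event}}\ne\emptyset}\int_{c_i}p\,dx=\sum_{c_i\cap X_{\text{event}}\ne\emptyset} q_i\mathrm{Vol}(c_i)\le \mathbb{P}^+(t;X_{\text{event}}).
\]
The last inequality holds because $(q_i)$ is feasible for the maximization.

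\textbf{Step 3: the lower bound.} The cells with $c_i\subseteq X_{\text{event}}$ are disjoint subsets of $X_{\text{event}}$, and $p\ge0$. Hence
\[
\mathbb{P}(x\in X_{\text{event}},t)\ge \sum_{c_i\subseteq X_{\text{event}}} q_i\mathrm{Vol}(c_i)\ge \mathbb{P}^-(t;X_{\text{event}}),
\]
again by feasibility, now for the minimization.

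\textbf{Main obstacle.} The delicate step is the bookkeeping in Step 1. The pointwise bound from $\mathcal{P}$ holds only on $X$, so the normalization constraint must be stated as $\le 1$ rather than $=1$; this is exactly why mass outside $X$ does not break feasibility. The relaxations $\underline p_i,\overline p_i$ must bound $p$ pointwise on each cell, so that cell averages inherit the bounds. I also need measurability and finite positive volume of the cells, which I will state as implicit assumptions.
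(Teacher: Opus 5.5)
Your proposal is correct and matches the paper's proof essentially step for step. In both, the cell averages of the true density form a feasible point: the pointwise bounds on each cell are integrated, and $\sum_i p_i\,\mathrm{Vol}(c_i)\le 1$ holds. The covering and inner-cell arguments with $p\ge 0$ then give the upper and lower bounds by optimality of the respective LPs. Your remarks on measurability and zero-volume cells are harmless refinements that the paper leaves implicit.
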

\begin{proof}
    Fix $t$ and let $p(\cdot,t)\in\mathcal{P}(\hat p,B,t)$. Define the cell averages
    \begin{equation}\label{eq:avg_pi_def}
        p_i := \frac{1}{\mathrm{Vol}(c_i)}\int_{c_i} p(x,t)\,dx,\quad \forall i\in \{1,\ldots, N\},
    \end{equation}
    Since $p(\cdot,t)\in\mathcal{P}(\hat p,B,t)$, for all $x\in X$ we have
    $
    \hat p(x,t)-B(t) \le p(x,t) \le \hat p(x,t)+B(t).
    $
    By definition of $\underline p_i,\overline p_i$ in Eq.~\eqref{eq:relaxed_cell_bounds},
    $
    \underline p_i \le \min_{x\in c_i}\hat p(x,t)-B(t)
    \le \hat p(x,t)-B(t) \le p(x,t) \le \hat p(x,t)+B(t) \le \max_{x\in c_i}\hat p(x,t)+B(t) \le \overline p_i,
    \quad \forall x\in c_i.
    $
    Therefore, for all $x\in c_i$,
    $
    p_i^- := \max\{0,\underline p_i\} \le p(x,t) \le \overline p_i := p_i^+.
    $ 
    Integrating over $c_i$ and dividing by $\mathrm{Vol}(c_i)$ yields
    $
    p_i^- \le \frac{1}{\mathrm{Vol}(c_i)}\int_{c_i} p(x,t)\,dx = p_i \le p_i^+,
    $
    where the equality holds by definition in Eq.~\eqref{eq:avg_pi_def}.
    Also, by definition of $c_i$ in Proposition~\ref{proposition:lp},
    $
    \sum_{i=1}^N p_i\,\mathrm{Vol}(c_i)
    \le \int_{\mathbb{R}^n} p(x,t)\,dx
    = 1,
    $
    so $\{p_i\}_{i=1}^N$ is feasible for constraints~\eqref{eq:lp_constraints}.
    
    \emph{Upper bound.}
    Let $I:=\{i:\ c_i\cap X_{\mathrm{event}}\neq\emptyset\}$ and $U:=\bigcup_{i\in I} c_i$. Then
    $X_{\mathrm{event}}\subseteq U$, and since $p\ge 0$,
    $
    \mathbb{P}(x\in X_{\mathrm{event}},t)
    = \int_{X_{\mathrm{event}}} p(x,t)\,dx
    \le \int_U p(x,t)\,dx
    = \sum_{i\in I}\int_{c_i} p(x,t)\,dx
    = \sum_{i\in I} p_i\,\mathrm{Vol}(c_i).
    $
    Since the LP defining $\mathbb{P}^+(t;X_{\mathrm{event}})$ maximizes the same expression over all feasible
    decision variables $\{\tilde p_i\}_{i=1}^N$, we have
    $
    \mathbb{P}(x\in X_{\mathrm{event}},t)
    \le \sum_{i\in I} p_i\,\mathrm{Vol}(c_i)
    \le \max_{\tilde p_i} \sum_{i\in I} \tilde p_i\,\mathrm{Vol}(c_i)
    = \mathbb{P}^+(t;X_{\mathrm{event}}).
    $
    
    \emph{Lower bound.}
    Let $J:=\{i:\ c_i\subseteq X_{\mathrm{event}}\}$. Then $\bigcup_{i\in J} c_i \subseteq X_{\mathrm{event}}$, and since $p\ge 0$,
    $
    \mathbb{P}(x\in X_{\mathrm{event}},t)
    = \int_{X_{\mathrm{event}}} p(x,t)\,dx
    \ge \int_{\cup_{i\in J} c_i} p(x,t)\,dx
    = \sum_{i\in J}\int_{c_i} p(x,t)\,dx
    = \sum_{i\in J} p_i\,\mathrm{Vol}(c_i).
    $
    Minimizing the right-hand side over all feasible $\{\tilde p_i\}_{i=1}^N$ gives the LP value
    $\mathbb{P}^-(t;X_{\mathrm{event}})$, hence
    $
    \mathbb{P}(x\in X_{\mathrm{event}},t)
    \ge \min_{\tilde p_i} \sum_{i\in J} \tilde p_i\,\mathrm{Vol}(c_i)
    = \mathbb{P}^-(t;X_{\mathrm{event}}).
    $
\end{proof}

\begin{remark}
The event-probability bounds $\mathbb{P}^+$ and $\mathbb{P}^-$ tighten as (i) the PINN's error bound $B$ decreases or (ii) the partition is refined (larger $|C|$, smaller $\mathrm{Vol}(c_i)$), which tightens $\underline p_i$ and $\overline p_i$.
\end{remark}

To solve Problem~\ref{prob:integral_pdf} by Proposition~\ref{proposition:lp}, one must
compute $\underline p_i,\overline p_i$ for each cell $c_i\in C$, satisfying Eq.~\eqref{eq:relaxed_cell_bounds}, and then set $p_i^-:=\max\{0,\underline p_i\}$ and $p_i^+:=\overline p_i$.
When $\hat p$ is represented by a neural network, $\underline p_i$ and $\overline p_i$ can be obtained via interval arithmetic using tools such as auto-LiRPA~\citep{xu2020automatic}.
However, when $c_i$ is a compact convex polytope and $\hat p$ is a PINN-GMM, these bounds can be computed efficiently as we show below.

\begin{proposition}[Mixture inequality]\label{proposition:mix-bounds}
Consider non-negative weights $w_1, \ldots, w_K$ such that $\sum_{k=1}^K w_k=1$, and 
\rv{let $\mathcal{C}$ be a nonempty compact set, and let $f_k:\mathcal{C}\to\mathbb{R}_{\geq 0}$ be continuous functions on $\mathcal{C}$.}
Then
$
\sum_{k=1}^K \Big( w_k \min_{x\in\mathcal{C}} f_k(x) \Big)
\le
\min_{x\in\mathcal{C}} \Big( \sum_{k=1}^K w_k f_k(x) \Big)
\le
\max_{x\in\mathcal{C}}  \Big( \sum_{k=1}^K w_k f_k(x) \Big)
\le
\sum_{k=1}^K \Big( w_k \max_{x\in\mathcal{C}} f_k(x) \Big).
$
\end{proposition}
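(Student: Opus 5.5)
The plan is a direct pointwise comparison, followed by taking extrema over $\mathcal{C}$. First we settle well-posedness. Each $f_k$ is continuous on the nonempty compact set $\mathcal{C}$, so by the extreme value theorem the quantities $m_k:=\min_{x\in\mathcal{C}} f_k(x)$ and $M_k:=\max_{x\in\mathcal{C}} f_k(x)$ exist and are finite. The mixture $F(x):=\sum_{k=1}^K w_k f_k(x)$ is a finite nonnegative combination of continuous functions, hence continuous, so $\min_{x\in\mathcal{C}}F$ and $\max_{x\in\mathcal{C}}F$ are also attained.

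The middle inequality $\min_{x\in\mathcal{C}} F\le\max_{x\in\mathcal{C}}F$ is immediate because $\mathcal{C}$ is nonempty. For the outer inequalities, fix an arbitrary $x\in\mathcal{C}$. By definition $m_k\le f_k(x)\le M_k$ for every $k$. Since $w_k\ge 0$, multiplying by $w_k$ preserves these inequalities, and summing over $k$ gives
\[
\sum_{k=1}^K w_k m_k \le F(x) \le \sum_{k=1}^K w_k M_k .
\]
Both bounds are independent of $x$, so the left one is a lower bound for $F$ on $\mathcal{C}$ and therefore $\sum_k w_k m_k\le \min_{x\in\mathcal{C}}F$. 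Likewise the right one is an upper bound, so $\max_{x\in\mathcal{C}}F\le\sum_k w_k M_k$. Chaining the three inequalities gives the claim.

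No step is a genuine obstacle. The only place requiring care is using the nonnegativity of the $w_k$ when multiplying the termwise inequalities; the normalization $\sum_k w_k=1$ and the nonnegativity of the $f_k$ are not needed for the inequality itself. They only matter downstream, where the $f_k$ are Gaussian component densities restricted to a cell $c_i$. There the bounds $\sum_k w_k(t) m_k - B(t)$ and $\sum_k w_k(t) M_k + B(t)$ serve as valid choices of $\underline p_i$ and $\overline p_i$ in Eq.~\eqref{eq:relaxed_cell_bounds} for Proposition~\ref{proposition:lp}.
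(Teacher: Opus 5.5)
Your proof is correct. The paper gives no separate proof of this proposition: it states it as an elementary fact and uses it directly in the proof of Theorem~\ref{theorem:gmm-cell-bounds}. Your argument supplies the standard justification that the paper implicitly relies on: the pointwise bound $\sum_k w_k m_k \le \sum_k w_k f_k(x) \le \sum_k w_k M_k$, followed by taking the minimum and maximum over $\mathcal{C}$, with all extrema attained by compactness and continuity. You are also right that only $w_k \ge 0$ is needed, not $\sum_k w_k = 1$ or $f_k \ge 0$.
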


\begin{lemma}[PDF bounds of a Gaussian distribution over a compact convex polytope]\label{lemma:quad-polytope}
\rv{Let $\mathcal{C}\subset\mathbb{R}^n$ be a nonempty compact convex polytope, corresponding to a partition cell $c_i$ in Proposition~1,} and let
$\phi:\mathbb{R}^n\to\mathbb{R}_{\ge 0}$ be a Gaussian PDF with mean $\mu\in\mathbb{R}^n$ and
positive definite covariance $\Sigma\in\mathbb{R}^{n\times n}$.
Let $c_{\phi}=(2\pi)^{-n/2}\det(\Sigma)^{-1/2}$ and define
\begin{equation}\label{eq:quad_min_max}
q(x):=\tfrac12 (x-\mu)^\top \Sigma^{-1} (x-\mu),
\qquad
q^-:=\min_{x\in\mathcal{C}} q(x),\quad q^+:=\max_{x\in\mathcal{C}} q(x).
\end{equation}
Then
$
\min_{x\in\mathcal{C}} \phi(x) = c_{\phi}\exp(-q^+)$ and $
\max_{x\in\mathcal{C}} \phi(x) = c_{\phi}\exp(-q^-).
$
\end{lemma}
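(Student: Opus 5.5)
The plan is to write the Gaussian density as a strictly decreasing function of the quadratic form $q$ and then use monotonicity. Specifically, for every $x\in\mathbb{R}^n$ we have $\phi(x)=c_\phi\exp(-q(x))$ with $c_\phi>0$, since $\Sigma\succ 0$ gives $\det(\Sigma)>0$. The map $s\mapsto c_\phi e^{-s}$ is continuous and strictly decreasing on $\mathbb{R}$. Hence minimizing $\phi$ over $\mathcal{C}$ is equivalent to maximizing $q$ over $\mathcal{C}$, and maximizing $\phi$ is equivalent to minimizing $q$. The whole statement therefore reduces to a short argument about attained extrema and order reversal.

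The first step is to confirm that $q^-$ and $q^+$ in Eq.~\eqref{eq:quad_min_max} are well defined. Because $\Sigma^{-1}$ is positive definite, $q$ is a continuous (indeed convex) quadratic function. Because $\mathcal{C}$ is nonempty and compact, the Weierstrass theorem guarantees that both the minimum and the maximum are attained, say at points $x^-,x^+\in\mathcal{C}$.

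The second step is the order-reversal argument. For every $x\in\mathcal{C}$ we have $q^-\le q(x)\le q^+$. Applying the decreasing map gives
\[
c_\phi e^{-q^+}\le \phi(x)\le c_\phi e^{-q^-}.
\]
Equality holds at $x^+$ and $x^-$ respectively, so these bounds are the exact minimum and maximum of $\phi$ on $\mathcal{C}$. This proves both identities.

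There is no real obstacle in the proof itself. The only point needing care is noting that attainment of the extrema comes from compactness alone, so convexity and polytopal structure are not used. I will remark that these hypotheses matter only computationally. The quantity $q^-$ is a convex QP over the polytope, so it is efficiently solvable. The quantity $q^+$ maximizes a convex function over a polytope, so it is attained at a vertex of $\mathcal{C}$ and can be obtained by vertex enumeration; for axis-aligned box cells this means checking the $2^n$ corners. Combined with Proposition~\ref{proposition:mix-bounds}, this yields the cell bounds required in Eq.~\eqref{eq:relaxed_cell_bounds}.
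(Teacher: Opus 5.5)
Your proof is correct and takes essentially the same approach as the paper: write $\phi=c_\phi\exp(-q)$ and use that $s\mapsto e^{-s}$ is strictly decreasing, so minimizing $\phi$ over $\mathcal{C}$ amounts to maximizing $q$, and vice versa. Your explicit appeal to compactness (Weierstrass) for attainment of $q^{\pm}$ is a detail the paper leaves implicit, and your computational remark matches the paper's Remark~\ref{remark:eff_bound}.
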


\begin{proof}
Since $\phi(x)=c_{\phi}\exp(-q(x))$ and $\exp(-y)$ is strictly decreasing in $y$,
minimizing $\phi$ over $\mathcal{C}$ is equivalent to maximizing $q$ over $\mathcal{C}$, and vice versa.
\end{proof}

\begin{remark}\label{remark:eff_bound}
    Lower bound $q^-$ in Eq.~\eqref{eq:quad_min_max} is a standard convex quadratic program. Since $\Sigma^{-1}$ is positive definite, the objective
$q(x)=(x-\mu)^\top\Sigma^{-1}(x-\mu)$ is smooth and strongly convex, so projected gradient descent with an appropriate
step size converges globally to the unique minimizer over
$\mathcal{C}$ with a \emph{linear} rate~\cite{nesterov1998introductory}.
Upper bound $q^+$ in Eq.~\eqref{eq:quad_min_max} is attained at a vertex of $\mathcal{C}$, i.e., $q^+=\max_{v\in\mathrm{Vert}(\mathcal{C})} q(v)$ by Bauer's maximum principle~\citep{bauer1958minimalstellen,kruvzik2000bauer}.
\end{remark}

\begin{theorem}[Bounds for a Gaussian-mixture admissible set over a compact convex polytope]\label{theorem:gmm-cell-bounds}
Consider $\hat p(x,t)=\sum_{k=1}^K w_k\,\phi_k(x,t)$ with $w_k\ge 0$ and $\sum_{k=1}^K w_k=1$,
where each $\phi_k(\cdot,t)$ is a Gaussian PDF as in Lemma~\ref{lemma:quad-polytope}.
Let $B(t)\ge 0$ be given and let $c_i\subset \mathbb{R}^n$ be a nonempty compact convex polytope.
Define $\underline p_i := \sum_{k=1}^K \Bigl(w_k\,\min_{x\in c_i}\phi_k(x,t)\Bigr)-B(t)$ and $\overline p_i
:= \sum_{k=1}^K \Bigl(w_k\,\max_{x\in c_i}\phi_k(x,t)\Bigr)+B(t)$.
Then $\underline p_i,\overline p_i$ satisfy the relaxed constraints in Eq.~\eqref{eq:relaxed_cell_bounds}, i.e.,
$
\underline p_i \le \min_{x\in c_i}\bigl(\hat p(x,t)-B(t)\bigr)$ and $
\overline p_i \ge \max_{x\in c_i}\bigl(\hat p(x,t)+B(t)\bigr).
$
Consequently, setting $p_i^-:=\max\{0,\underline p_i\}$ and $p_i^+:=\overline p_i$ yields valid bounds for
Proposition~\ref{proposition:lp}.
\end{theorem}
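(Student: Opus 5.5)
The plan is to reduce the theorem to Proposition~\ref{proposition:mix-bounds} applied to the Gaussian components on the cell, with the shift by the constant $B(t)$ handled separately. The three hypotheses of that proposition must first be checked for $\mathcal{C}=c_i$ and $f_k=\phi_k(\cdot,t)$. First, $c_i$ is a nonempty compact convex polytope, hence a nonempty compact set. Second, each $\phi_k(\cdot,t)$ is a Gaussian PDF with positive definite covariance, so it is continuous and nonnegative on $\mathbb{R}^n$, and in particular on $c_i$. Third, the weights $w_k$ are nonnegative and sum to one by assumption.

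With these in place, Proposition~\ref{proposition:mix-bounds} gives the two inequalities
\[
\sum_{k=1}^K w_k \min_{x\in c_i}\phi_k(x,t)\le \min_{x\in c_i}\hat p(x,t)
\quad\text{and}\quad
\max_{x\in c_i}\hat p(x,t)\le \sum_{k=1}^K w_k\max_{x\in c_i}\phi_k(x,t).
\]
The individual minima and maxima exist because each $\phi_k(\cdot,t)$ is continuous on a compact set. Lemma~\ref{lemma:quad-polytope} also expresses them in closed form, as $c_{\phi_k}\exp(-q_k^{+})$ and $c_{\phi_k}\exp(-q_k^{-})$.

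Because $B(t)$ does not depend on $x$, we have $\min_{x\in c_i}(\hat p(x,t)-B(t))=\min_{x\in c_i}\hat p(x,t)-B(t)$, and likewise for the maximum. Subtracting $B(t)$ from the first inequality yields $\underline p_i\le \min_{x\in c_i}\hat p(x,t)-B(t)$. Adding $B(t)$ to the second yields $\overline p_i\ge \max_{x\in c_i}\hat p(x,t)+B(t)$. These are exactly the relaxed constraints in Eq.~\eqref{eq:relaxed_cell_bounds}.

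It remains to justify the final sentence of the theorem. Setting $p_i^-:=\max\{0,\underline p_i\}$ and $p_i^+:=\overline p_i$ is precisely the construction that the hypotheses of Proposition~\ref{proposition:lp} require, so the resulting $p_i^\pm$ are valid inputs to that LP. Clipping at zero keeps validity because the true $p$ is nonnegative, and this is already handled inside the proof of Proposition~\ref{proposition:lp}.

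No step poses a real obstacle. The one point to state explicitly is the exchange of the constant shift $B(t)$ with the min and max over $c_i$, together with the existence of those extrema. If Proposition~\ref{proposition:mix-bounds} were not available, its content would follow termwise: for every $x\in c_i$, $w_k\min_{c_i}\phi_k\le w_k\phi_k(x)\le w_k\max_{c_i}\phi_k$, and summing over $k$ before taking the min and max over $x$ gives the same two inequalities.
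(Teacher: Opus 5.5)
Your proposal is correct and follows the paper's proof essentially step for step: apply Proposition~\ref{proposition:mix-bounds} with $\mathcal{C}=c_i$ and $f_k=\phi_k(\cdot,t)$, shift both inequalities by the $x$-independent constant $B(t)$, and use Lemma~\ref{lemma:quad-polytope} for the computability of the componentwise extrema. Your explicit hypothesis checks and the termwise fallback argument add rigor, since the latter is effectively a proof of Proposition~\ref{proposition:mix-bounds}, which the paper states without proof; they do not change the route.
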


\begin{proof}
By Proposition~\ref{proposition:mix-bounds} with $\mathcal{C}=c_i$ and $f_k(\cdot)=\phi_k(\cdot,t)$, we have
$
\sum_{k=1}^K \Bigl(w_k\,\min_{x\in c_i}\phi_k(x,t)\Bigr)
\le
\min_{x\in c_i}\sum_{k=1}^K w_k\phi_k(x,t)
=
\min_{x\in c_i}\hat p(x,t),
$
and
$
\max_{x\in c_i}\hat p(x,t)
=
\max_{x\in c_i}\sum_{k=1}^K w_k\phi_k(x,t)
\le
\sum_{k=1}^K \Bigl(w_k\,\max_{x\in c_i}\phi_k(x,t)\Bigr).
$
Since $B(t)$ is constant in $x$, subtracting/adding $B(t)$ yields
$
\sum_{k=1}^K \Bigl(w_k\,\min_{x\in c_i}\phi_k(x,t)\Bigr) - B(t)
\le
\min_{x\in c_i}\hat p(x,t)-B(t),
$
and
$
\max_{x\in c_i}\hat p(x,t)+B(t)
\le
\sum_{k=1}^K \Bigl(w_k\,\max_{x\in c_i}\phi_k(x,t)\Bigr) + B(t).
$
Let
$
\underline p_i = \sum_{k=1}^K \Bigl(w_k\,\min_{x\in c_i}\phi_k(x,t)\Bigr) - B(t)\le
\min_{x\in c_i}\hat p(x,t)-B(t),
$
and
$
\overline p_i = \sum_{k=1}^K \Bigl(w_k\,\max_{x\in c_i}\phi_k(x,t)\Bigr) + B(t)\geq \max_{x\in c_i}\hat p(x,t)+B(t).
$
By Lemma~\ref{lemma:quad-polytope} and Remark~\ref{remark:eff_bound}, $\underline p_i$ and $\overline p_i$ are computable, and they satisfy Eq.~\eqref{eq:relaxed_cell_bounds}.
\end{proof}

Building on Theorem~\ref{theorem:gmm-cell-bounds}, we present a general Algorithm~\ref{alg:solvelp} for Proposition~\ref{proposition:lp} via PINN-GMM $\hat p$ and its error bound $B$.

\begin{algorithm}[htbp!]
\caption{Compute Upper/Lower Bounds for $\mathbb{P}(x \in X_{\mathrm{event}}, t)$}
\label{alg:solvelp}
\begin{algorithmic}[1]
\Require $X, X_{\mathrm{event}}$; time $t$; PINN-GMM $\hat p(x,t)$ and error bound $B(t)$
\State Partition $X$ into polytope cells $C = \{c_i\}_{i=1}^N$
\State Compute $p_i^-,p_i^+$ for each cell $c_i \in C$ via Theorem~\ref{theorem:gmm-cell-bounds}
\State Solve the LP in Proposition~\ref{proposition:lp}
\State \Return $\mathbb{P}^+(t),\,\mathbb{P}^-(t)$
\end{algorithmic}
\end{algorithm}
\section{Training Implementation}\label{sec:train_implementation}
We describe our physics-informed neural network (PINN) training pipeline in three parts:
(i) loss functions derived from the Fokker-Planck operator and the initial condition;
(ii) sampling strategies for the initial and differential losses; and
(iii) the optimizer.
We first give pseudocode for training the space-time probability density function $\hat p(x,t)$ (Algorithm~\ref{alg:phat})
and the error approximation $\hat e(x,t)$ used to construct the time-dependent first-order bound $B(t)$ (Algorithm~\ref{alg:ehat}),
then detail each component.
\rv{All hyperparameter values are summarized in Table~\ref{tab:hyperparams}.}

\begin{algorithm}[htbp!]
\caption{Training $\hat p$}
\label{alg:phat}
\begin{algorithmic}[1]
\Require $\hat p_\theta$; FP-PDE (Eq.~\eqref{eq:fp_pde_compact});
sampling strategy $\textsc{Sampling}$; optimizer $\textsc{Opt}$
\State Initialize losses $\mathcal{L}_0,\mathcal{L}_r$; set $\text{best\_loss}\gets\infty$
\For{$k=1$ \textbf{to} $k_{\max}$}
    \State $(\mathcal{B}^{(0)}_k,\mathcal{B}^{(r)}_k)\gets \textsc{Sampling}$ \Comment{IC and residual batches}
    \State $\mathcal{L}\gets \mathcal{L}_0 + w_r\,\mathcal{L}_r,\quad \mathcal{L}_0 = \mathcal{L}_0^{\hat p}(\hat p;\mathcal{B}^{(0)}_k, p_0),\; \mathcal{L}_r = \mathcal{L}_r^{(\hat p)}(\hat p;\mathcal{B}^{(r)}_k, \mathcal{D})$ \Comment{Physics-informed loss}
    \State \textsc{Backprop}$(\mathcal{L})$;\quad \textsc{Opt.step}()
    \State \textbf{if} $\mathcal{L}<\text{best\_loss}$ \textbf{then} $\text{best\_loss}\leftarrow \mathcal{L}$;\ $\theta^\star\leftarrow\theta$
\EndFor
\State \Return $\hat p_{\theta^\star}$
\end{algorithmic}
\end{algorithm}

\begin{algorithm}[htbp!]
\caption{Training $\hat e$ and constructing $B(t)$}
\label{alg:ehat}
\begin{algorithmic}[1]
\Require $\hat e_\theta$; error PDE for $e$ (Eq.~\eqref{eq:error_pde} including $\hat p$); $\textsc{Sampling}$; $\textsc{Opt}$
\State Initialize $\mathcal{L}_0,\mathcal{L}_r$; set $\text{best\_loss}\gets\infty$
\For{$k=1$ \textbf{to} $k_{\max}$}
    \State $(\mathcal{B}^{(0)}_k,\mathcal{B}^{(r)}_k)\gets \textsc{Sampling}$ \Comment{IC and residual batches}
    \State $\mathcal{L}\gets \mathcal{L}_0 + w_r\,\mathcal{L}_r, \quad \mathcal{L}_0 = \mathcal{L}_0^{(e)}(\hat e;\mathcal{B}^{(0)}_k, p_0, \hat p),\; \mathcal{L}_r = \mathcal{L}_r^{(e)}(\hat e;\mathcal{B}^{(r)}_k, \mathcal{D}, \hat p)$ \Comment{Physics-informed loss}
    \State \textsc{Backprop}$(\mathcal{L})$;\quad \textsc{Opt.step}()
    \State \textbf{if} $\mathcal{L}<\text{best\_loss}$ \textbf{then} $\text{best\_loss}\leftarrow \mathcal{L}$;\ $\theta^\star\leftarrow\theta$
\EndFor
\State \Return $\hat e_{\theta^\star}$ and $B(t)=2\,\max_{x\in X}|\hat e_{\theta^\star}(x,t)|$ by Eq.~\eqref{eq:first_order_error_bound}
\end{algorithmic}
\end{algorithm}

\subsection{Loss}\label{subsec:loss}
At a generic training iteration, the sampling strategy returns an initial-condition batch
$\mathcal{B}^{(0)}=\{x_i^{(0)}\}_{i=1}^{N^{(0)}}$ (evaluated at $t=0$) and a residual batch
$\mathcal{B}^{(r)}=\{(x_i^{(r)},t_i^{(r)})\}_{i=1}^{N^{(r)}}$ in $X\times T$.
\rv{As noted in Sec.~\ref{sec:problem}, $X$ is a computational region of interest, so no boundary-condition loss is imposed on $\partial X$; the losses enforce only the initial condition and PDE residuals.}
We train PINN-GMM $\hat p$ with a Hellinger-type initial-condition (IC) loss.
The HT loss is a surrogate inspired by the Hellinger distance; its appeal stems from the inequality that, for any two densities $p$ and $q$, the exact Hellinger distance upper bounds the total variation (TV)\rv{, i.e.,
$H^2(p,q)=\frac{1}{2}\int_X(\sqrt{p(x)}-\sqrt{q(x)})^2dx$,
$\mathrm{TV}(p,q)=\frac{1}{2}\int_X |p(x)-q(x)|dx$, and
$\mathrm{TV}(p,q)\leq \sqrt{2}H(p,q)$.}
This perspective is not new: several studies (e.g., Ref.~\cite{izenman1991review} and references therein) have argued that $\ell_1$-type metrics (including the Hellinger distance) can be more suitable than $\ell_2$ metrics for density estimation.
Therefore, given per-iteration batches $(\mathcal{B}^{(0)},\mathcal{B}^{(r)})$, the loss is $\mathcal{L}^{(p)} = \mathcal{L}_{0,\mathrm{HT}}^{(p)}(\mathcal{B}^{(0)}) + w_r\,\mathcal{L}_r^{(p)}(\mathcal{B}^{(r)})$, where $\mathcal{L}_{0,\mathrm{HT}}^{(p)}(\mathcal{B}^{(0)}) = \frac{1}{2N^{(0)}}\sum_{x\in\mathcal{B}^{(0)}}\Big(\sqrt{p(x,0)}-\sqrt{\hat p(x,0)}\Big)^2$ and $\mathcal{L}_r^{(p)}(\mathcal{B}^{(r)}) = \frac{1}{N^{(r)}}\sum_{(x,t)\in\mathcal{B}^{(r)}}\Big(\mathcal{D}[\hat p](x,t)\Big)^2$.
For the error-PINN $\hat e$, where the \emph{true} error $e:=p-\hat p$ satisfies Eq.~\eqref{eq:error_pde}, we adopt a standard mean-square-error PINN loss, i.e., $\mathcal{L}^{(e)} = \mathcal{L}_0^{(e)} + w_r\,\mathcal{L}_r^{(e)}$, where $\mathcal{L}_0^{(e)} = \frac{1}{N^{(0)}}\sum_{i=1}^{N^{(0)}}\big(p(x_i^{(0)},t_0)-\hat p(x_i^{(0)},t_0)-\hat e(x_i^{(0)},t_0)\big)^2$ and $\mathcal{L}_r^{(e)} = \frac{1}{N^{(r)}}\sum_{i=1}^{N^{(r)}}\big(\mathcal{D}[\hat e](x_i^{(r)},t_i^{(r)})+\mathcal{D}[\hat p](x_i^{(r)},t_i^{(r)})\big)^2$.

\subsection{Sampling}

In high dimensions, uniformly drawn points often lie in regions where the density and/or the residual magnitude $|\mathcal{D}[\cdot]|$ are small, contributing little to the physics-informed loss; see the empty-space phenomenon in multivariate settings~\citep{scott1983probability}.
\rv{In Algorithms~\ref{alg:phat} and~\ref{alg:ehat}, \textsc{Sampling} denotes the batch-generation process that returns an initial-condition batch $\mathcal{B}^{(0)}$ and a residual batch $\mathcal{B}^{(r)}$ at each training iteration.}
To \rv{implement} this \rv{process}, we first define two basic strategies: (i) a \emph{uniform baseline}, and (ii) an \emph{adaptive pool} strategy that maintains a persistent set of informative samples.
Our proposed approach is the (iii) \emph{combined} strategy, which builds on these two by augmenting the adaptive pool with fresh random mini-batches.

\textbf{Uniform baseline:}
At iteration $k$, the strategy produces
$
\mathcal{B}^{(0)}_k=\{x_i^{(0)}\}_{i=1}^{N^{(0)}}$ and $
\mathcal{B}^{(r)}_k=\{(x_i^{(r)},t_i^{(r)})\}_{i=1}^{N^{(r)}},
$
with independent draws
$
x_i^{(0)}\stackrel{\text{i.i.d.}}{\sim}U(X),
x_i^{(r)}\stackrel{\text{i.i.d.}}{\sim}U(X),$
and $
t_i^{(r)}\stackrel{\text{i.i.d.}}{\sim}U(T).
$
This uniform sampling coincides with the standard sampling strategy used in most PINN implementations~\cite{sirignano2018dgm,lu2021deepxde}.
While it offers unbiased coverage, it may still place many samples in regions that provide weak gradient signals.

\textbf{Adaptive pool:}
Adaptive sampling grows a persistent pool of training points over time~\citep{lu2021deepxde}. 
Instead of drawing a fresh random mini-batch at every iteration, we
(i) initialize a pool set $\mathcal{S}$ with samples from a proposal distribution $\pi_{\mathrm{cand}}$;
(ii) use the \emph{entire} pool as the training batch for the next $k_{\mathrm{adap}}$ iterations; and
(iii) every $k_{\mathrm{adap}}$ iterations, draw $N_{\mathrm{cand}}$ candidates from $\pi_{\mathrm{cand}}$, score them (e.g., by initial-condition discrepancy or differential residual), and \rv{enlarge $\mathcal{S}$ by adding the $N_{\mathrm{add}}$ highest-scoring candidates, with $N_{\mathrm{add}}\le N_{\mathrm{cand}}$.}
This repeated enrichment helps avoid early stagnation and improves stability.
The proposal $\pi_{\mathrm{cand}}$ can be uniform, biased toward informative regions, or a mixture.
While the adaptive pool targets informative samples, the combined strategy below restores per-iteration domain coverage by adding fresh random samples.

\textbf{Combination:}
At iteration $k$, we form each batch by mixing the current pool with new random samples:
$
\mathcal{B}^{(0)}_k
= \mathcal{S}^{(0)}_k \cup  \mathcal{B}^{(0)}_{\mathrm{rand},k}$ and $
\mathcal{B}^{(r)}_k
= \mathcal{S}^{(r)}_k \cup  \mathcal{B}^{(r)}_{\mathrm{rand},k},
$
where $\mathcal{S}^{(0)}_k$ and $\mathcal{S}^{(r)}_k$ are the current pools of the IC and residual samples, respectively, and
$\mathcal{B}^{(\cdot)}_{\mathrm{rand},k}$ \rv{is a fresh random mini-batch of size $N_{\mathrm{rand}}$ sampled i.i.d.\ from $\pi_{\mathrm{cand}}$.}
We use a first-in-first out (FIFO) rule on the adaptive pool, i.e.,
$
|\mathcal{S}^{(0)}_k|\le N_{\mathrm{rand}}$ and $ |\mathcal{S}^{(r)}_k|\le N_{\mathrm{rand}},
$
so that the pool does not dominate the fresh random draws as the training progresses.

\subsection{Optimization}
We use the Adam optimizer with Pytorch default setting $(\beta_1,\beta_2,\varepsilon)=(0.9,0.999,10^{-8})$ to update the neural network parameters.
For a standard MLP representation, all parameters share the initial learning rate
$
\eta_0^{\mathrm{MLP}} = 10^{-3}.
$
For the PINN-GMM, we optimize three parameter groups corresponding to the network outputs:
(i) the mean network \(\mu_z(t)\),
(ii) the covariance network \(L_z(t)\), and
(iii) the weight network.
\rv{As defined in the time-conditioned Gaussian-mixture architecture in Sec.~\ref{subsec:pinn_gmm_model}, the encoder parameters $(\mu_{\mathrm{en}},\Sigma_{\mathrm{en}})$ are fixed from the initial state distribution; these quantities are therefore not learned and are excluded from the optimizer.} 
A time-dependent encoder, as in Sec.~\ref{subsec:general_coord}, is a possible extension but is not used here.
Because the weight map \(v\!\mapsto\!w\) and the covariance map \(\ell\!\mapsto\!L_z\!\mapsto\!\Sigma_z\) are constrained/reparameterized, their effective gradients tend to be sharper and less well scaled than those of the means. We therefore assign a smaller learning rates for weights and covariances, i.e.,
$
\eta_0^{(\mu_z)} = 10^{-3},
\eta_0^{(L_z)}   = 10^{-4}$ and $
\eta_0^{(w_z)}   = 10^{-4}.
$
This choice stabilizes training of weights and covariances, while allowing the mean to adapt more quickly.
We apply a global step-decay schedule to all parameters, i.e.,
$
\eta_k = \eta_0 \,\gamma^{\left\lfloor k/k_{\mathrm{decay}}\right\rfloor},
$
where $\gamma\in(0,1)$ and $k$ is the iteration number and $k_{\mathrm{decay}}$ the decay period.
We use gradient-norm clipping with threshold $c_{\mathrm{grad}}=1$ to improve training stability.
Training runs for at most $k_{\max}$ iterations, and we save the iterate that has the minimum total loss over the run.
\begin{table}[htb!]
\centering
\caption{\rv{Hyperparameter values for training the density PINN $\hat p$ and error PINN $\hat e$.}}
\label{tab:hyperparams}
\small
\resizebox{\linewidth}{!}{
\begin{tabular}{lll}
\toprule
Loss & Sampling & Optimization \\
\midrule
$w_r=1.0$ &
$k_{\mathrm{adap}}=100$, $N_{\mathrm{cand}}=30000$, $N_{\mathrm{add}}=32$, $N_{\mathrm{rand}}=4000$ &
$\eta_0\in\{10^{-3},10^{-4}\}$, $\gamma=0.95$, $k_{\mathrm{decay}}=1000$, $k_{\max}$ case-dependent \\
\bottomrule
\end{tabular}}
\end{table}




\section{Experiments}\label{sec:exp}
We evaluate our proposed PINN-GMM for PDF approximation, worst-case error bounding, and event-probability bounding on stochastic dynamical systems of increasing complexity.
\rv{These experiments also assess the learnability of the error PINN, since the reported bounds depend on learning the forced error PDE in Eq.~\eqref{eq:error_pde}.}
We compare \rv{PINN-GMM} against three baseline uncertainty-propagation methods: Gaussian approximation (GA), unscented transform (UT), and Gaussian mixture models (GMM); see Appendix~\ref{appendix:baselines} for implementation details.
We also compare against standard \rv{and ablated PINN variants to isolate the effects of the PINN-GMM architecture, encoder, and sampling strategy.}

We begin with two 1D examples: a linear SDE used to validate the baseline implementations\rv{, and a nonlinear SDE exhibiting non-Gaussian density evolution.}
\rv{The nonlinear example includes a sensitivity study over the number of PINN-GMM mixture components, \(K_{\mathrm{GMM}}\), to illustrate the accuracy--cost trade-off of the mixture representation.}
\rv{The orbital experiments are then organized into two groups.}
\rv{First, we study a 6D unperturbed Keplerian orbit.
The equinoctial-coordinate case provides an analytical reference for evaluation and demonstrates the computation of event-probability bounds via Algorithm~\ref{alg:solvelp}. Separately, the 6D examples also compare coordinate choices for training PINN-GMM by contrasting rotating spherical and Cartesian coordinates.}
\rv{Second, we study 4D perturbed Keplerian orbits with diffusion, including a \(J_2\)+diffusion case, one event-probability-bound demonstration, and an additional \(J_2\)+diffusion+low-thrust case.}
\rv{The longest horizon is approximately \(7.2\) hours, demonstrating hours-scale orbital uncertainty propagation with quantified error bounds.}

For all orbital cases, we use the same architectures for PINN-GMM $\hat p$ and the error network $\hat e$ (Tables~\ref{tab:nn_pinngmm_summary} and~\ref{tab:nn_enetxl_summary} in Appendix~\ref{appendix:nn}) and the same training hyperparameters (Table~\ref{tab:hyperparams}).
\rv{
The computational domain $X$ is selected by propagating a small batch of initial samples over the prescribed horizon and choosing $X$ to cover the reached region, following Ref.~\cite{sun2016uncertainty}; this small-batch propagation is used only for domain selection, not as a replacement for the high-fidelity MC reference. 
An overly large $X$ worsens information dilution during training, while an overly small $X$ causes probability mass to approach $\partial X$, making the approximation unreliable; if this occurs, the domain should be enlarged or redefined.}

Some recent orbit-specific methods are not included because their assumptions or available implementations do not cover all stochastic settings considered here. \rv{The state-transition-tensor methods~\citep{park2006nonlinear,fujimoto2012analytical,khatri2023nonlinear} and the GMM-PCE method~\citep{vittaldev2016spacecraft} are most relevant to the noise-free Keplerian case; however, a fair comparison would require independent implementation choices, including expansion order, mixture splitting, and polynomial-chaos construction. Moreover, for the 6D unperturbed Keplerian case in equinoctial elements, the exact push-forward density is available, so we evaluate all methods directly against this analytical reference rather than relying on another numerical propagation method as a surrogate benchmark.} We also do not include the tensor-decomposition method~\cite{sun2016uncertainty} because we could not access an open-source implementation. Similarly, the public implementation of the Galerkin projection method~\cite{acciarini2024uncertainty}, used as-is, is limited to lower-dimensional systems. \rv{Direct comparisons with verified implementations of these methods are left for future work.}
All code is available on GitHub~\citep{PINN_Error:github}.

\subsection{Reference ``True'' Probability Density}\label{subsec:ref_true_pdf}
To evaluate the uncertainty propagation approximation, we require an accurate ``true'' solution of the FP-PDE in Eq.~\eqref{eq:fp_pde_compact} as a reference.
A standard choice is a high-fidelity Monte Carlo (MC) simulation:
draw \(N\) independent initial states from \(p_{0}(x)\) and propagate each sample via the underlying SDE in Eq.~\eqref{eq:sde_general}.
\rv{We propagate each sample with the Euler--Maruyama scheme~\citep[Ch.~8]{sarkka2019applied}, choosing the integration time step to balance accuracy and cost.}

Given the propagated samples $\{x_i(t) \}_{i=1}^N$, the remaining task is to estimate a reference PDF.
For low-dimensional systems (up to 3D), simple histograms on a discretized grid are adequate~\citep[Ch.~2]{silverman2018density}.
For orbital dynamics, however, high dimension and large domains make direct discretization intractable and the resulting histogram-PDF inaccurate.
Accordingly, we construct the reference ``true'' PDF by fitting a Gaussian mixture model (GMM) to the MC samples at each time $t$.
Instead of using a fixed number of GMM components as in Ref.~\cite{acciarini2024uncertainty}, we sweep over different numbers of GMM components to obtain the best fitting; see Appendix~\ref{appendix:fitgmm} for details.

\rv{
We emphasize that this GMM-fitted reference density is used only as a post-processing estimator of the MC-sampled density for evaluation. When an analytical reference density is available, we use that reference directly instead of fitting a GMM to MC samples. 
In addition, evaluation measures that can be evaluated directly from samples, such as the relative divergence described in the next subsection, do not require a GMM-fitted reference density.}
We also tested kernel density \rv{(KDE)} for constructing the MC-based reference density. 
\rv{However, the resulting KDE estimates were sensitive to bandwidth selection and exhibited finite-sample numerical irregularities, a known challenge in density estimation~\citep[Ch.~2.9]{silverman2018density}.} 
\rv{We therefore use a GMM fit as a smoother, more tractable MC-based density estimator, while recognizing that a finite-component GMM introduces parametric fitting error and may not fully capture strongly non-Gaussian features.}

\subsection{Evaluation Measures}

We assess an approximate density $q$ (possibly unnormalized) against a reference density $p$ using three empirical measures:
(i) \rv{total variation ($\widetilde{\mathrm{TV}}$), which measures global distributional discrepancy;}
(ii) \rv{worst-case normalized error ($\widetilde{\mathrm{WNE}}$), which measures the largest normalized pointwise deviation; and}
(iii) \rv{relative divergence ($\widetilde{\mathrm{RD}}$), which measures likelihood-based agreement with reference samples.}

\subsubsection{Total Variation}
\rv{We use the total variation (TV) defined in Sec.~\ref{subsec:loss} to measure the maximal difference between the probability of an event under the distributions with densities $p$ and $q$.}
Because the integral is seldom available in closed form, we estimate TV from samples:
\begin{equation}\label{eq:tv_emp}
\tilde{\mathrm{TV}}(p,q;X)
\;=\; \tfrac{1}{2}\,\frac{\mathrm{vol}(X)}{N}\sum_{i=1}^N |p(x_i)-q(x_i)|,
\qquad x_i \stackrel{\text{i.i.d.}}{\sim} U(X),
\end{equation}
where $\mathrm{vol}(X)$ is the volume of $X$. 
As $N\!\to\!\infty$, $\tilde{\mathrm{TV}}(p,q;X)$ converges to $TV(p,q;X)$.
We note that the uniform sampling $U(X)$ and the estimator in Eq.~\eqref{eq:tv_emp} are well-defined since we assume $X$ to be bounded in Problem~\ref{prob:1}.

\subsubsection{Worst-Case Normalized Error}
$
\|p-q\|_{\infty,X} \;=\; \sup_{x\in X}|p(x)-q(x)|.
$
measures the largest pointwise deviation over $X$.
We estimate this empirically and normalize by the sampled peak of $p$ to obtain a scale-free measure:
\begin{equation}\label{eq:worst_norm_error_emp}
\widetilde{\mathrm{WNE}}(p,q;X)
\;=\; \frac{\max_{1\le i\le N}|q(x_i)-p(x_i)|}{\max_{1\le i\le N} p(x_i)},
\qquad x_i \stackrel{\text{i.i.d.}}{\sim} U(X).
\end{equation}
This reports the worst-case normalized error over $X$.

\subsubsection{Relative Divergence}
The generalized Kullback-Leibler divergence (gKLD)~\citep{ferdosi2011comparison}, also known as Csiszár’s $I$-divergence~\citep{csiszar1991least}, between nonnegative $p$ and $q$ is
$
D(p\|q) = \int_{\mathbb{R}^n} \Big( p(x)\log\tfrac{p(x)}{q(x)} - p(x) + q(x) \Big)dx .
$
Assume $p$ is a probability density and $q$ is possibly unnormalized density with bounded total mass $Z_q:=\int_{\mathbb{R}^n} q(x)\,dx\in(0,\infty)$. 
Then
$
D(p\|q)
= \mathbb{E}_{x\sim p}\big[\log p(x)-\log q(x)\big] - 1 + Z_q
= C_p - \mathbb{E}_{x\sim p}[\log q(x)] - 1 + Z_q,
$
where $C_p:=\mathbb{E}_{x\sim p}[\log p(x)]$ depends only on $p$. Approximating the \rv{expectation} by a sample average for i.i.d.\ $x_i\sim p$ gives
$
D(p\|q) \approx (C_p-1) + \frac{1}{N}\sum_{i=1}^N \big(-\log q(x_i)\big) + Z_q,
$
which motivates the \emph{empirical relative divergence}, defined without evaluating the reference PDF $p$:
\begin{equation}
\widetilde{\mathrm{RD}}(p\|q) \;:=\; \frac{1}{N}\sum_{i=1}^N \big(-\log q(x_i)\big) \;+\; Z_q, \qquad \{x_i\}_{i=1}^N\sim p.
\end{equation}
If $q$ is normalized ($Z_q=1$), then $\widetilde{\mathrm{RD}}(p\|q)$ equals the negative log-likelihood up to an additive constant. 
Since the omitted constant $(C_p-1)$ is independent of $q$, minimizing $\widetilde{\mathrm{RD}}(p\|q)$ is an empirical surrogate for minimizing $D(p\|q)$, whose unique minimum is attained at $q=p$.
\rv{Because this constant is omitted, $\widetilde{\mathrm{RD}}(p\|q)$ is not the full nonnegative generalized KL divergence and may take negative values; it should therefore be interpreted comparatively within the same experiment, with smaller values indicating better likelihood-based agreement with the reference samples.}


Taken together, for sufficiently large $N$ and a reference density $p$,
$\widetilde{\mathrm{TV}}(p,\cdot; X)$ measures global discrepancy over $X$,
$\widetilde{\mathrm{WNE}}(p,\cdot; X)$ captures the worst normalized error over $X$, and
$\widetilde{\mathrm{RD}}(p\| \cdot)$ provides a likelihood-based measure over $N$ samples drawn directly from $p$.
\rv{
To complement the MC-based density comparisons, Appendix~\ref{appendix:6d_equin} reports the final PINN-GMM physics-informed training loss, including the FP-PDE residual term, for each orbital case.
}


\subsection{1D Illustrative Example}
\paragraph{Linear Example}
We first consider a 1D linear SDE with an analytical solution~\cite[Sec. 4.1]{pavliotis2014stochastic} to validate the baseline uncertainty-propagation methods: Gaussian
approximation (GA) and unscented transform (UT).
The third baseline: Gaussian Mixture Model (GMM) is built on GA; we therefore do not validate it separately here.
Figure~\ref{fig:1d_ou} shows the uncertainty propagation and the evaluation measures over time. 
Because the system is linear, both GA and UT (pink circle and cyan square lines, respectively) achieve zero total variation and zero worst-case normalized errors, and they also yield smaller relative divergence than a standard PINN-MLP $\hat p$ (purple triangle line).
Note that, for the baseline methods considered here, the relative divergence corresponds to the expected negative log-likelihood; consequently, it varies over time as the true covariance changes.
In Fig.~\ref{fig:1d_ou_worst_error}, we show the PINN's worst-case error bound $B(t)$ (purple shaded region) against the worst-case error and confirm that $B(t)$ upper-bounds it.
\begin{figure}[hbt!]
\centering
\begin{minipage}{0.32\textwidth}\centering
  \begin{subfigure}[t]{\linewidth}
    \includegraphics[width=\linewidth]{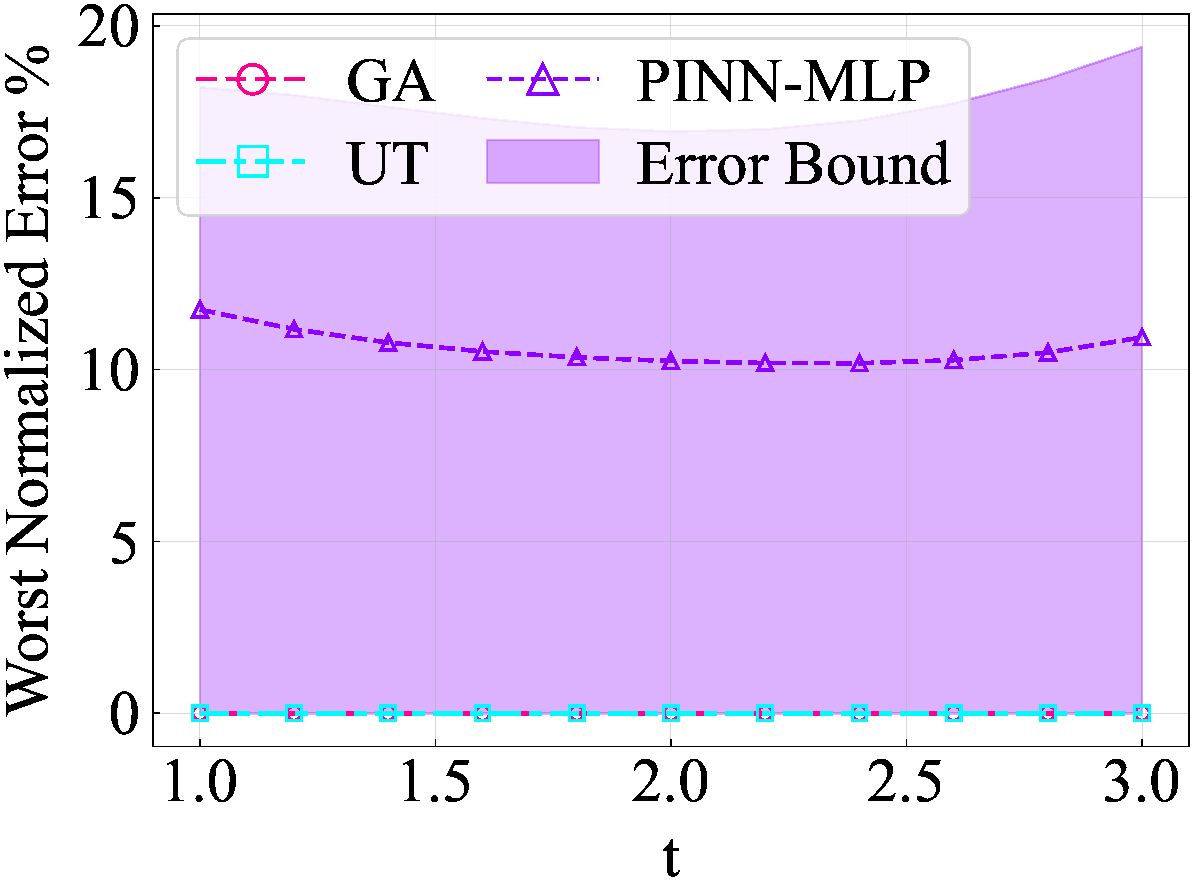}
    \caption{Worst normalized error \%}\label{fig:1d_ou_worst_error}
  \end{subfigure}
\end{minipage}
\begin{minipage}{0.32\textwidth}\centering
  \begin{subfigure}[t]{\linewidth}
    \includegraphics[width=\linewidth]{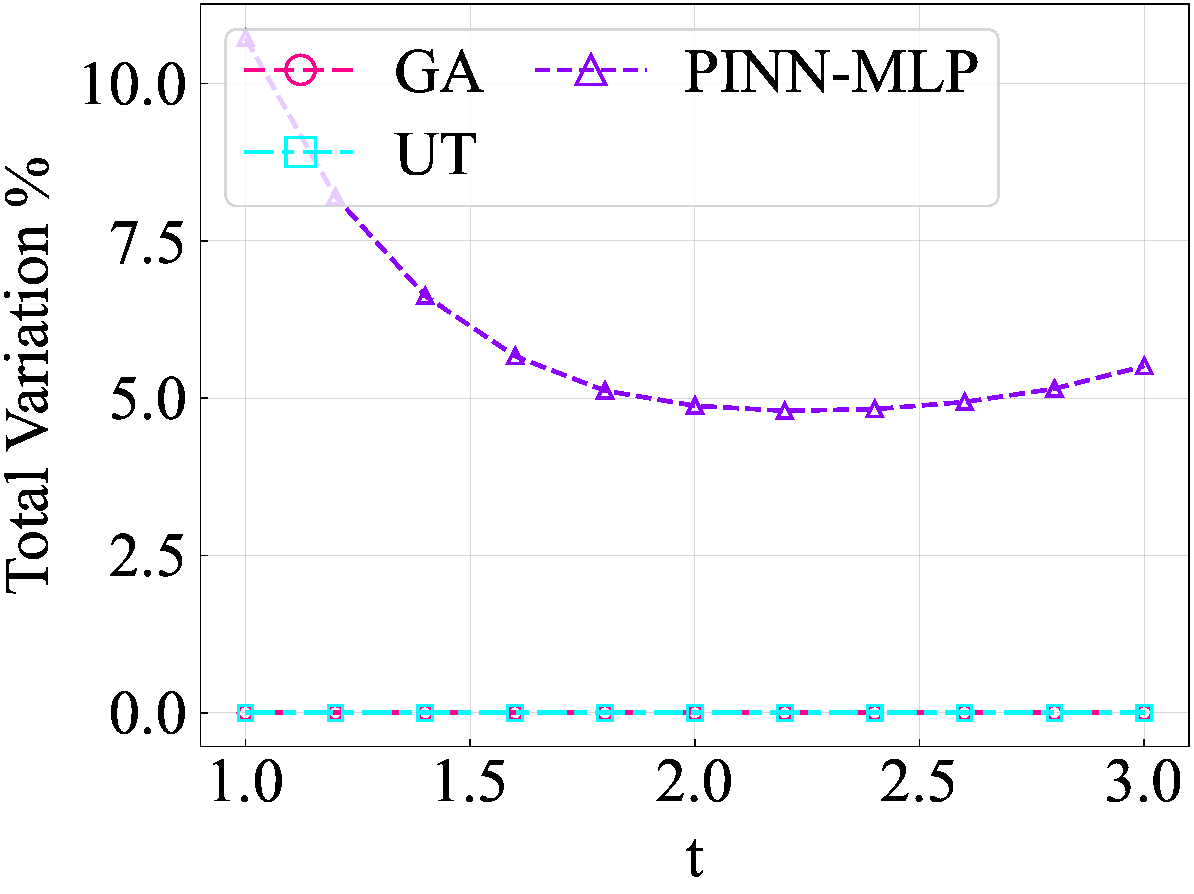}
    \caption{Total variation \%}\label{fig:1d_ou_tv}
  \end{subfigure}
\end{minipage}
\begin{minipage}{0.32\textwidth}\centering
  \begin{subfigure}[t]{\linewidth}
    \includegraphics[width=\linewidth]{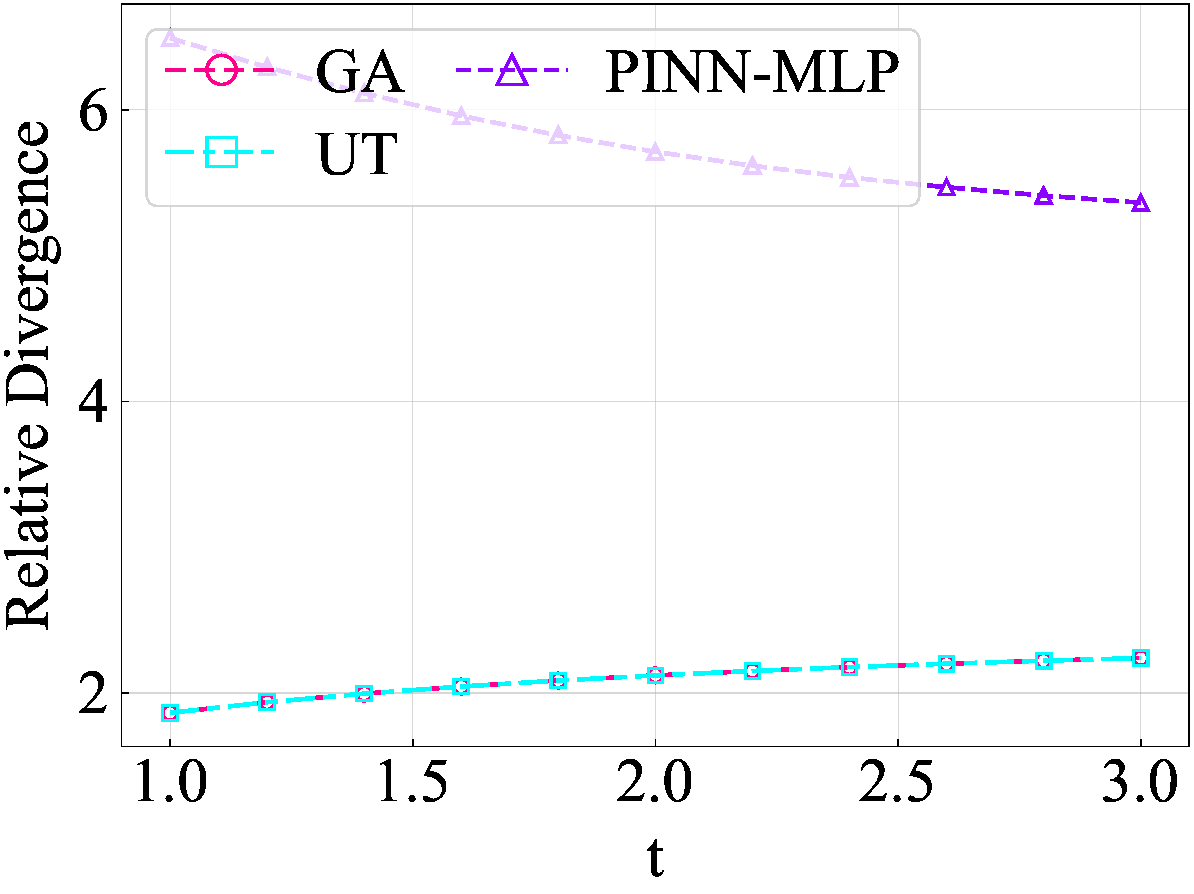}
    \caption{Relative divergence}\label{fig:1d_ou_rd}
  \end{subfigure}
\end{minipage}

\caption{\rv{
1D linear SDE error validation: a) $\widetilde{\mathrm{WNE}}$ with learned bound, b) $\widetilde{\mathrm{TV}}$, and c) $\widetilde{\mathrm{RD}}$ for GA, UT, and PINN-MLP.
}}
\label{fig:1d_ou}
\end{figure}

\paragraph{Nonlinear Example}
Next, we consider a nonlinear 1D SDE 
with Gaussian initial distribution $p_0(x)\sim \mathcal{N}(x;\mu=-2,\sigma^2=0.25)$.
Because this system has no analytical solution, the reference PDF $p(x,t)$ is obtained by extensive Monte-Carlo simulation using the Euler-Maruyama scheme\rv{; see Ref.~\cite{kongerror} for the system details and the visualization of the non-Gaussian density evolution.}
\rv{For this sensitivity study, we compare GA, UT, GMM, and PINN-GMM with \(K_{\mathrm{GMM}}\in\{1,5,11,15\}\) using $\widetilde{\mathrm{WNE}}$, the error measure directly associated with the learned worst-case error bound $B(t)$ by Algorithm~\ref{alg:ehat}.}
\rv{Figure~\ref{fig:1d_nl_wne_all} compares $\widetilde{\mathrm{WNE}}$ of all methods, while Figs.~\ref{fig:1d_nl_wne_bound_lowK} and~\ref{fig:1d_nl_wne_bound_higK} focus on the PINN-GMM models and overlay their \(B(t)\) as shaded regions; for visual clarity, we split to two groups.}
\rv{For all tested \(K_{\mathrm{GMM}}\), \(B(t)\) contains the associated $\widetilde{\mathrm{WNE}}$ over the time horizon.}
\rv{The corresponding PINN-GMM training times are \(9.7\), \(9.9\), \(12.5\), and \(13.2\) seconds for \(K_{\mathrm{GMM}}=1,5,11,15\), respectively.}
\rv{These results illustrate the accuracy--cost trade-off of \(K_{\mathrm{GMM}}\) discussed in Sec.~\ref{subsec:pinn_gmm_model}. They should not be interpreted as evidence of monotone convergence as \(K_{\mathrm{GMM}}\) increases. If the true PDF \(p(x,t)\) is already well approximated by a low-component Gaussian mixture, adding more components may provide little benefit and can also make optimization more difficult. Consequently, neither the approximation error nor the learned bound \(B_1(t)\) is expected to decrease monotonically with \(K_{\mathrm{GMM}}\).}
\begin{figure}[hbt!]
\centering
\begin{minipage}{0.32\textwidth}\centering
  \begin{subfigure}[t]{\linewidth}
    \includegraphics[width=\linewidth]{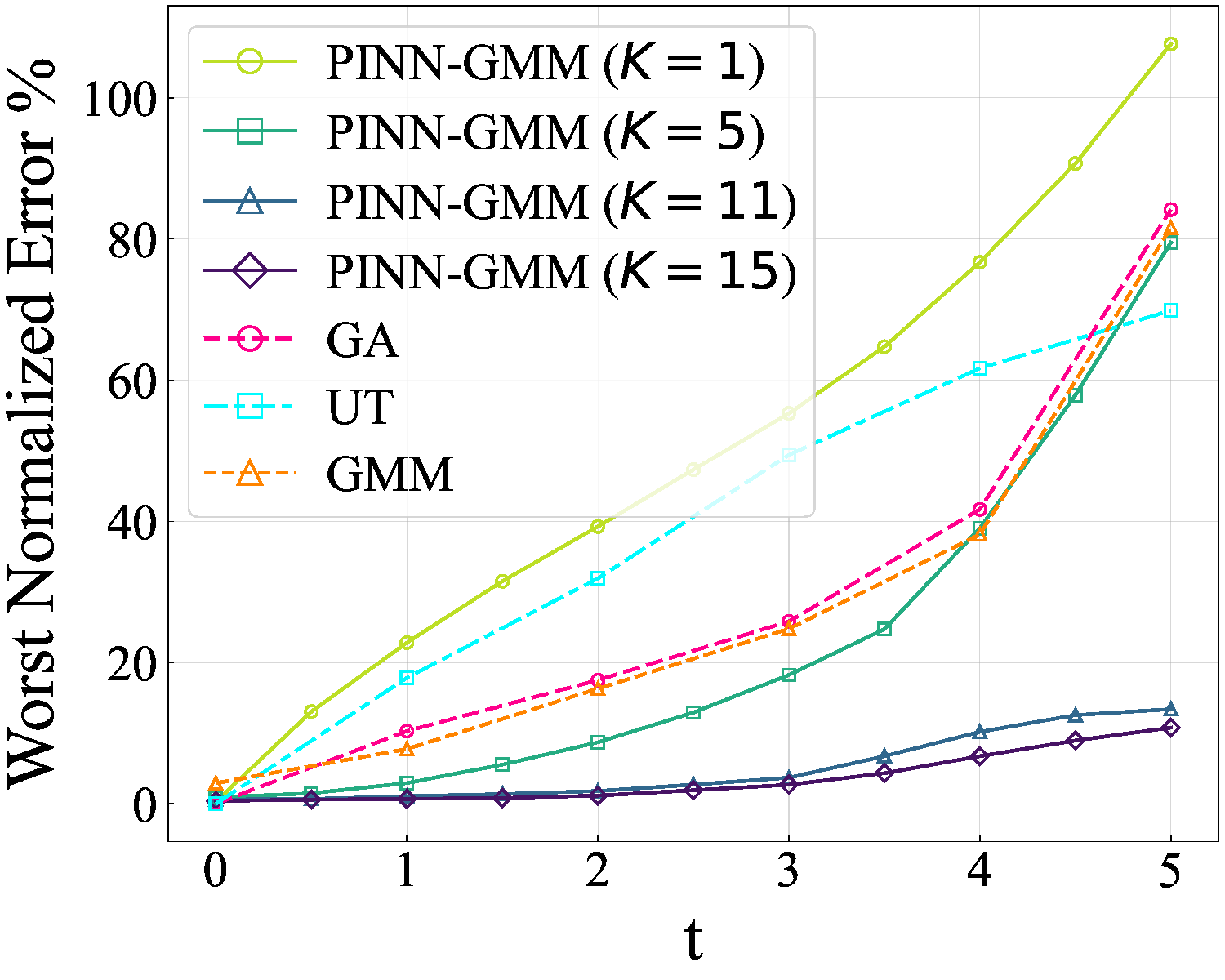}
    \caption{\rv{$\widetilde{\mathrm{WNE}}$ over time}}
    \label{fig:1d_nl_wne_all}
  \end{subfigure}
\end{minipage}
\begin{minipage}{0.32\textwidth}\centering
  \begin{subfigure}[t]{\linewidth}
    \includegraphics[width=\linewidth]{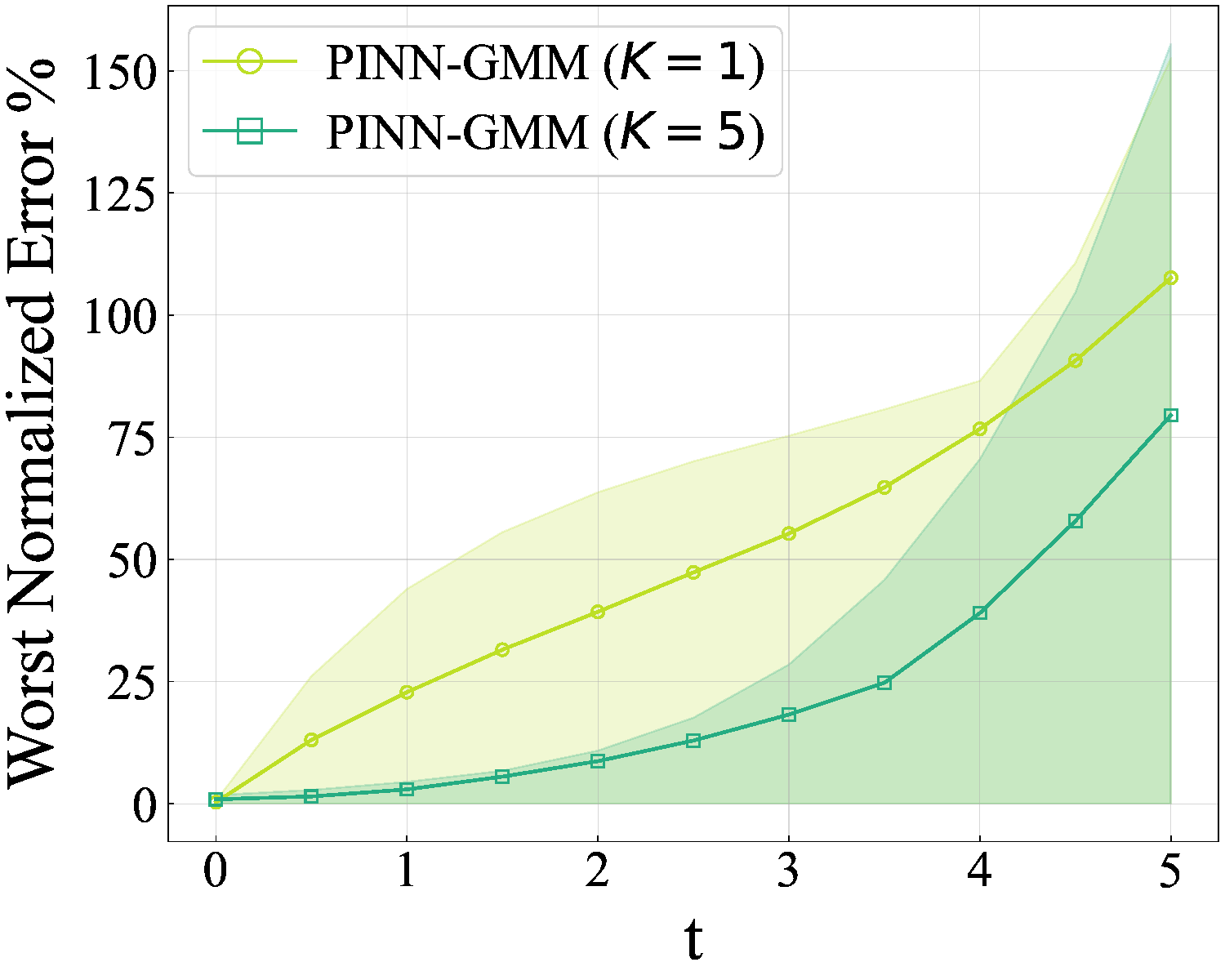}
    \caption{\rv{\(\widetilde{\mathrm{WNE}}\) and \(B_1\) for PINN-GMM.}}
    \label{fig:1d_nl_wne_bound_lowK}
  \end{subfigure}
\end{minipage}
\begin{minipage}{0.32\textwidth}\centering
  \begin{subfigure}[t]{\linewidth}
    \includegraphics[width=\linewidth]{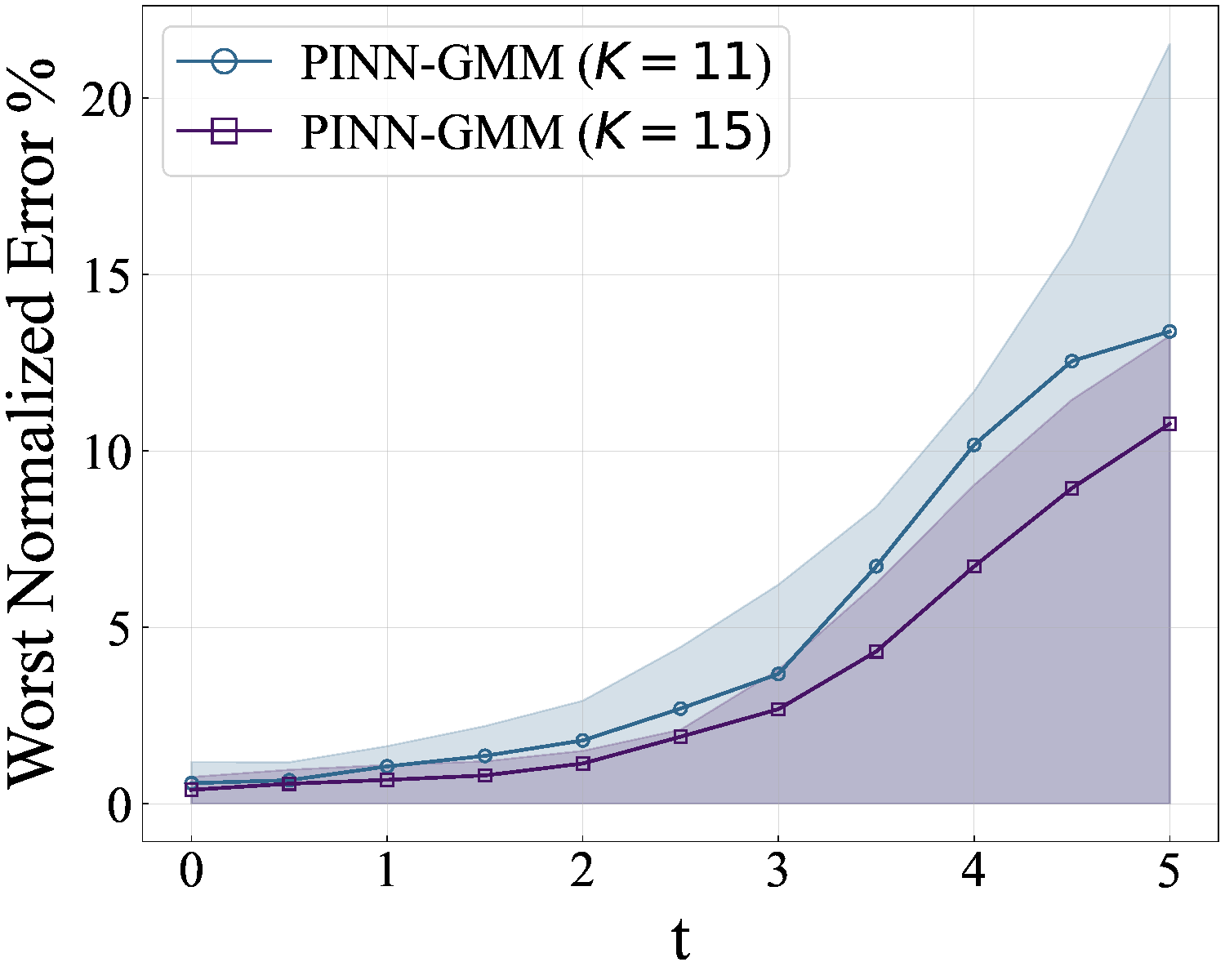}
    \caption{\rv{\(\widetilde{\mathrm{WNE}}\) and \(B_1\) for PINN-GMM}}
    \label{fig:1d_nl_wne_bound_higK}
  \end{subfigure}
\end{minipage}
\caption{\rv{$K_{\mathrm{GMM}}$ sensitivity in the 1D nonlinear case: a) $\widetilde{\mathrm{WNE}}$ comparison and b,c) PINN-GMM error bounds for varying $K_{\mathrm{GMM}}$.}}
\label{fig:1d_nl_wne}
\end{figure}
\begin{figure}[hbt!]
\centering
\begin{minipage}{0.45\textwidth}\centering
  \begin{subfigure}[t]{\linewidth}
    \includegraphics[width=\linewidth]{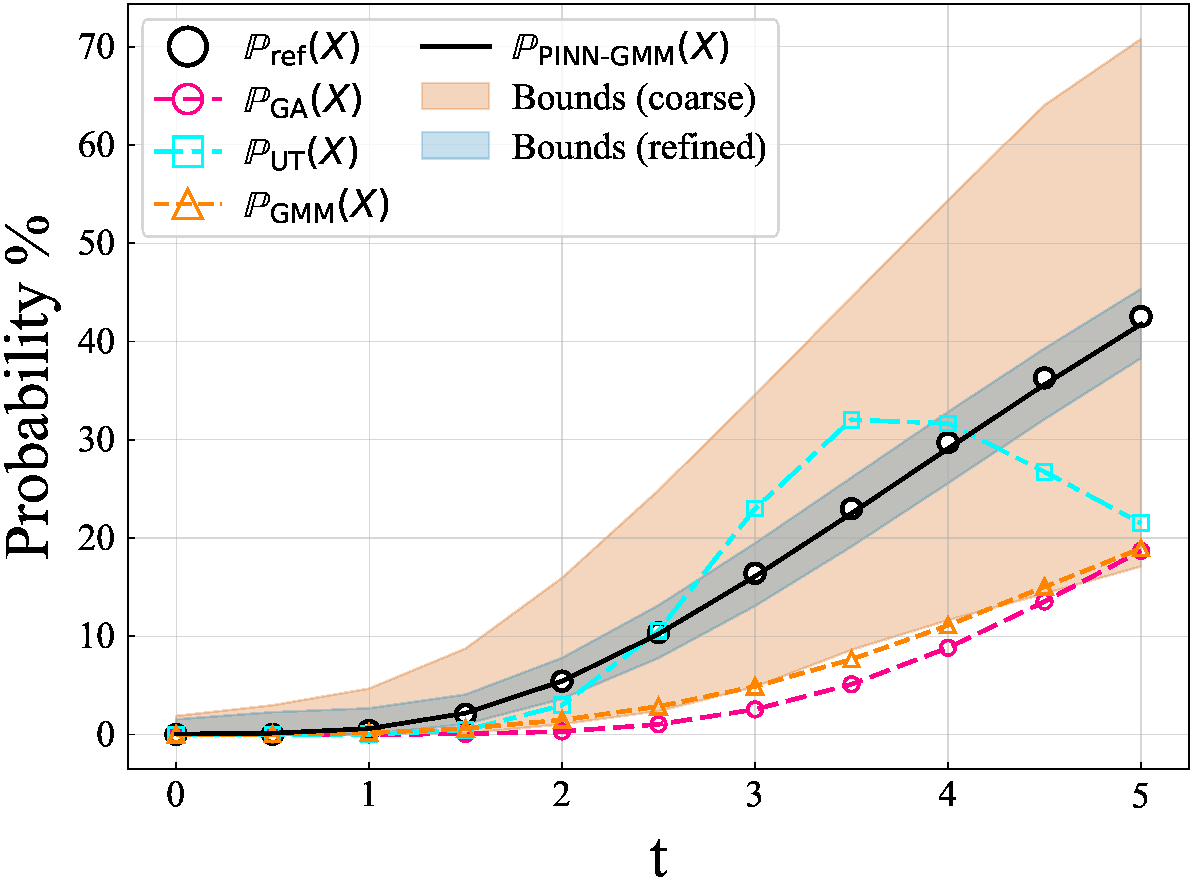}
    \caption{\rv{Event probability and bounds}}\label{fig:1d_nl_solver_N500_all_prob}
  \end{subfigure}
\end{minipage}
\begin{minipage}{0.45\textwidth}\centering
  \begin{subfigure}[t]{\linewidth}
    \includegraphics[width=\linewidth]{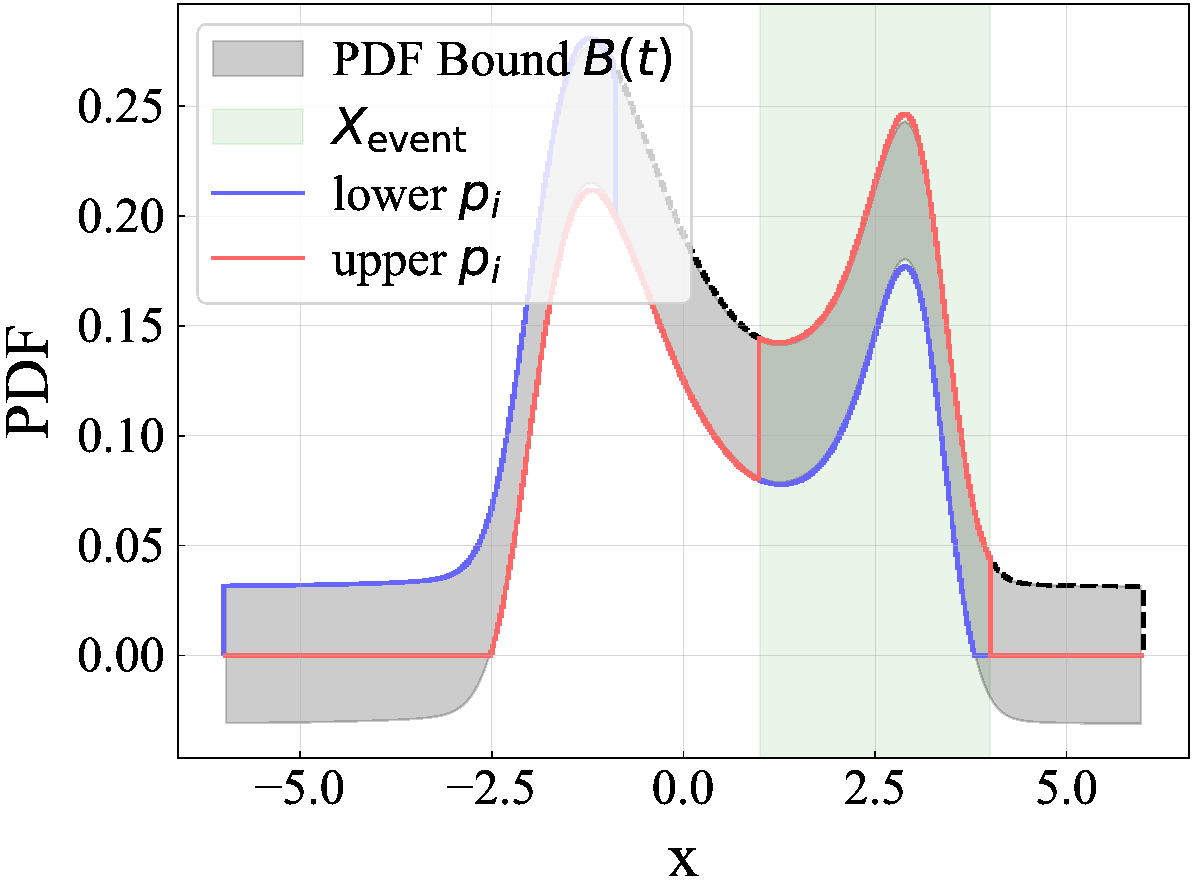}
    \caption{\rv{Refined LP solution at \(t=T_{\mathrm{end}}=5\) s}}\label{fig:1d_nl_solver_N500_pdf_t5.00}
  \end{subfigure}
\end{minipage}
\caption{\rv{1D nonlinear event-probability bounds: a) coarse and refined bounds over time and b) final-time LP-optimized densities under cellwise constraints.}}
\label{fig:1d_nl_prob}
\end{figure}

\rv{For the event-probability demonstration, we use the \(K_{\mathrm{GMM}}=15\) PINN-GMM and pass the resulting \(\hat p\pm B\) to Algorithm~\ref{alg:solvelp} to solve Problem~\ref{prob:integral_pdf} for \(X_{\mathrm{event}}=[1,4]\).}
\rv{Figure~\ref{fig:1d_nl_solver_N500_all_prob} overlays the coarse and refined probability bounds over time. Both discretizations contain the reference probability, and refinement tightens the bounds.}
\rv{The point estimates from GA and GMM deviate noticeably after \(t\approx2\) s, while UT remains closer but provides no analogous error-bound interval.}
\rv{Figure~\ref{fig:1d_nl_solver_N500_pdf_t5.00} shows the refined LP solution at the final time; dashed black lines indicate the per-cell constraints $[p_i^-,p_i^+]$, and solid blue/red lines denote the LP-optimized densities attaining the lower/upper event-probability bounds.
The refined computation takes 0.18 seconds per evaluated time point on average.}

\FloatBarrier
\subsection{6D Unperturbed Keplerian Orbit}
\rv{We next consider a six-dimensional unperturbed Keplerian orbit, adapted from Ref.~\cite{sun2016uncertainty}, to evaluate PINN-GMM in a high-dimensional nonlinear orbital setting.}
\rv{This group contains two studies. First, the equinoctial-coordinate case admits an analytical push-forward reference solution and is used to evaluate uncertainty propagation, error bounds, and event-probability bounds via Algorithm~\ref{alg:solvelp}. Second, the same orbit is used to study the effect of coordinate choice by 
comparing PINN-GMMs trained in rotating spherical and Cartesian coordinates.}
\rv{To estimate the empirical total variation \(\widetilde{\mathrm{TV}}\) and worst normalized error \(\widetilde{\mathrm{WNE}}\), we draw \(10^9\) i.i.d. uniform samples over the state domain at each evaluation time. The empirical relative divergence \(\widetilde{\mathrm{RD}}\) is computed directly using the available reference samples.}

\subsubsection{Equinoctial Orbit Elements Coordinates}
In equinoctial elements, the unperturbed Keplerian \rv{flow admits an analytical push-forward solution of the FP-PDE; see Appendix~\ref{appendix:oe_dyn}.} 
Because the flow is nonlinear, \rv{an initially Gaussian density generally becomes non-Gaussian for \(t>0\)}, making this a useful high-dimensional nonlinear testbed.
\rv{The initial PDF is Gaussian, $p_0=\mathcal{N}(x;\mu_0,\Sigma_0)$, with $\mu_0$ and $\Sigma_0$ reported in Appendix~\ref{appendix:6d_equin}.}
We consider \rv{$T_{\mathrm{end}} = 25849$ seconds (about $7.2$ hours) and compare PINN-GMM with GA, UT, and GMM.}
Further setup details appear in Appendix~\ref{appendix:6d_equin}.

\begin{figure}[hbtp!]
\centering
\begin{minipage}{0.32\textwidth}\centering
  \begin{subfigure}[t]{\linewidth}
    \includegraphics[width=\linewidth]{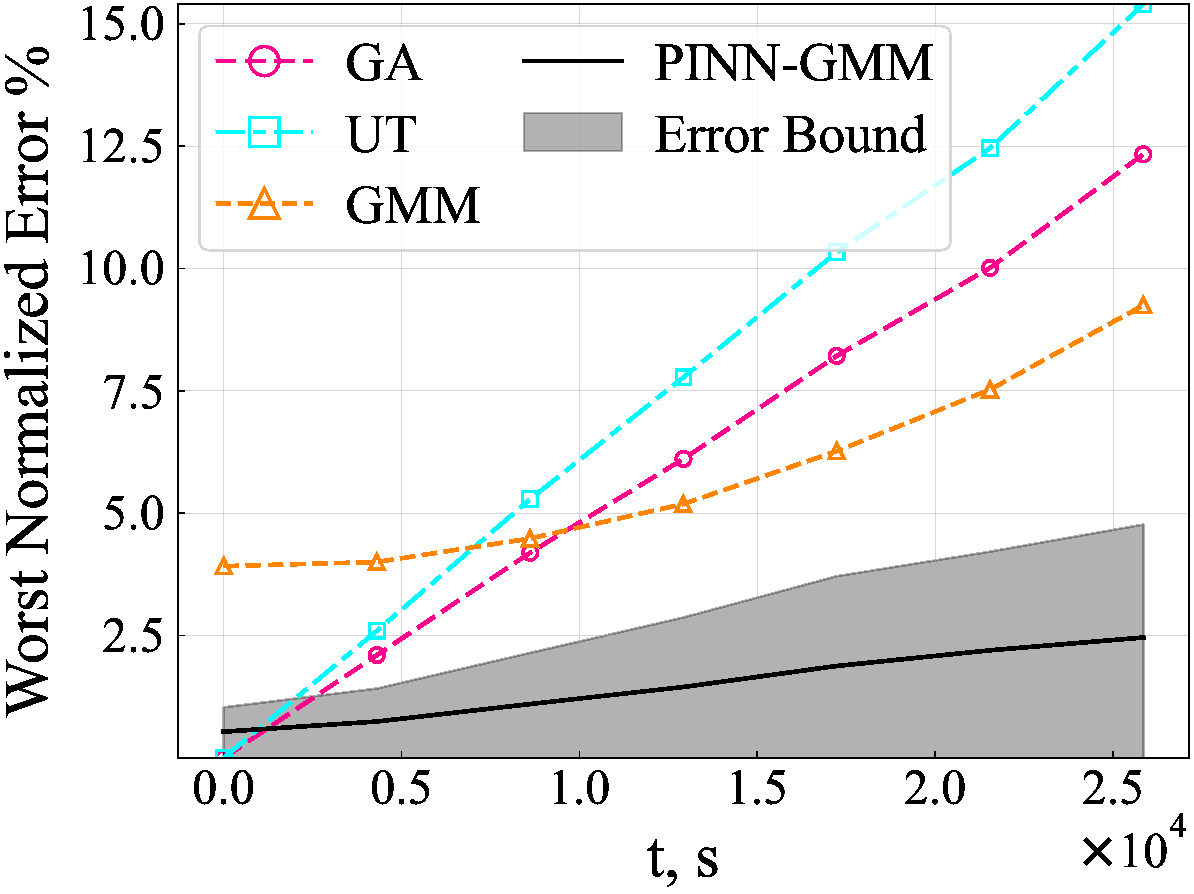}
    \caption{Worst normalized error \%}\label{fig:6d_geo_norm_error}
  \end{subfigure}
\end{minipage}
\begin{minipage}{0.32\textwidth}\centering
  \begin{subfigure}[t]{\linewidth}
    \includegraphics[width=\linewidth]{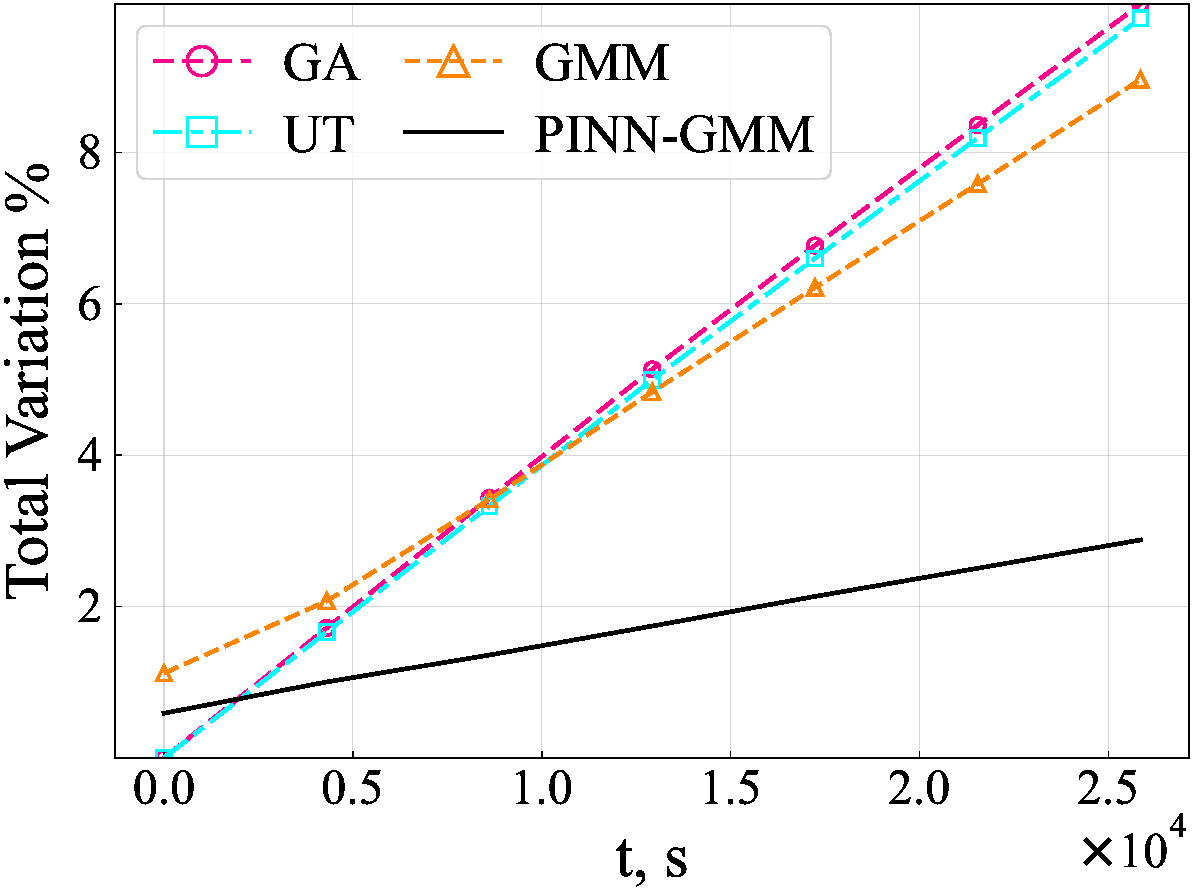}
    \caption{Total variation \%}\label{fig:6d_geo_metric-TV}
  \end{subfigure}
\end{minipage}
\begin{minipage}{0.32\textwidth}\centering
  \begin{subfigure}[t]{\linewidth}
    \includegraphics[width=\linewidth]{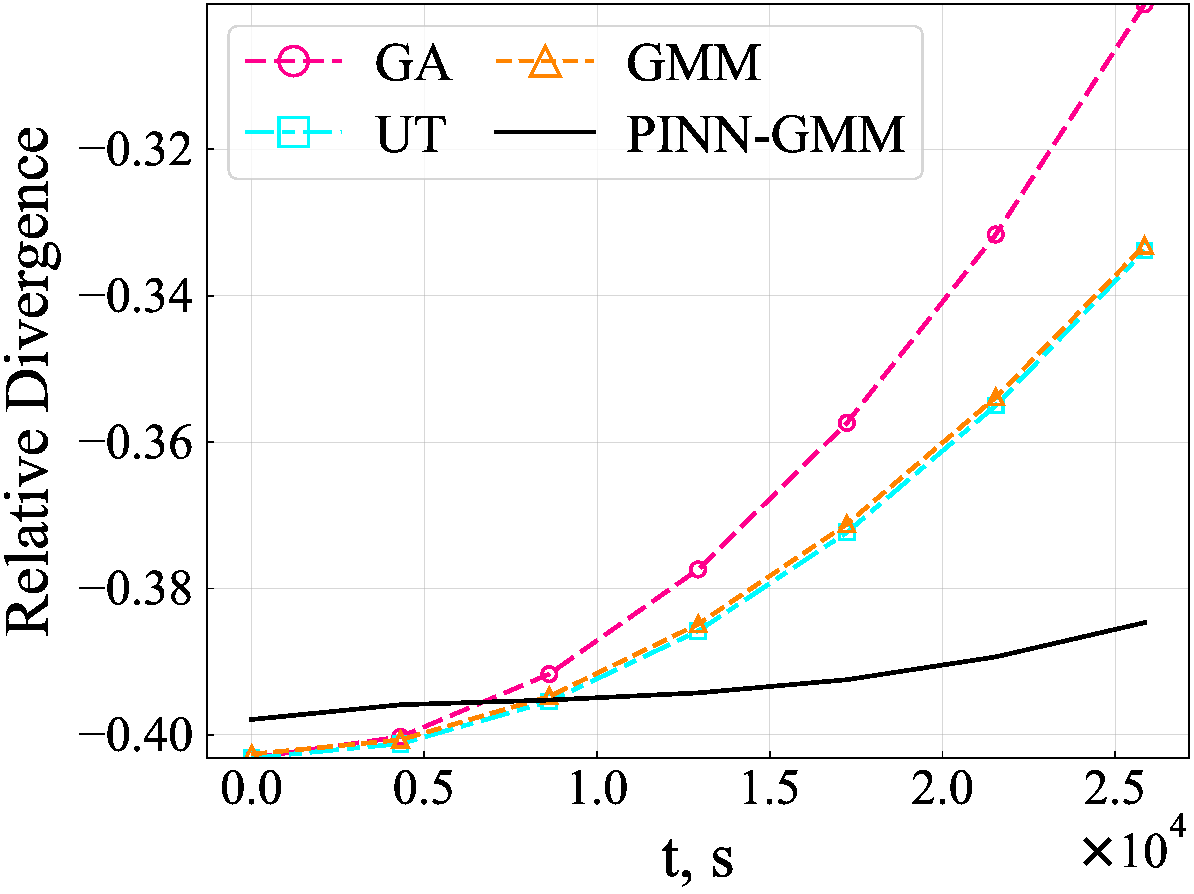}
    \caption{Relative divergence}\label{fig:6d_geo_rd}
  \end{subfigure}
\end{minipage}
\caption{\rv{
6D equinoctial Keplerian error evaluation for PINN-GMM and baselines: a) $\widetilde{\mathrm{WNE}}$ with the PINN-GMM bound $B(t)$, b) $\widetilde{\mathrm{TV}}$, and c) $\widetilde{\mathrm{RD}}$ against the analytical reference.
}}
\label{fig:6d_geo}
\end{figure}
Figures~\ref{fig:6d_geo_metric-TV},
\ref{fig:6d_geo_norm_error}, and~\ref{fig:6d_geo_rd} report the evaluation measures over time, where PINN-GMM consistently attains \rv{smaller errors than the baselines.}
In Fig.~\ref{fig:6d_geo_norm_error}, \rv{the PINN-GMM error bound contains the worst normalized error and remains reasonably close; the error bound stays below approximately \(5.0\%\) over the horizon.}
\rv{The bound \(B(t)\) is a worst-case pointwise PDF error bound, so it does not directly bound total variation or relative divergence.}

\begin{figure}[hbtp!]
\centering
\begin{minipage}{0.41\textwidth}\centering
  \begin{subfigure}[t]{\linewidth}
    \includegraphics[width=\linewidth]{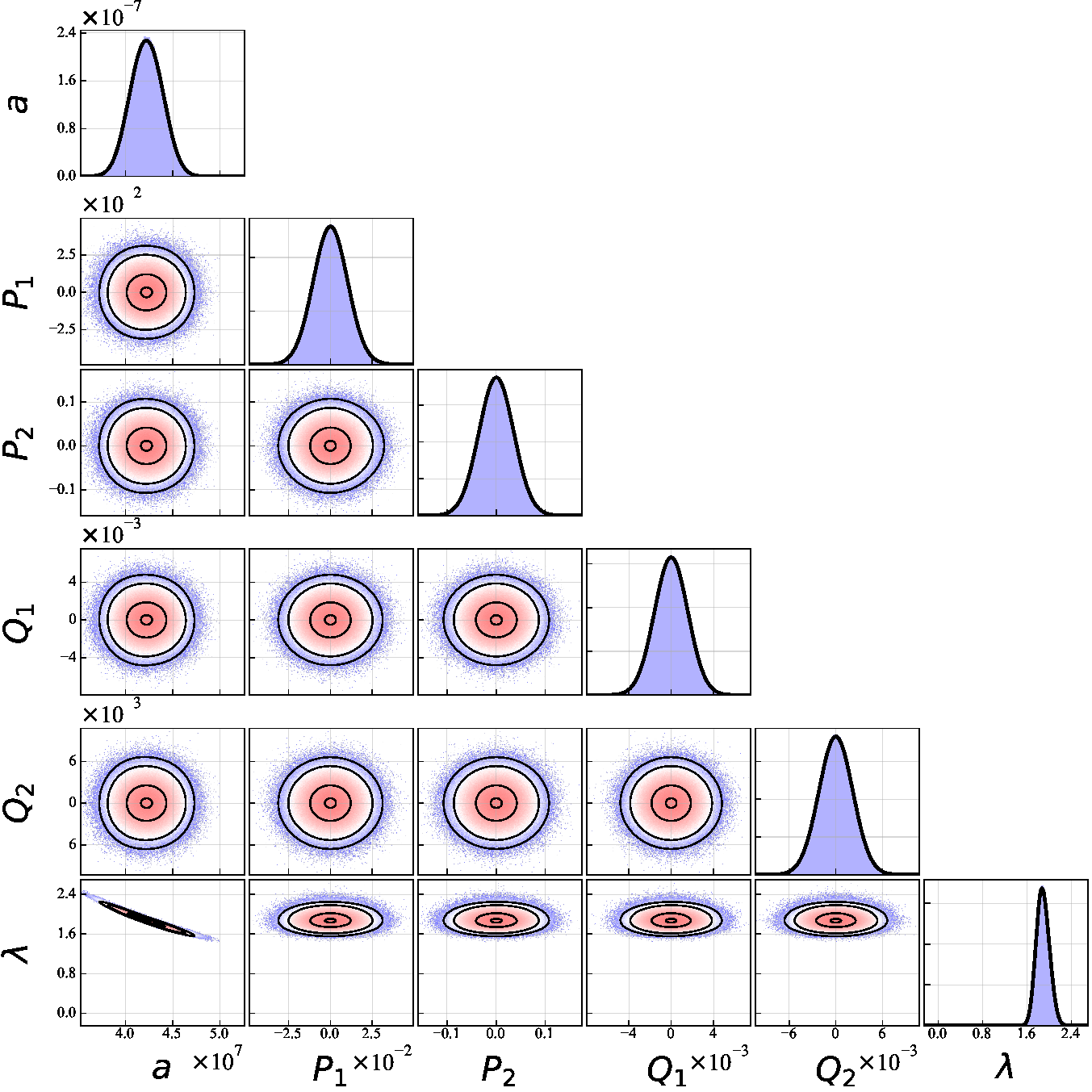}
    \caption{Full corner plot at $t=T_{\mathrm{end}}$}\label{fig:6d_geo_fullcorner_t0}
  \end{subfigure}
\end{minipage}
\begin{minipage}{0.54\textwidth}\centering
  \begin{subfigure}[t]{\linewidth}
    \includegraphics[width=\linewidth]{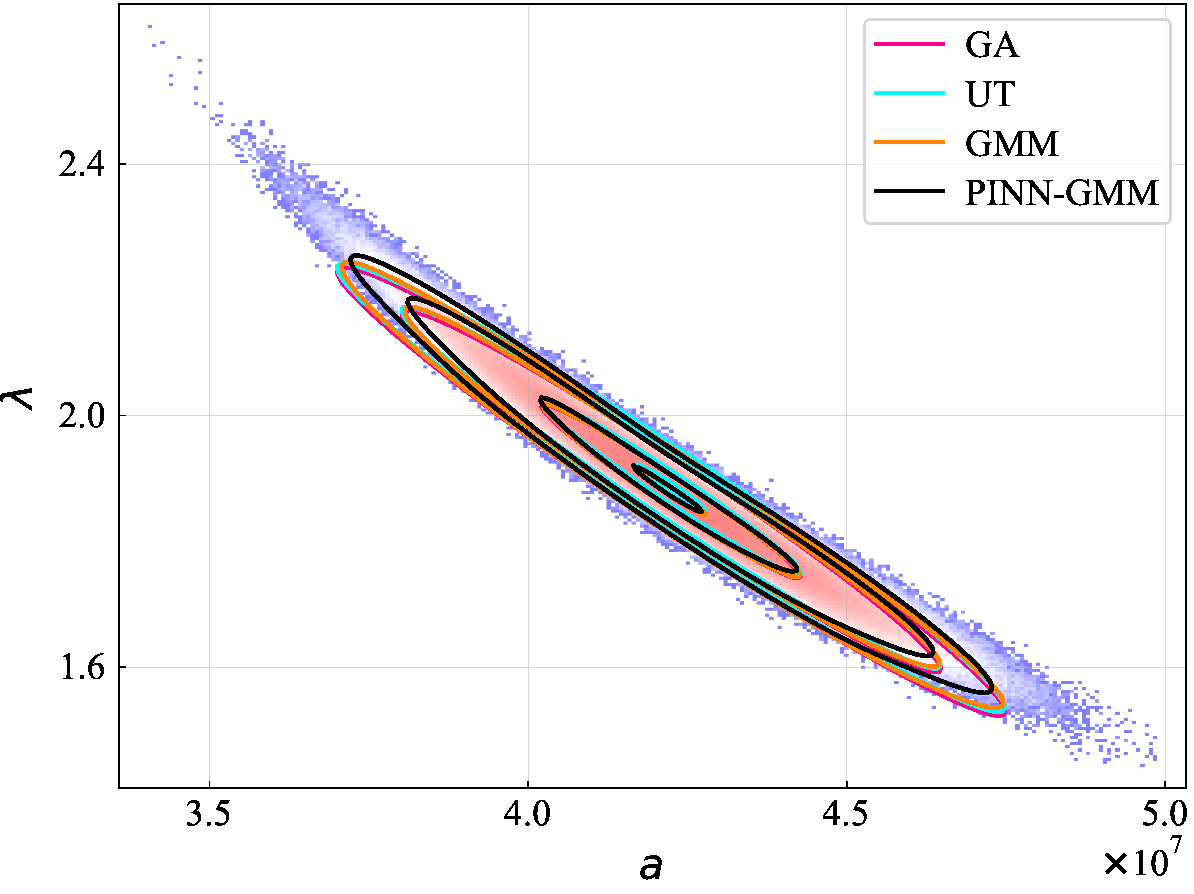}
    \caption{Enlarged $(a,\lambda)$ marginal at $t=T_{\mathrm{end}}$}
    \label{fig:6d_geo_element}
  \end{subfigure}
\end{minipage}
\caption{\rv{Final-time 6D Keplerian PDF marginals: a) PINN-GMM contours against MC samples and b) enlarged $(a,\lambda)$ comparison with baselines.}}
\label{fig:6d_geo_fullcorner}
\end{figure}
Figure~\ref{fig:6d_geo_fullcorner} show 1D and 2D marginals of the PINN-GMM at \rv{the final time $T_{\mathrm{end}}$.}
The black contours of PINN-GMM align well with the reference density (blue--red heat map).
\rv{Since $\lambda$ is the only time-varying state and its evolution depends on $a$, Fig.~\ref{fig:6d_geo_element} shows the \((a,\lambda)\) marginal at the final time.
This enlarged comparison shows that PINN-GMM remains close to the reference, while the baseline contours exhibit relatively larger deviations.}

\rv{Figure~\ref{fig:case1_equin_prob} reports the event-probability computation for this 6D case. Figure~\ref{fig:case1_equin_prob_est} shows point estimates from the reference solution and the tested propagation methods, while Fig.~\ref{fig:case1_equin_solver_Nx1_Ndeg2000_GMMguide1_Cheap0(Bounds)} shows the reference probability together with the PINN-GMM upper and lower bounds from Algorithm~\ref{alg:solvelp}.}
We instantiate Line~1 in Algorithm~\ref{alg:solvelp} \rv{with a guided partitioning scheme, placing finer cells near the event region and in high-density regions of the PINN-GMM approximation.
This bound computation required \(37.2\) seconds per evaluated time point on average.}
For \rv{the} chosen $X_{\mathrm{event}}$, \rv{the point estimates are comparable, but only PINN-GMM provides an \textit{a priori} interval that quantifies the possible deviation from the true event probability.}
\rv{Moreover, T}he posterior empirical total variation in Fig.~\ref{fig:6d_geo_metric-TV} implies that there exist events $X_{\mathrm{event}} \subseteq X$ for which baseline estimates may deviate from the truth by roughly $2$--$9\%$.
\begin{figure}[hbtp!]
\centering
\begin{minipage}{0.45\textwidth}\centering
  \begin{subfigure}[t]{\linewidth}
    \includegraphics[width=\linewidth]{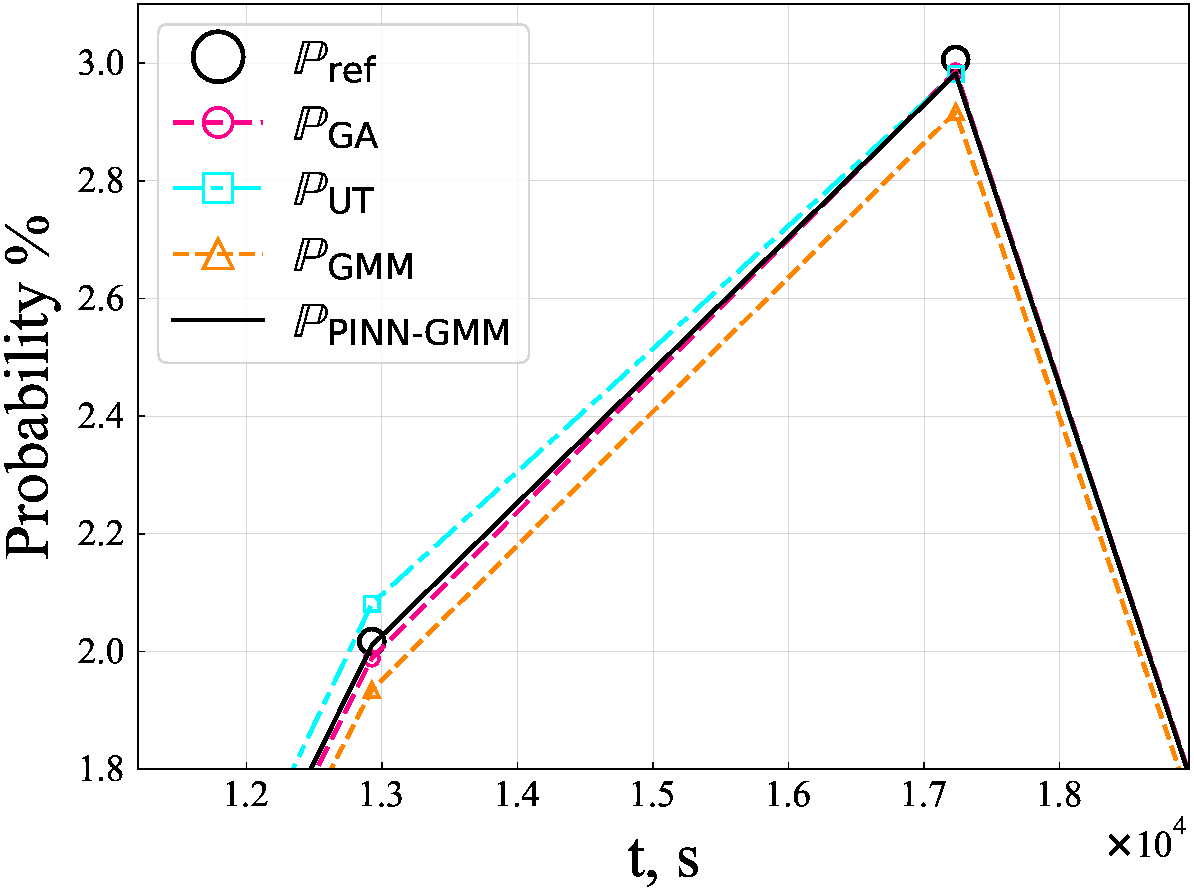}
    \caption{\rv{Point estimates over a zoom-in time window}}\label{fig:case1_equin_prob_est}
  \end{subfigure}
\end{minipage}
\begin{minipage}{0.45\textwidth}\centering
  \begin{subfigure}[t]{\linewidth}
    \includegraphics[width=\linewidth]{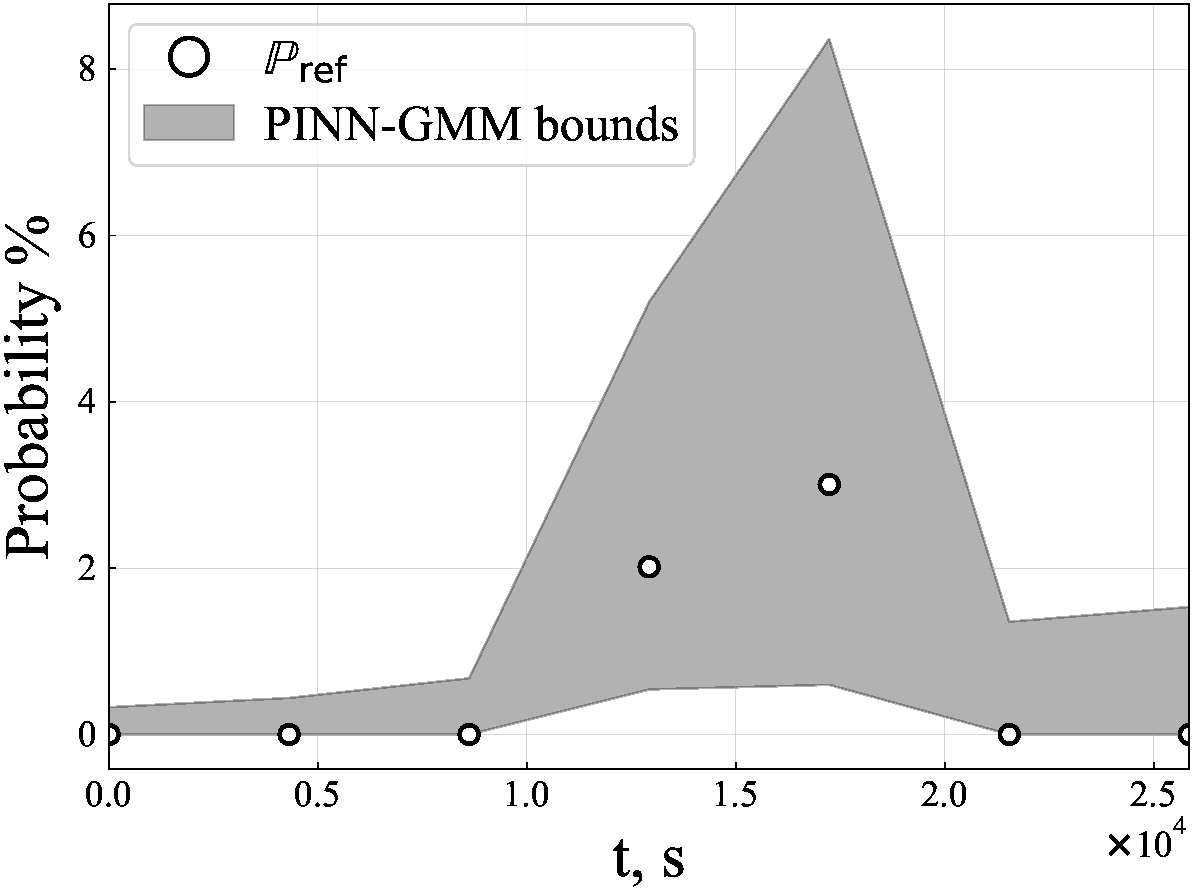}
    \caption{\rv{Reference and PINN-GMM bounds over $t$}}
    \label{fig:case1_equin_solver_Nx1_Ndeg2000_GMMguide1_Cheap0(Bounds)}
  \end{subfigure}
\end{minipage}
\caption{
\rv{6D equinoctial event-probability results: a) method point estimates and b) reference probability with PINN-GMM bounds from Algorithm~\ref{alg:solvelp}.}}
\label{fig:case1_equin_prob}
\end{figure}

\subsubsection{Effect of Coordinate Choice}\label{subsec:6D_coord}
\rv{We next use the same 6D unperturbed Keplerian orbit to examine the effect of coordinate choice.}
In rotating spherical coordinates, the FP-PDE has no closed-form solution. 
We therefore construct a reference density by propagating \(10^7\) samples to selected times and fitting a Gaussian mixture to the resulting samples (Sec.~\ref{subsec:ref_true_pdf}).
For efficiency, we map the states to orbital elements, evolve them analytically, and map back---a semi-analytical Monte Carlo specific to this unperturbed Keplerian case.
We \rv{use the same time horizon as in the equinoctial-coordinate case;} see Appendix~\ref{appendix:6d_sph} for additional details.

\rv{Figure~\ref{fig:6d_geosph} reports the rotating-spherical-coordinate error measures.}
\begin{figure}[hbt!]
\centering
\begin{minipage}{0.32\textwidth}\centering
  \begin{subfigure}[t]{\linewidth}
    \includegraphics[width=\linewidth]{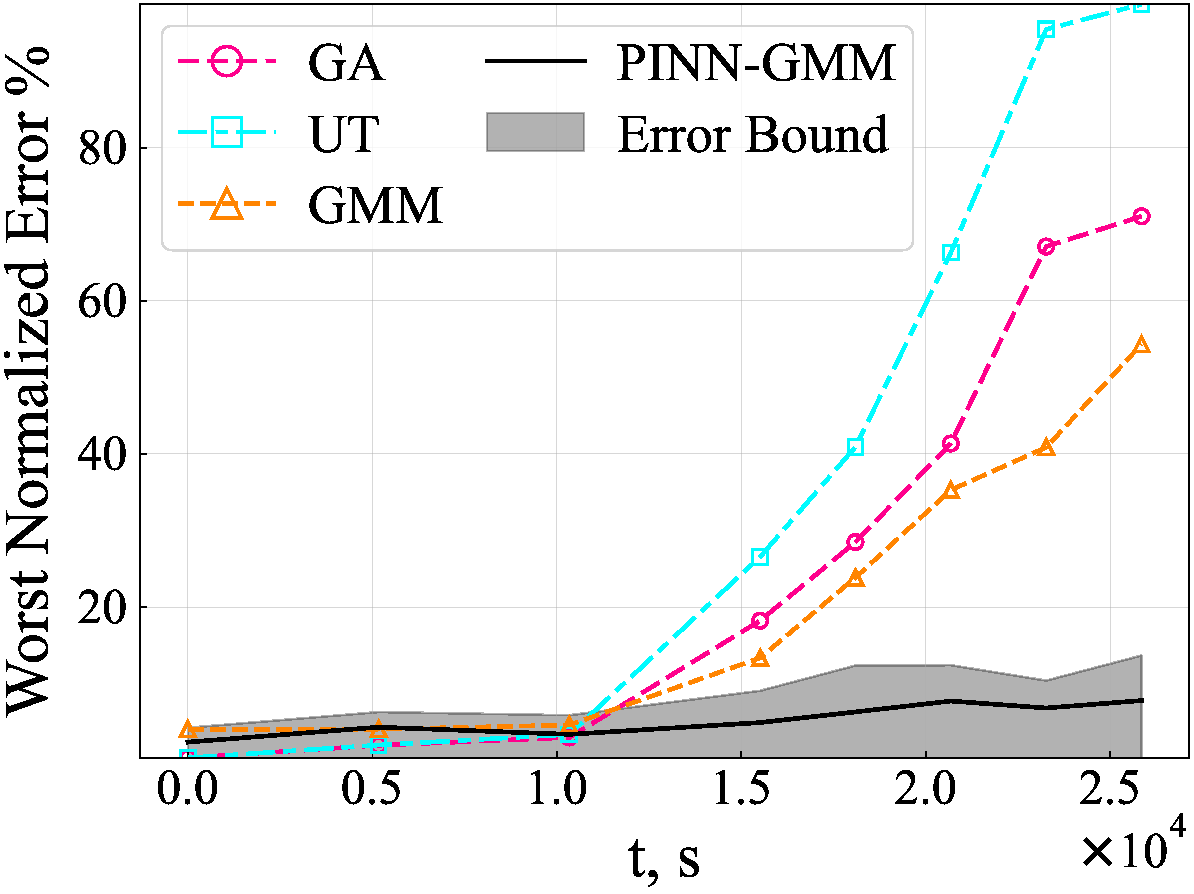}
    \caption{Worst normalized error \%}\label{fig:6d_geosph_norm_error}
  \end{subfigure}
\end{minipage}
\begin{minipage}{0.32\textwidth}\centering
  \begin{subfigure}[t]{\linewidth}
    \includegraphics[width=\linewidth]{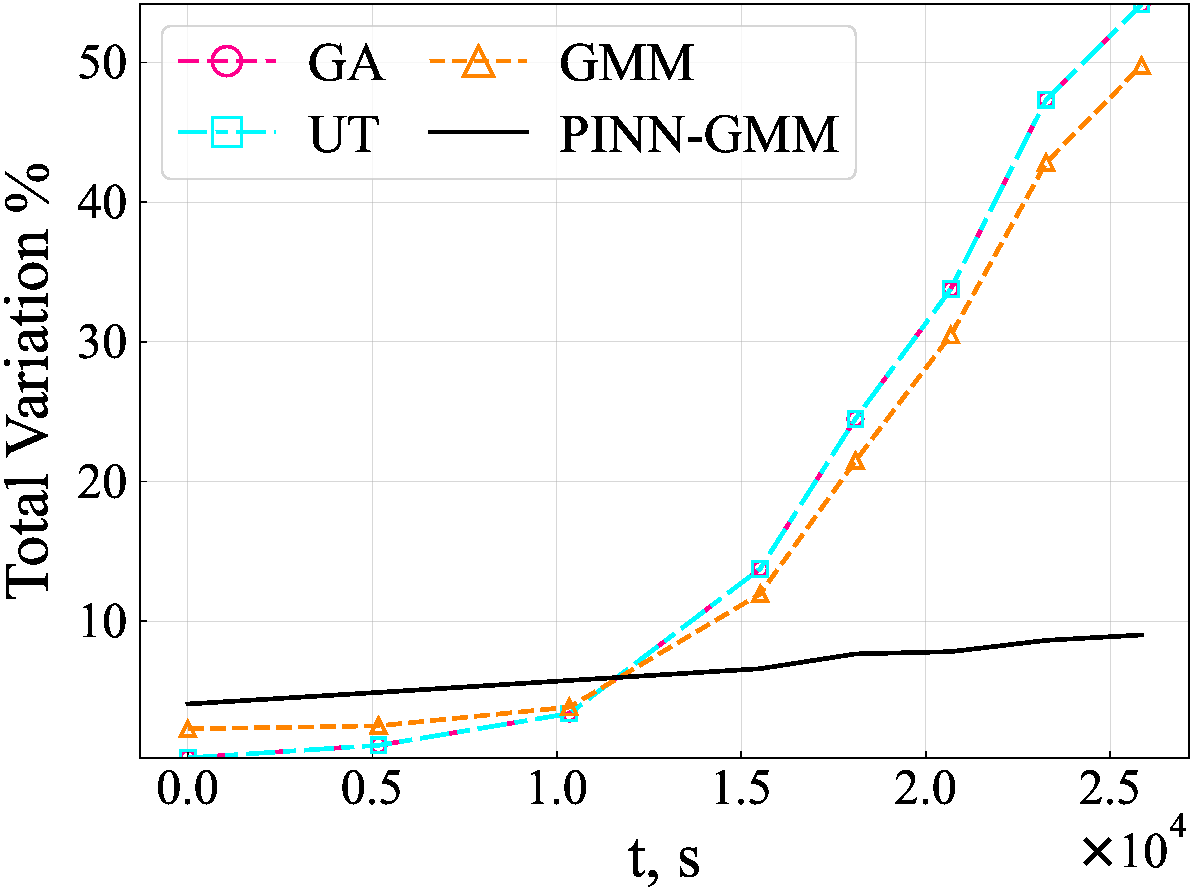}
    \caption{Total variation \%}\label{fig:6d_geosph_metric-TV}
  \end{subfigure}
\end{minipage}
\begin{minipage}{0.32\textwidth}\centering
  \begin{subfigure}[t]{\linewidth}
    \includegraphics[width=\linewidth]{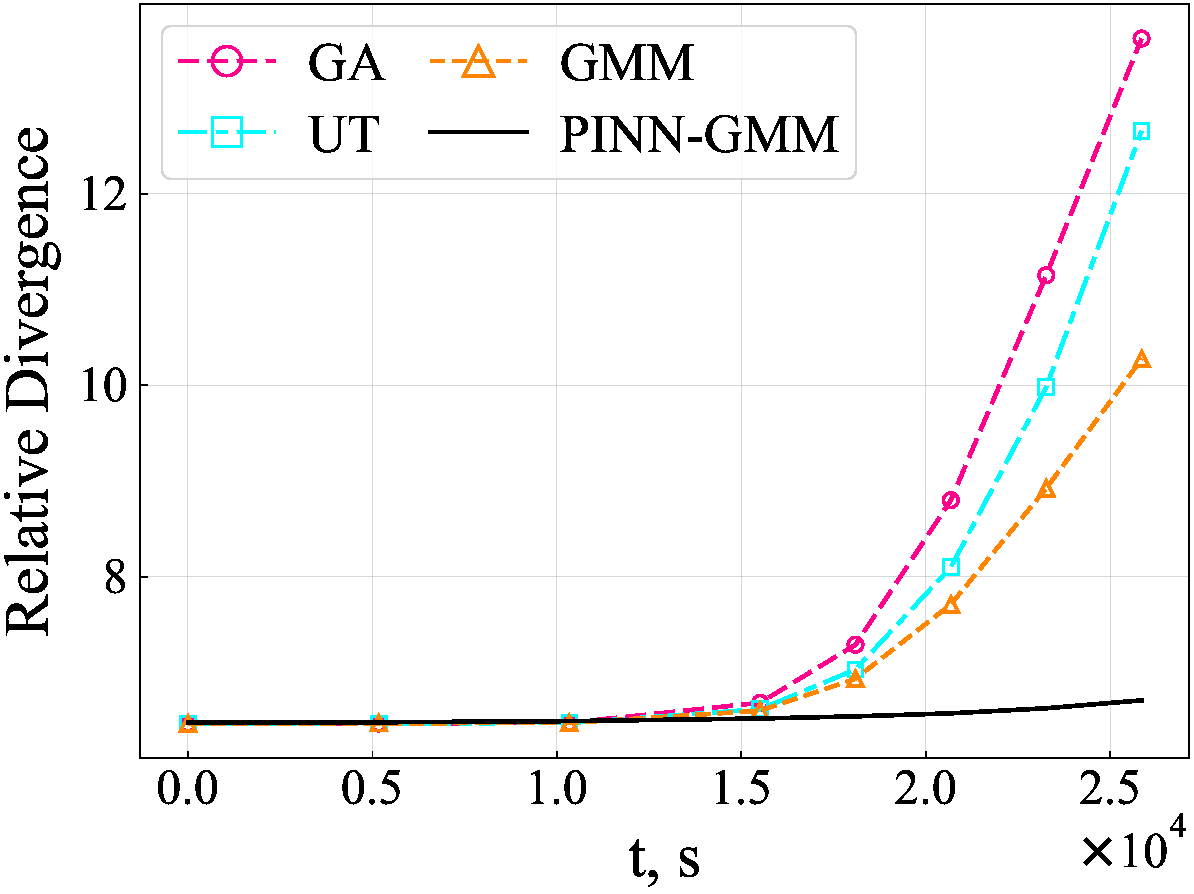}
    \caption{Relative divergence}\label{fig:6d_geosph_rd}
  \end{subfigure}
\end{minipage}
\caption{\rv{
6D rotating-spherical Keplerian error evaluation: a) $\widetilde{\mathrm{WNE}}$ with the PINN-GMM error bound $B(t)$, b) $\widetilde{\mathrm{TV}}$, and c) $\widetilde{\mathrm{RD}}$ against the MC reference.
}}
\label{fig:6d_geosph}
\end{figure}
\begin{figure}[hbtp!]
\centering
\begin{minipage}{0.45\textwidth}\centering
  \begin{subfigure}[t]{\linewidth}
    \includegraphics[width=\linewidth]{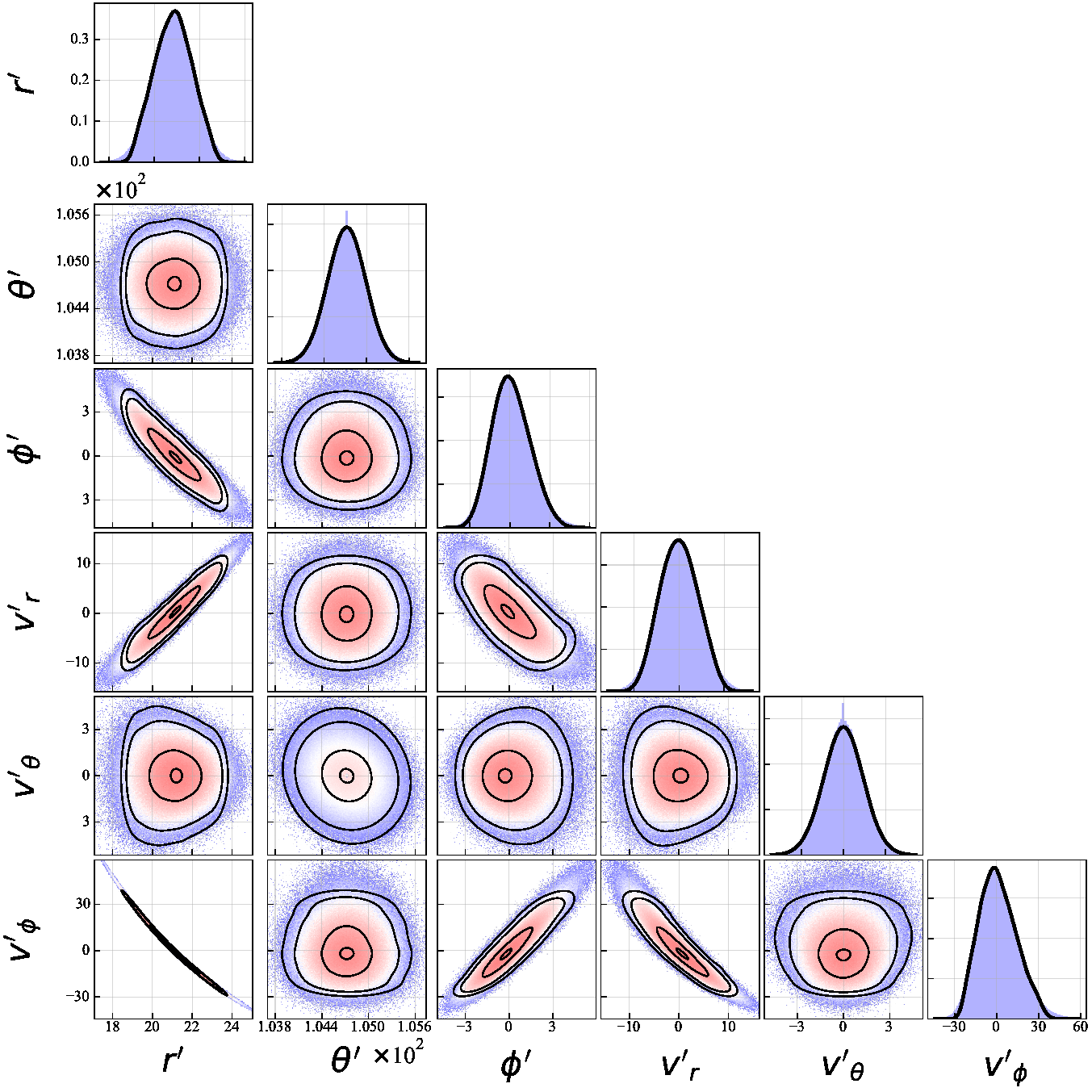}
    \caption{}\label{fig:6d_geosph_fullcorner_T0.30(pinn-gmm)}
  \end{subfigure}
\end{minipage}
\begin{minipage}{0.45\textwidth}\centering
  \begin{subfigure}[t]{\linewidth}
    \includegraphics[width=\linewidth]{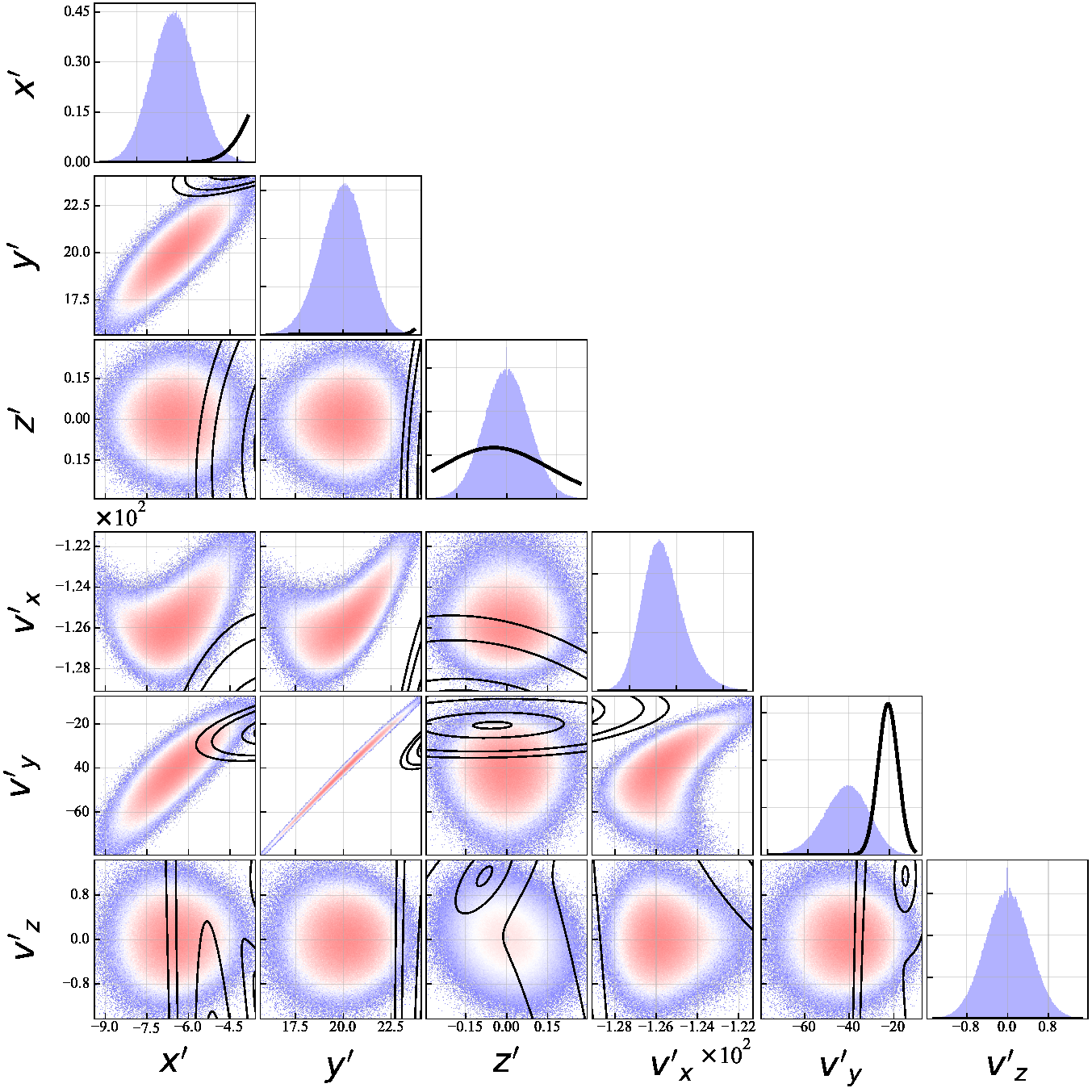}
    \caption{}
    \label{fig:6d_cart_fullcorner_T0.30(pinn-gmm)}
  \end{subfigure}
\end{minipage}
\caption{\rv{Final-time 6D Keplerian PDF marginals for PINN-GMM trained in: a) rotating-spherical and b) Cartesian coordinates, both compared with MC samples.}}
\label{fig:6d_geosph_fullcorner}
\end{figure}
PINN-GMM yields consistently smaller errors than the baselines: its total variation \rv{remains} between 4\%--9\%, whereas \rv{the GA, UT, and GMM errors grow to} 50\%--54\%, by the final time.
\rv{The worst normalized error and relative divergence show the same trend, and Fig.~\ref{fig:6d_geosph_norm_error} shows that the PINN-GMM error bound contains the worst normalized error.}
\rv{Figure~\ref{fig:6d_geosph_fullcorner_T0.30(pinn-gmm)} shows the corresponding marginal PDF at the final time. The PINN-GMM contours closely follow the MC reference samples.}

\rv{To compare the effect of coordinate systems, we also train PINN-GMM in normalized Cartesian coordinates with a matched initial condition; see Appendix~\ref{appendix:6d_cart}.}
\rv{Under the same training budget, the Cartesian-coordinate model does not converge to comparable loss levels: its IC and FP-PDE residual losses are about an order of magnitude larger than those obtained in rotating spherical coordinates; see Appendix~\ref{appendix:6d_cart} for the loss values.}
\rv{This leads to a poor Cartesian-coordinate PDF approximation, as shown in Fig.~\ref{fig:6d_cart_fullcorner_T0.30(pinn-gmm)}, where the final-time PINN-GMM contours deviate visibly from the reference MC samples.}
\rv{Thus, although PINN-GMM is not inherently tied to a particular coordinate system, the coordinate choice substantially affects trainability. Rotating spherical coordinates reduce the effective state-domain size and concentrate residual samples in regions containing PDF information, whereas in Cartesian coordinates the probability mass moves through a much larger domain, making training more difficult for a fixed sampling budget.}

\subsection{4D Perturbed Keplerian Orbit with Diffusion}
\rv{We next consider four-dimensional planar perturbed Keplerian orbits with stochastic diffusion in rotating spherical coordinates.}
\rv{Unlike the 6D unperturbed Keplerian cases, these systems do not admit a closed-form FP-PDE solution, so the reference densities are constructed from fully numerical Monte Carlo simulations as described in Sec.~\ref{subsec:ref_true_pdf}.}
\rv{This group contains two experiments. The first uses the \(J_2\)+diffusion case from Ref.~\cite{sun2016uncertainty} to evaluate uncertainty propagation and demonstrate event-probability bounds via Algorithm~\ref{alg:solvelp}. 
The second adds a non-conservative low-thrust acceleration over a longer time horizon, testing the method in a more general perturbed-dynamics setting.}
\subsubsection{\rv{\(J_2\) Perturbation with Diffusion}}\label{exp:4d_j2_diffusion}
This experiment, adapted from Ref.~\cite{sun2016uncertainty}, considers a 4D planar Keplerian orbit with \(J_2\) perturbation and diffusion driven by Brownian motion.
The dynamics is expressed in rotating spherical coordinates, with the same setup as in Ref.~\cite{sun2016uncertainty}: time horizon \rv{$T_{\mathrm{end}} = 17232.9$} seconds and Gaussian initial distribution.
\rv{The scaling constants in Eq.~\eqref{eq:def_rotsphere} are \(R=2 \times 10^6\) m, \(T_{\mathrm{orb}}=86164.695\) s, $\Theta=0.015$ rad, \(\Phi=0.0387\) rad, and \(W=7.2920644\times 10^{-5}\) rad/s. The diffusion matrix is} \[ \rv{ g(\mathbf{x}(t)) = \begin{bmatrix} \mathbf{0}_{2\times 2}\\ \mathrm{diag}(\sigma_r,\sigma_\phi) \end{bmatrix} \in \mathbb{R}^{4 \times 2}, \qquad \sigma_r=\frac{10^{-4}}{\sqrt{R^2/T_{\mathrm{orb}}^3}},\quad \sigma_\phi=\frac{10^{-11}}{\sqrt{\Phi^2/T_{\mathrm{orb}}^3}}, } \] \rv{which is constant in \(\mathbf{x}(t)\) and enters only through the velocity-state equations.}
Since the corresponding FP-PDE has no closed-form solution, we construct the reference PDF as in Sec.~\ref{subsec:ref_true_pdf} using a fully numerical Monte Carlo (MC) simulation.
\begin{figure}[hbtp!]
\centering
\begin{minipage}{0.32\textwidth}\centering
  \begin{subfigure}[t]{\linewidth}
    \includegraphics[width=\linewidth]{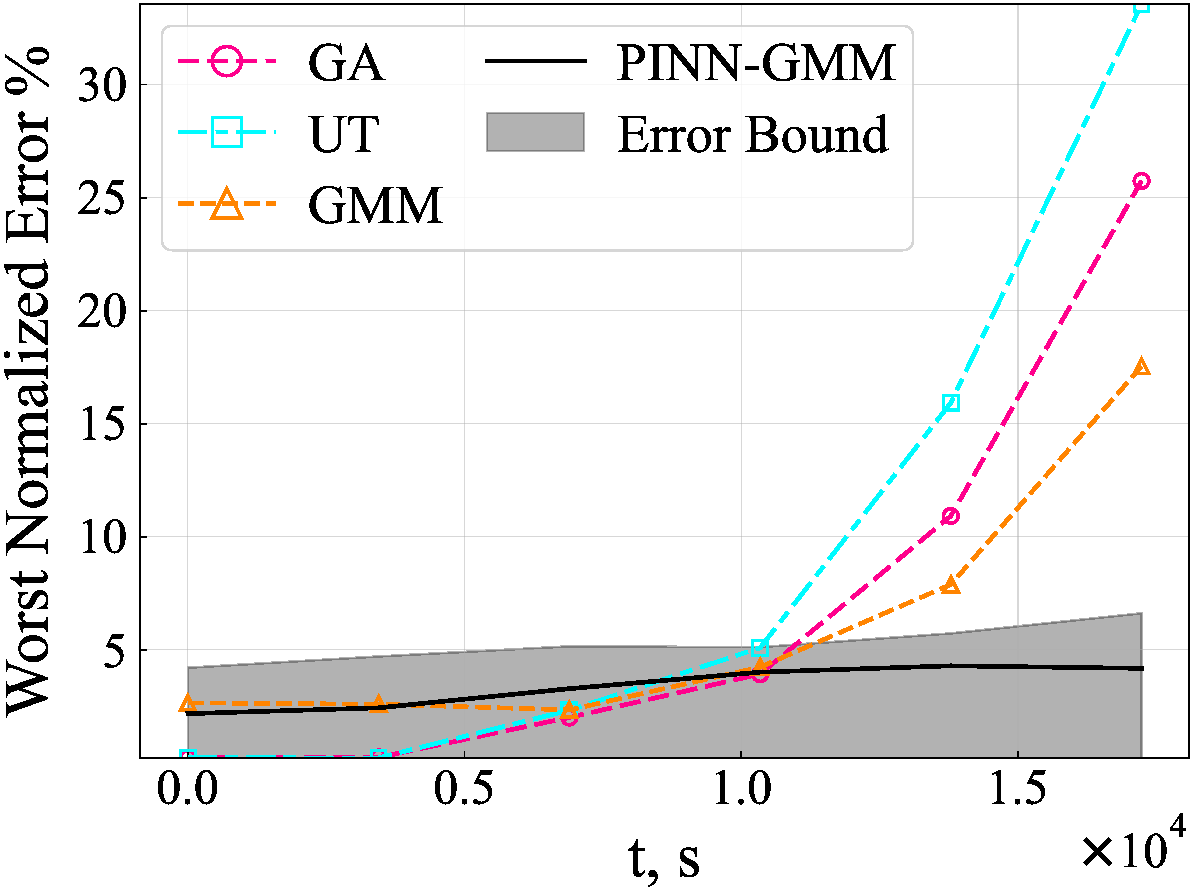}
    \caption{Worst normalized error \%}\label{fig:4d_j2_norm_error}
  \end{subfigure}
\end{minipage}
\begin{minipage}{0.32\textwidth}\centering
  \begin{subfigure}[t]{\linewidth}
    \includegraphics[width=\linewidth]{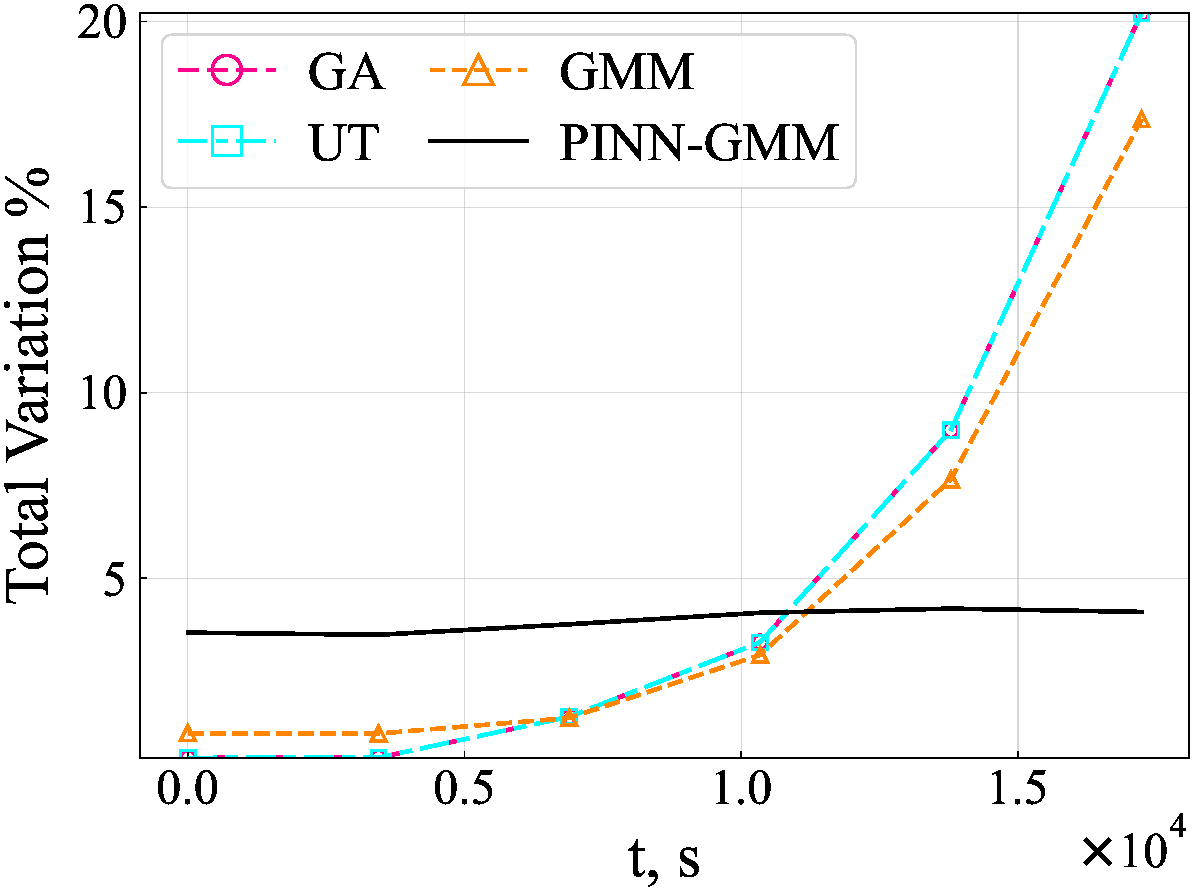}
    \caption{Total variation \%}\label{fig:4d_j2_metric-TV}
  \end{subfigure}
\end{minipage}
\begin{minipage}{0.32\textwidth}\centering
  \begin{subfigure}[t]{\linewidth}
    \includegraphics[width=\linewidth]{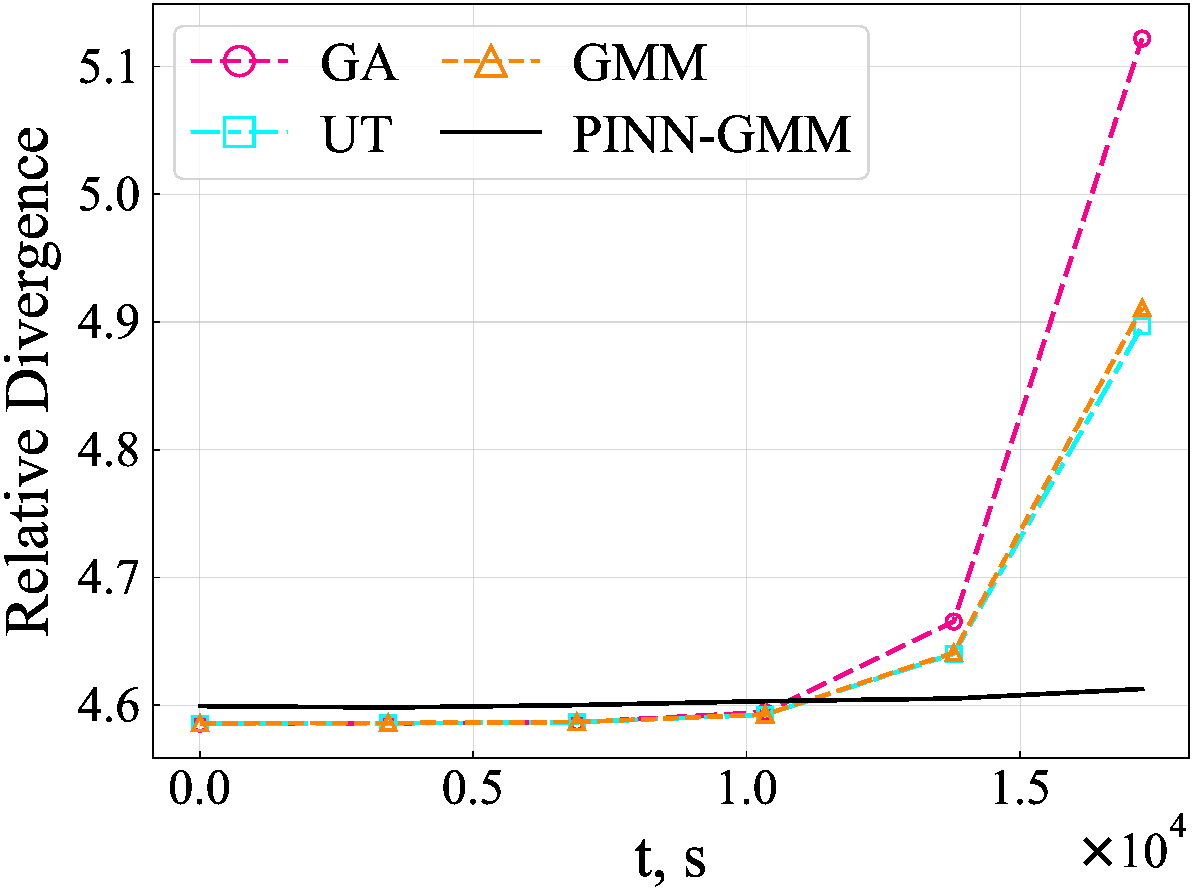}
    \caption{Relative divergence}\label{fig:4d_j2_rd}
  \end{subfigure}
\end{minipage}
\caption{\rv{
4D $J_2$+diffusion error evaluation: a) $\widetilde{\mathrm{WNE}}$ with the PINN-GMM error bound $B(t)$, b) $\widetilde{\mathrm{TV}}$, and c) $\widetilde{\mathrm{RD}}$ against the MC reference.
}}
\label{fig:4d_j2}
\end{figure}
For this MC, propagating $10^6$ samples took about $36375$ seconds, whereas training PINN-GMM $\hat p$ over 40000 iterations took $7246$ seconds and training its error-PINN $\hat e$ over 20000 iterations took an additional $10566$ seconds.
Further details are given in Appendix~\ref{appendix:4d_sph}.

Figure~\ref{fig:4d_j2} \rv{reports the errors of} PINN-GMM \rv{and} the baseline methods.
\rv{A}lthough PINN-GMM has larger errors at early times, its errors remain relatively small over the full horizon \rv{under all three evaluation measures}.
Figure~\ref{fig:4d_j2_norm_error} also \rv{shows} that the PINN-GMM \rv{error} bounds \rv{contains the observed worst normalized error; the baselines provide no analogous error quantification}.
The large\rv{r} initial error of PINN-GMM stems from \rv{the fact that the baselines start from the exact initial distribution, whereas PINN-GMM learns the initial density through the IC loss.}

\rv{Figure~\ref{fig:4d_j2_cart_pdf} maps the PINN-GMM density learned in rotating spherical coordinate to the Cartesian $X$-$Y$ plane (see Sec.~\ref{subsubsec:rsc}) and compares the resulting marginal PDF with the reference MC density at uniform time points. The PINN-GMM surface closely follow the MC density (wireframes) throughout the horizon, capturing both the drift and spreading of the probability mass. This agreement provides an additional coordinate-transformed validation of the learned density, complementing the quantitative error measures in Fig.~\ref{fig:4d_j2}.}
\begin{figure}[htbp!]
\centering
\includegraphics[width=0.72\linewidth]{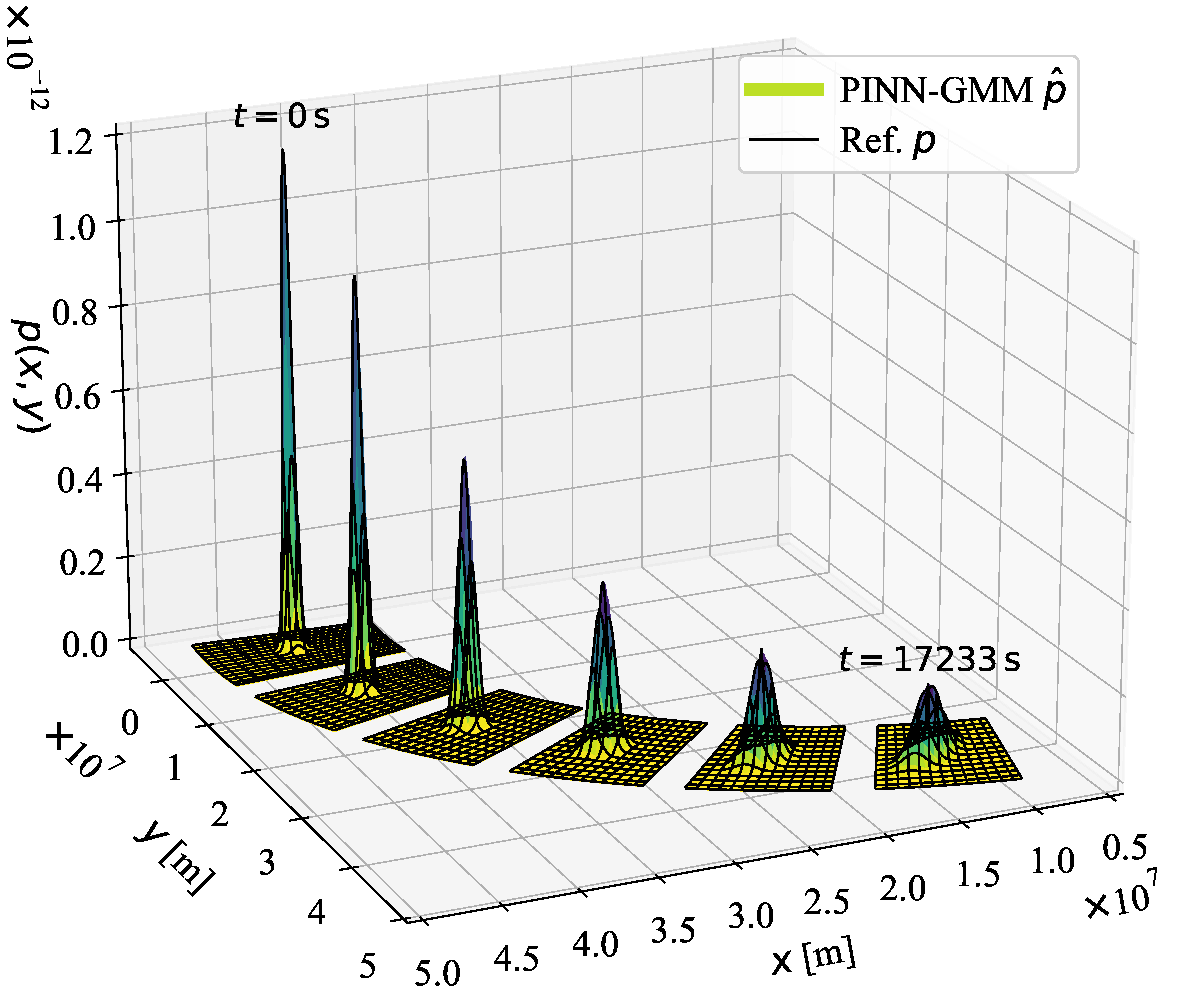}
\caption{\rv{
Cartesian $X$-$Y$ PDF evolution for the 4D $J_2$+diffusion case: PINN-GMM learned in rotating spherical coordinates, transformed, and compared with MC references.
}}
\label{fig:4d_j2_cart_pdf}
\end{figure}
\begin{figure}[ht!]
\centering
\begin{minipage}{0.45\textwidth}\centering
  \begin{subfigure}[t]{\linewidth}
    \includegraphics[width=\linewidth]{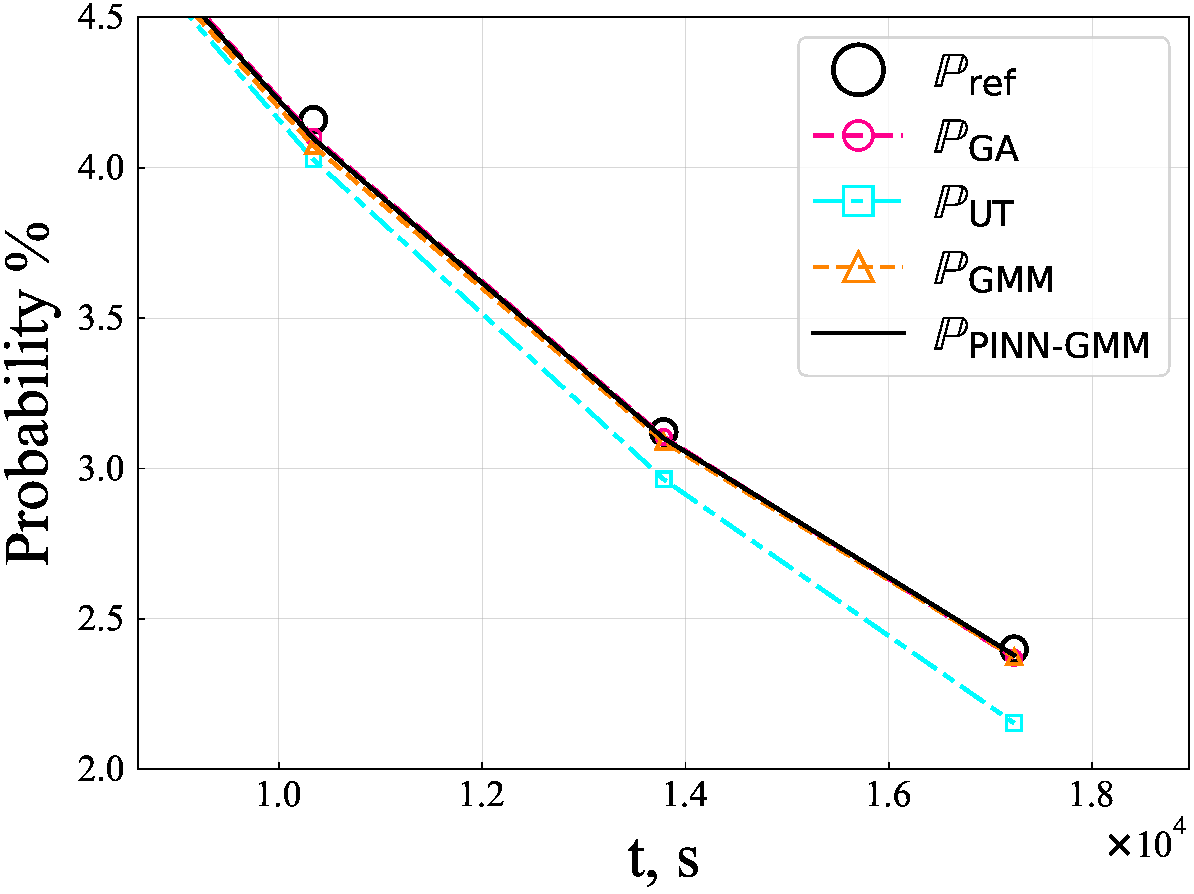}
    \caption{\rv{Point} estimates over a zoom-in time window}
    \label{fig:case2_j2_prob_est}
  \end{subfigure}
\end{minipage}
\begin{minipage}{0.45\textwidth}\centering
  \begin{subfigure}[t]{\linewidth}
    \includegraphics[width=\linewidth]{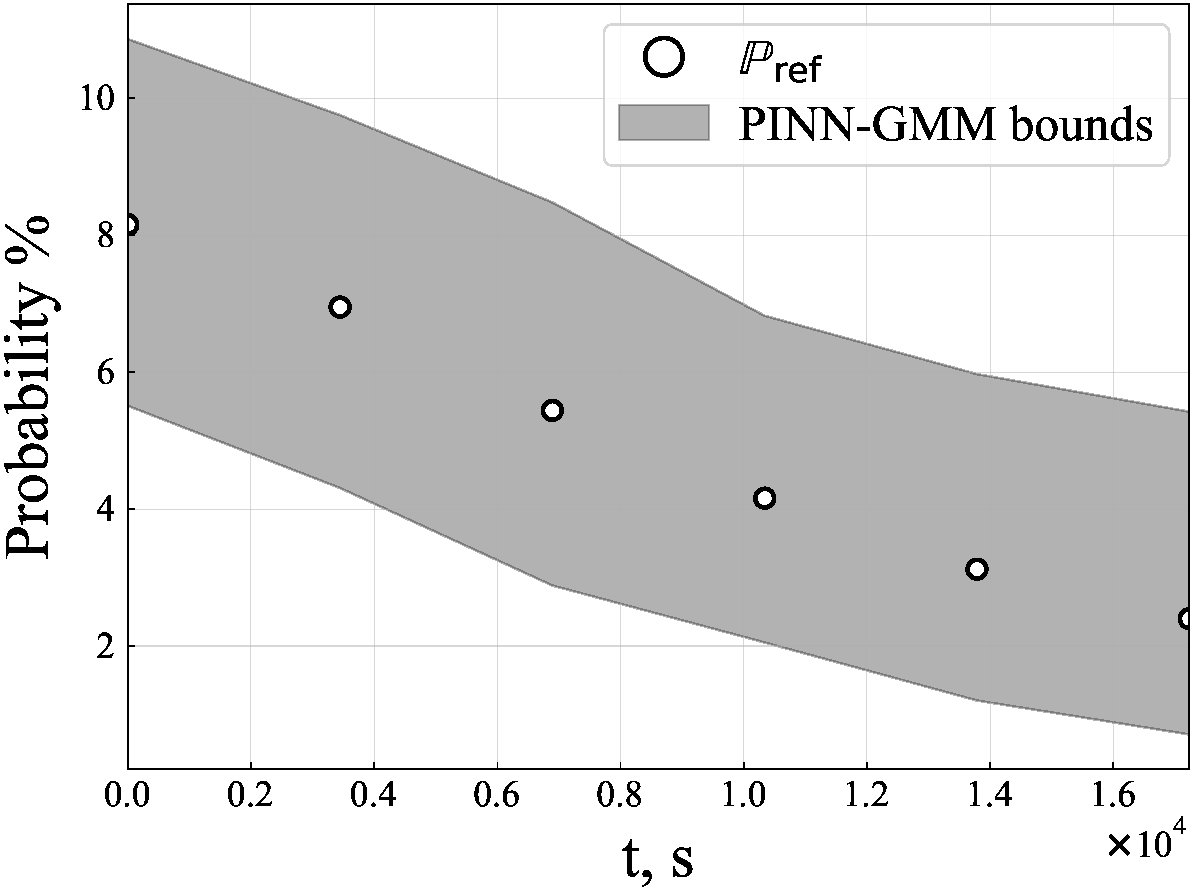}
    \caption{\rv{Reference and PINN-GMM bounds over $t$}}
    \label{fig:case2_j2_prob_bounds}
  \end{subfigure}
\end{minipage}
\caption{
\rv{4D $J_2$+diffusion event-probability results: a) method point estimates and b) reference probability with PINN-GMM bounds from Algorithm~\ref{alg:solvelp}.}}
\label{fig:case2_j2_prob}
\end{figure}

\rv{We use this \(J_2\)+diffusion case as the 4D demonstration of event-probability bounding via Algorithm~\ref{alg:solvelp}.}
\rv{Figure~\ref{fig:case2_j2_prob_est} shows the point estimates, and Fig.~\ref{fig:case2_j2_prob_bounds} overlays the PINN-GMM upper and lower bounds. This bound computation required 21.9 seconds on average for this case.
As before, only PINN-GMM supplies a certified deviation interval.
Compared with the 6D event-probability example in Fig.~\ref{fig:case1_equin_solver_Nx1_Ndeg2000_GMMguide1_Cheap0(Bounds)}, the resulting probability interval is tighter, illustrating that Algorithm~\ref{alg:solvelp} can be less conservative in the lower-dimensional 4D setting.}

\subsubsection{\(J_2\) Perturbation, Diffusion, and Low Thrust}
\begin{figure}[hbtp!]
\centering
\begin{minipage}{0.32\textwidth}\centering
  \begin{subfigure}[t]{\linewidth}
    \includegraphics[width=\linewidth]{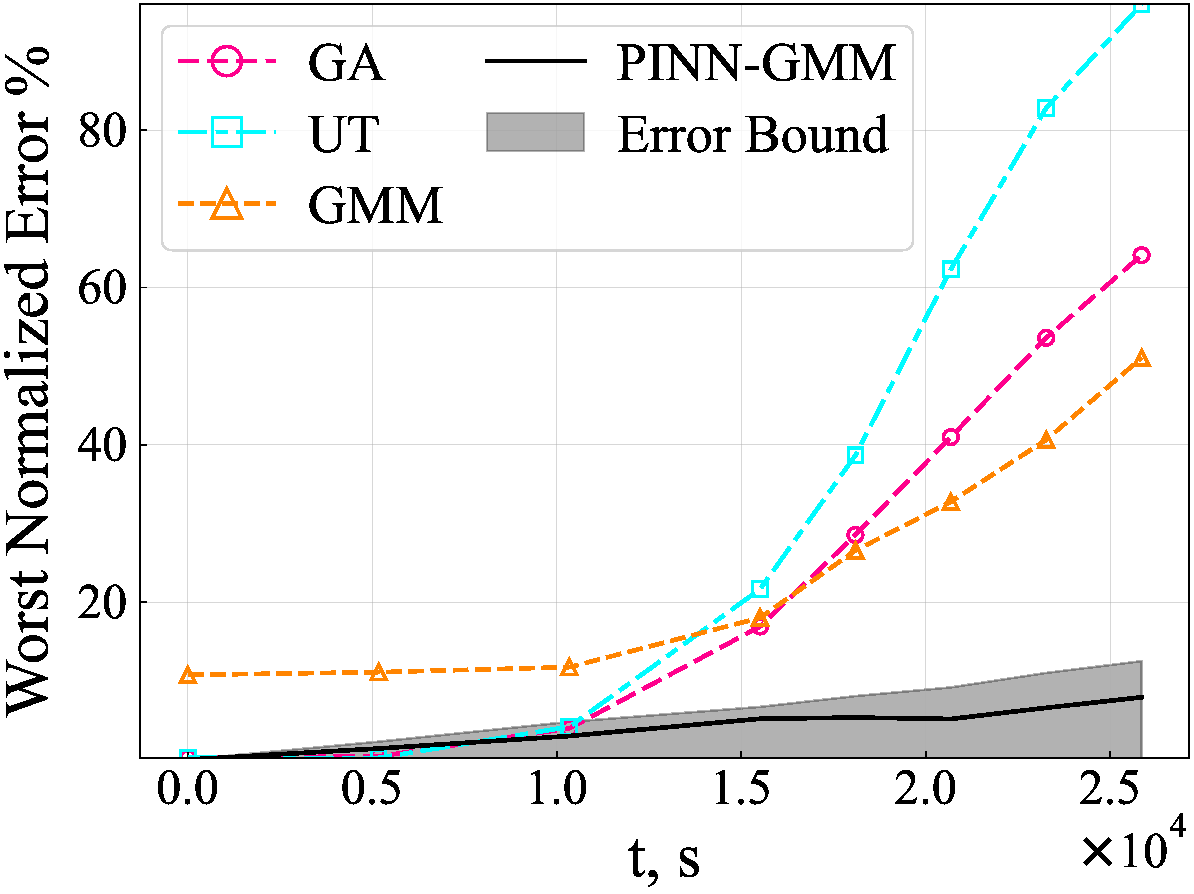}
    \caption{Worst normalized error \%}\label{fig:4d_j2_lowthrust_norm_error}
  \end{subfigure}
\end{minipage}
\begin{minipage}{0.32\textwidth}\centering
  \begin{subfigure}[t]{\linewidth}
    \includegraphics[width=\linewidth]{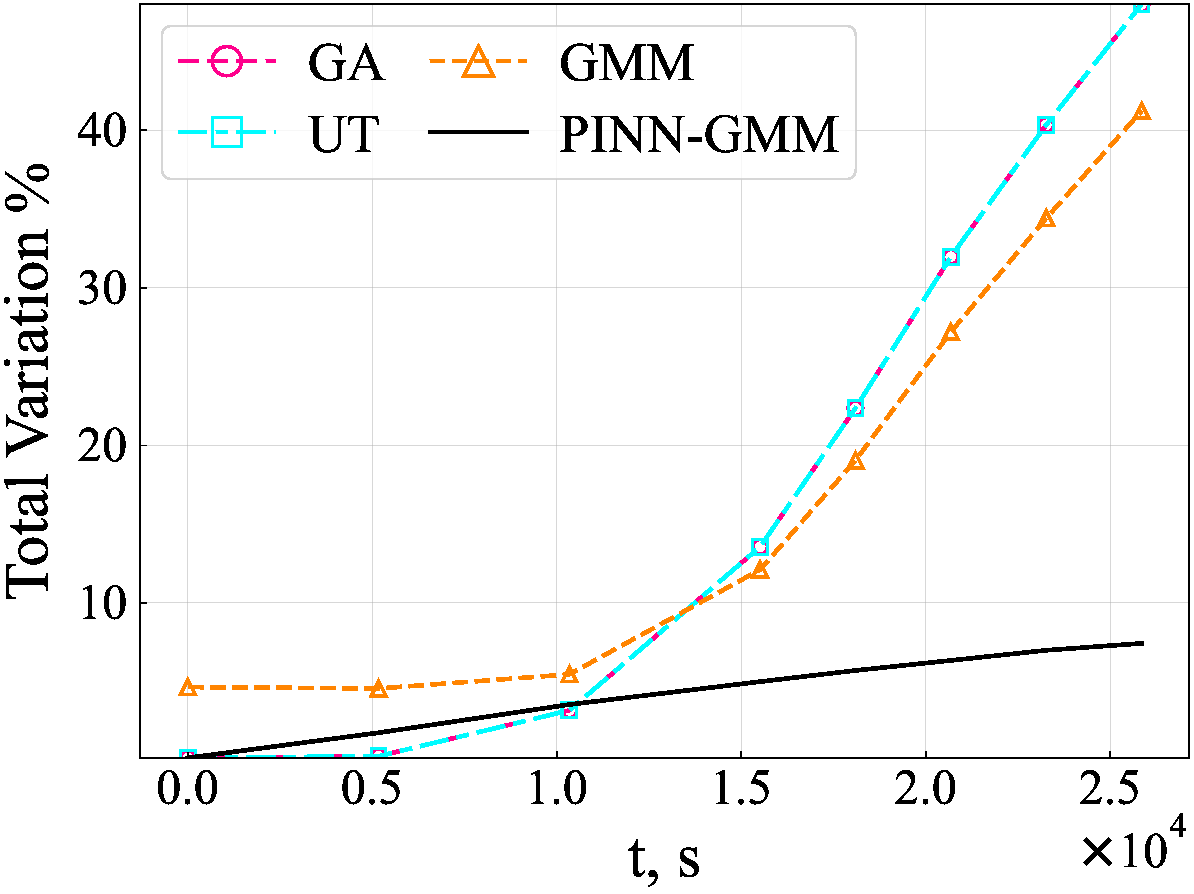}
    \caption{Total variation \%}\label{fig:4d_j2_lowthrust_metric-TV}
  \end{subfigure}
\end{minipage}
\begin{minipage}{0.32\textwidth}\centering
  \begin{subfigure}[t]{\linewidth}
    \includegraphics[width=\linewidth]{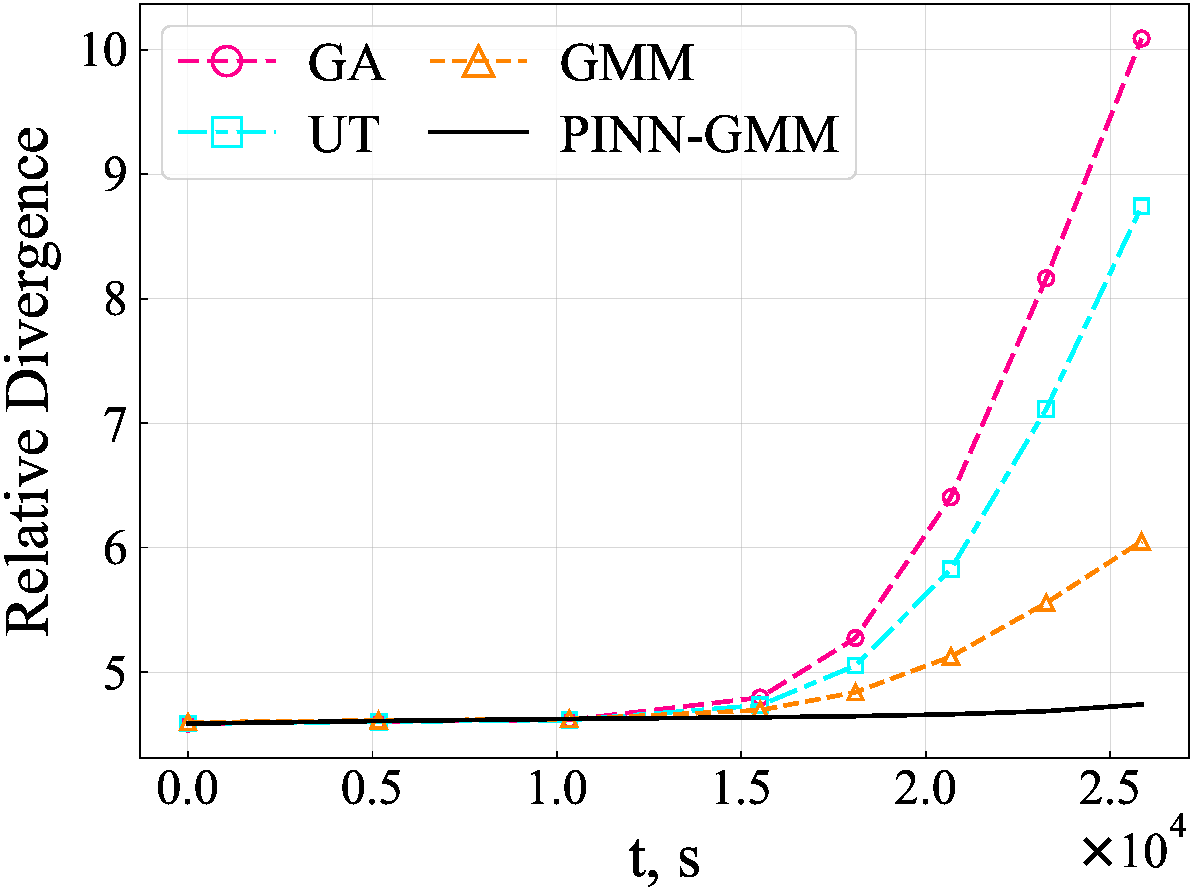}
    \caption{Relative divergence}\label{fig:4d_j2_lowthrust_rd}
  \end{subfigure}
\end{minipage}
\caption{\rv{
4D $J_2$+diffusion+low-thrust error evaluation: a) $\widetilde{\mathrm{WNE}}$ with the PINN-GMM error bound $B(t)$, b) $\widetilde{\mathrm{TV}}$, and c) $\widetilde{\mathrm{RD}}$ against the MC reference.
}}
\label{fig:4d_j2_lowthrust}
\end{figure}
\rv{Next, we add continuous low-thrust acceleration to the 4D \(J_2\)+diffusion dynamics over an extended time horizon \(T_{\mathrm{end}}=25849~\mathrm{s}\).} \rv{The thrust model is most simply described in physical coordinates: the acceleration has constant magnitude \(a_T=0.01\) m$/$s$^2$ and is aligned with the instantaneous in-plane velocity, i.e., \(\mathbf a_{\mathrm{LT}}=a_T(\dot r\,\mathbf e_r+r\dot\phi\,\mathbf e_\phi)/V\), where \(V=\sqrt{\dot r^2+(r\dot\phi)^2}\), and \(\mathbf e_r\) and \(\mathbf e_\phi\) are the radial and increasing-azimuth unit vectors, respectively.}
\rv{Similarly, we construct the reference density using a fully numerical Monte Carlo simulation and compare PINN-GMM with GA, UT, and GMM.}
\rv{For this case, constructing the reference density with \(10^6\) MC samples took \(66200\) seconds.
To mitigate initial-condition error, we constrain the PINN-GMM parameterization to match the initial Gaussian distribution exactly at $t=0$.
We note that this construction applies only when the initial PDF $p_0$ is exactly representable by a GMM with the chosen $K_{\mathrm{GMM}}$; see Appendix~\ref{appendix:4d_lowthrust} for details.
Training PINN-GMM \(\hat p\) over 60000 iterations took \(7229\) seconds; training its error PINN \(\hat e\) over 40000 iterations took \(23211\) seconds.}
\rv{Figure~\ref{fig:4d_j2_lowthrust} reports the evaluation measures.
Thanks to the exact initial-condition matching, PINN-GMM starts with the same initial error as GA and UT.
It then exhibits substantially slower error growth across all measures, while the learned bound $B(t)$ in Fig.~\ref{fig:4d_j2_lowthrust_norm_error} contains the observed worst normalized error over the propagation horizon.}

\subsection{Summary of Orbital Experiments}

\begin{table*}[htb!]
    \centering
    \small
\caption{
\rv{Summary of orbital cases. The columns under ``Error measures'' report the ($\min$, $\max$) range of each measure over the evaluation time horizon. $\widetilde{\mathrm{TV}}$ and $\widetilde{\mathrm{WNE}}$ are reported in percent, and $\widetilde{\mathrm{RD}}$ is dimensionless; $\downarrow$ indicates that smaller values are better. The column ``Offline comp. time'' reports the computation time (in seconds) required to construct the density representation: training time for PINN-based methods, propagation time for GA and UT, and initial GMM fitting plus propagation time for GMM. 
A dagger superscript ($^\dagger$) indicates methods that provide 
a valid error bound $B(t)$.
An asterisk superscript ($^*$) on a PINN-based entry indicates that the training run plateaued before convergence, and the reported results use the best checkpoint obtained. ``N/A'' means that \(\log \hat p\) is not available because the values of \(\hat p\) are below machine precision.}
}
    \renewcommand{\arraystretch}{1.0}
    \setlength{\tabcolsep}{1.5pt} 
    \begin{tabular}{p{2.5cm} l c c c >{\centering\arraybackslash}p{2.0cm}}
      \toprule
      \bfseries Experiments & \bfseries Methods &
      \multicolumn{3}{c}{\bfseries Error \rv{measures} ($\min$, $\max$)} &
      \bfseries \rv{Offline comp. time} 
      \\
      \cmidrule(lr){3-5}
      & & $\widetilde{\mathrm{TV}}$\% $\downarrow$ & $\widetilde{\mathrm{WNE}}$\% $\downarrow$& $\widetilde{\mathrm{RD}}$ $\downarrow$& \\
      \midrule

      \multirow{7}{1.6cm}{\centering \textbf{6D Kepler-Equinoctial orbit elements}}
      & GA            & 
      \rv{($2.38\times 10^{-5}$, 9.95)}
      & \rv{($6.73\times 10^{-5}$, 12.3)} & \rv{($-0.40$,-0.30)}
      & \rv{0.005} \\
      & UT            & \rv{($2.38\times 10^{-5}$, 9.77)}   & \rv{($6.73\times 10^{-5}$, 15.3)} & \rv{(-0.40, -0.33)} & \rv{0.066} \\
      & GMM           & \rv{(1.11, 8.96)}   & \rv{(3.91, 9.24)}   & \rv{(-0.40, -0.33)} & \rv{1520} \\
      & PINN-MLP (vanilla)         & \rv{(0.74, 4.12)}   & \rv{(0.19, 4.54)}   & \rv{(-0.42, -0.39)} & 8033 
      \\
      & PINN-GMM (vanilla)       & \rv{(17.0, 81.8)} & \rv{(11.6, 100)} &  \rv{(-0.24, 15.3)} & 2231$^*$ 
      \\
      & PINN-GMM (no-encoder)      & \rv{(65.8, 100)} & \rv{(413, 499)}&  \rv{(3.1, 76.0)} & 1278$^*$ 
      \\
      & \textbf{PINN-GMM (proposed)$^\dagger$}      & \textbf{\rv{(0.58, 2.87)}}   & \textbf{\rv{(0.53, 2.45)}}   & \textbf{\rv{(-0.39, -0.38)}} & \textbf{7317} 
      \\
      \midrule

      \multirow{7}{2.0cm}{\centering \textbf{6D Kepler-Rotating spherical}}
      & GA            & \rv{(0.22, 54.1)}  & \rv{(0.30, 71.0)}  &  \rv{(6.46, 13.6)}  & \rv{0.683} \\
      & UT            & \rv{(0.22, 54.1)}  & \rv{(0.30, 98.7)}  &  \rv{(6.46, 12.6)}  & \rv{5.016} \\
      & GMM           & \rv{(2.30, 49.7)}  & \rv{(4.03, 54.3)}  &  \rv{(6.46, 10.2)}  & \rv{1000} \\
      & PINN-MLP (vanilla)         & \rv{($4.6\times10^5$, $1.3\times10^6$)} & \rv{(72.9, 88.6)} & \rv{($9.2\times 10^3$, $2.6\times 10^4$)} & 87$^*$ \\
      & PINN-GMM (vanilla)       & \rv{(0.22, 93.7)} & \rv{(0.30, 100)}  & \rv{(6.46, 29.5)}  & 2233$^*$  \\
      & PINN-GMM (no-encoder)      &  \rv{(45.0, 53.4)} & \rv{(100, 100)} & N/A & 137$^*$  \\
      & \textbf{PINN-GMM (proposed)$^\dagger$}      & \textbf{\rv{(4.08, 9.02)}}    & \textbf{\rv{(2.35, 7.78)} }   &  \textbf{\rv{(6.47, 6.70)}}  & \textbf{13074}  \\
      \midrule

      \multirow{7}{2.0cm}{\centering \textbf{4D Kepler-$J_2$+diffusion}}
      & GA            & \rv{(0.16, 20.2)}    & \rv{(0.22, 25.7)}    & \rv{(4.58, 5.12)} & \rv{0.026} \\
      & UT            & \rv{(0.16, 20.2)}    & \rv{(0.22, 33.5)}   & \rv{(4.58, 4.89)} & \rv{0.089} \\
      & GMM           & \rv{(0.81, 17.3)}    & \rv{(2.34, 17.5)}    & \rv{(4.58, 4.91)} & \rv{1402} \\
      & PINN-MLP (vanilla)         & \rv{(1.49, 6.36)}    & \rv{(0.46, 8.55)}    & \rv{(4.59, 4.61)} & 10454  \\
      & PINN-GMM (vanilla)       & \rv{(0.54, 20.3)}    & \rv{(0.54, 25.5)}   & \rv{(4.58, 5.00)} & 3125 \\
      & PINN-GMM (no-encoder)      & N/A & N/A & N/A & 2$^*$ \\
      & \textbf{PINN-GMM (proposed)$^\dagger$}      & \textbf{\rv{(3.47, 4.18)}}    & \textbf{\rv{(2.17, 4.29)}}    & \textbf{\rv{(4.59, 4.61)}} & \textbf{7246}  \\
      \midrule
      \multirow{7}{2.0cm}{\centering \textbf{4D Kepler-$J_2$+diffusion+low-thrust}}
      & GA            & \rv{(0.16, 48.0)}    & \rv{(0.22, 64.0)}    & \rv{(4.58, 10.0)} & \rv{0.04} \\
      & UT            & \rv{(0.16, 48.0)}    & \rv{(0.21, 95.9)}   & \rv{(4.58, 8.74)} & \rv{0.15} \\
      & GMM           & \rv{(4.56, 41.2)}    & \rv{(10.7, 51.0)}    & \rv{(4.59, 6.04)} & \rv{2378} \\
      & PINN-MLP (vanilla)         & \rv{(1.74, 50.1)}    & \rv{(0.89, 100)}    & \rv{(4.60, 53.3)} & \rv{18593}  \\
      & PINN-GMM (vanilla)       & \rv{(0.16, 99.6)}    & \rv{(0.22, 100)}   & \rv{(4.58, 45.2)} & \rv{1.91$^*$}\\
      & PINN-GMM (no-encoder)      & \rv{(49.5, 50.7)} & \rv{(100, 100)} & N/A & \rv{3.55$^*$} \\
      & \textbf{PINN-GMM (proposed)$^\dagger$}      & \textbf{\rv{(0.16, 7.43)}}    & \textbf{\rv{(0.00, 7.91)}}    & \textbf{\rv{(4.58, 4.73)}} & \textbf{\rv{7229}}  \\
      \bottomrule
    \end{tabular}
    \label{tab:main_result}
\end{table*}

\rv{Table~\ref{tab:main_result} summarizes the orbital cases using the (minimum, maximum) range of each error measure over the evaluation horizon. This range-based summary captures how the uncertainty-propagation error varies over time. Across the four orbital cases, the proposed PINN-GMM keeps the maximum $\widetilde{\mathrm{TV}}$ below \(9.0\%\) and the maximum $\widetilde{\mathrm{WNE}}$ below \(7.9\%\). For $\widetilde{\mathrm{RD}}$, smaller values indicate better likelihood-based agreement with the reference samples; the proposed PINN-GMM also maintains consistently low and narrow RD ranges, whereas the baselines and naive PINN variants can exhibit substantially larger RD values, especially in the 6D rotating-spherical and 4D low-thrust cases.}

\rv{Table~\ref{tab:main_result} also reports three PINN ablations to isolate the roles of the density representation, sampling strategy, and encoder. The first, \emph{PINN-MLP (vanilla)}, uses a standard MLP for \(\hat p\) with uniform sampling; the second, \emph{PINN-GMM (vanilla)}, uses the same GMM density representation as the proposed method but still relies on uniform sampling; and the third, \emph{PINN-GMM (no-encoder)}, uses the combined sampling strategy but learns directly in \(x\)-space. The results show that the GMM representation alone is not sufficient: in higher-dimensional problems, uniform samples provide poor coverage under practical sampling budgets, leading to unstable training for both vanilla variants. The degradation of \emph{PINN-GMM (no-encoder)} further shows that the encoder is important for conditioning the training problem. The asterisks in Table~\ref{tab:main_result} mark runs whose losses plateau before convergence; this behavior occurs frequently for \emph{PINN-MLP (vanilla)}, \emph{PINN-GMM (vanilla)}, and \emph{PINN-GMM (no-encoder)}, indicating that these naive variants are harder to train reliably than the proposed PINN-GMM. In particular, the extremely large \(\widetilde{\mathrm{TV}}\) of \emph{PINN-MLP (vanilla)} in the 6D Keplerian rotating-spherical-coordinate case suggests that, without a PDF-valid representation, a neural-network approximation can produce probability estimates that are far from the true values.}

\rv{The offline computation time in Table~\ref{tab:main_result} should be interpreted separately from online evaluation cost. PINN-GMM is more expensive to construct than GA and UT because it requires offline training, while GMM also incurs initial mixture fitting and propagation cost. However, once trained, PINN-GMM returns the GMM parameters at any queried time through a single forward pass, which takes approximately 0.16 milliseconds on average in our implementation~\cite{PINN_Error:github}}

\section{Discussion}
The current framework relies on a well-posed Fokker-Planck equation and thus assumes continuously differentiable dynamics.
It does not apply directly to systems with impulsive effects.
For example, if an impulse thrust is applied at \rv{some time $t_{\mathrm{imp}}$, where $0 < t_{\mathrm{imp}} < T_{\mathrm{end}}$,} one cannot propagate uncertainty over $[0,T_{\mathrm{end}}]$ in a single Fokker--Planck formulation.
\rv{An operational limitation is that the PINN-GMM approximation error, and the associated learned error bound, may grow over longer horizons. Extending the method to one-day or multi-day orbital uncertainty propagation while maintaining informative error bounds is an important direction for future work.}

\rv{Computation speed remains a} key area for future investigation.
For the orbital experiments considered here, our approach currently takes longer (on CPU) than the baseline uncertainty-propagation methods, but remains faster than high-fidelity Monte Carlo.
We did not explore more advanced PINN architectures (e.g., transformer-based models) or training techniques (e.g., adaptive loss weighting, GPU parallelization); incorporating them could further reduce runtime.
To facilitate such developments, we release our code on
GitHub~\cite{PINN_Error:github} so that others can test alternative designs and push efficiency and accuracy further.

\section{Conclusion}
We present a physics-informed neural network (PINN) framework for uncertainty propagation that learns a space-time solution of the Fokker-Planck equation along with time-dependent worst-case error bounds \rv{that are theoretically grounded and empirically validated on the considered case studies.}
We propose a differentiable, time-conditioned Physics-Informed Gaussian-Mixture Model (PINN-GMM) that enforces PDF properties by construction. 
The PINN-GMM and its learned error bounds define an ambiguity set that, \rv{when the error PINN is sufficiently accurate, contains the true PDF and} yields rigorous event-probability intervals via tractable linear programs.
Across higher-dimensional orbital benchmarks, PINN-GMM provides accurate \rv{hours-scale uncertainty propagation} with \rv{informative error quantification} and improved robustness \rv{relative to} classical baselines and naive PINN variants.
\rv{Although the propagation horizons considered here are shorter than 
those of some orbital uncertainty studies, the ability to 
bound the propagated-density error is critical---uncertainty models that are inaccurate at unknown levels may lead to significant adverse consequences that are difficult to quantify.}

\clearpage            
\appendix
\section*{Appendix}
\section{Keplerian Dynamics in Orbit Elements}\label{appendix:oe_dyn}

\paragraph{Equinoctial orbit elements}
Let $(a,e,i,\Omega,\omega,M)$ denote the classical Keplerian elements (mean anomaly $M$). The equinoctial elements are
$[a,P_1,P_2,Q_1,Q_2,\lambda]^\top$, where
$P_1=e\sin(\omega+\Omega),
P_2=e\cos(\omega+\Omega),
Q_1=\tan(i/2)\sin\Omega,
Q_2=\tan(i/2)\cos\Omega,$ and $\lambda=M+\omega+\Omega$.
Let $\mathbf x:=[a,P_1,P_2,Q_1,Q_2,\lambda]$ be the state of a nominal Keplerian orbit governed by $\mathrm d\mathbf x(t)=\mathbf f(\mathbf x(t))\,\mathrm dt$, where $
\mathbf f(\mathbf x(t))=\big[0,0,0,0,0,\sqrt{\mu/a(t)^3}\big]^\top$.
This dynamical system induces the flow $\Phi(t;t_0)$, with
$
\Phi(t;t_0)=\mathbf x(t_0)+\Big[0,0,0,0,0,\sqrt{\mu/a^3}\,(t-t_0)\Big]^\top
$, and the Fokker-Planck solution for $t \geq t_0$ is the pushforward: $p(x,t)=p_0\!\big(\Phi^{-1}(t;t_0,x)\big)$ where $p_0$ denotes the PDF at initial time $t_0$.

\section{Constructing Reference `True' PDF by Fitting Gaussian Mixture}\label{appendix:fitgmm}

Let $\{x_i(t)\}_{i=1}^N$ denote $N$ Monte Carlo samples of the stochastic differential equation at time $t \in T$.
At each time \rv{$t$, when no analytical reference density is available, we fit a Gaussian mixture density to the propagated MC samples $\{x_i(t)\}_{i=1}^N$ using the expectation-maximization algorithm in the scikit-learn package~\cite{pedregosa2011scikit}. 
As discussed in Sec.~\ref{subsec:ref_true_pdf}, this fitted density serves only as a post-processing reference estimator, never as training data or exact ground truth.}
The best fit is obtained by searching over a list of candidate numbers of mixture components $\{1,2,4,8,16\}$.

\section{Baseline Uncertainty Propagation Methods}\label{appendix:baselines}

\paragraph{Gaussian Approximation}
Gaussian Approximation (GA) propagates the state mean \rv{through the} deterministic nonlinear dynamics \rv{and the covariance through dynamics linearized about the mean.
For the SDE in Eq.~\eqref{eq:sde_general}, we integrate
$
\dot{\bar x}=f(\bar x,t)$ and $
\dot P_x = F P_x + P_x F^T + g(\bar x,t)g(\bar x,t)^T,
$ where $F=\partial f/\partial x|_{x=\bar x(t)}$; see~\citet[Algorithm 9.4]{sarkka2019applied}.
The ODEs are solved using SciPy's \texttt{solve\_ivp}~\citep{2020SciPy-NMeth}, and the resulting density is $N(x;\bar x(t),P_x(t))$.}

\paragraph{Unscented Transform}
Unscented Transform (UT) propagates a set of deterministic sigma points over time, and compute the statistics of these sigma points to obtain the means and covariances~\cite{wan2000unscented,julier2004unscented}.
Our implementation \rv{follows} ~\citet[Algorithm 9.5]{sarkka2019applied}, \rv{using} the scaled unscented transform with a tunable scaling parameter $\alpha \geq 0$, usually chosen according to some rules of thumb~\citep{julier2002scaled}; we use $\alpha=10^{-3}$ by default.

\paragraph{Gaussian Mixture Model}
A simple uncertainty propagation method using Gaussian mixture is implemented by: (i) fit an initial Gaussian mixture to the initial PDF $p_0(x)$ by the same method in Appendix~\ref{appendix:fitgmm} with more candidate number of mixture components $\{10, 20, 30, 40\}$; (ii) propagate each mixture by the Gaussian approximation (GA) method to any time $t\in T$, and combine them to obtain the final Gaussian mixture density at each time.

\section{Orbit Experiments}

\paragraph{6D Unperturbed Keplerian Orbit in Equinoctial Orbit Element Space}\label{appendix:6d_equin}
The state and the dynamics are provided in Sec.~\ref{subsubsec:oec}.
The initial PDF is a Gaussian distribution with mean $\mu_0 =  [4.2245700 \times 10^{7} \text{ meters}, 7.7893469\times 10^{-7}, 3.1708085\times 10^{-4}, 2.7199831\times 10^{-6}, -1.0321360\times 10^{-6}, 1.5750291\times 10^{-5}]$ and covariance $\Sigma_0 = \mathrm{diag}([2.9833938\times 10^{12}, 1.0585866\times 10^{-4}, 1.2572521\times 10^{-3},2.4929550\times 10^{-6}, 4.7029530\times 10^{-6}, 5.2294909\times 10^{-4}]).$
The state domain is $X =  [-4.0507,   5.9977] \times [-4.7033,   4.8175] \times [-4.4900,   4.9228] \times [-5.0428,   4.7609] \times [-4.7474,   4.9450] \times [-10.9330, 118.0678].$
We use $K_{\mathrm{GMM}}=5$ for the PINN-GMM.
\rv{This choice reflects the relatively structured density evolution in equinoctial elements, where only $\lambda$ varies under the unperturbed Keplerian flow.}
The training time of PINN-GMM $\hat p$ over $k_{\max}=60000$ iterations is 7317 seconds\rv{; its initial-condition loss and FP-PDE residual loss are $6.2 \times 10^{-5}$ and $6.4 \times 10^{-4}$, respectively.}
The training time of $\hat e_1$ over $k_{\max}=40000$ iterations is 8347 seconds.
\paragraph{6D Unperturbed Keplerian Orbit in Rotating Spherical Space}\label{appendix:6d_sph}
In this experiment (adapted from~\cite{sun2016uncertainty}), the state and the dynamics are defined in~\cite[Eq.~16]{sun2016uncertainty} without $J_2$ and diffusion noise: $J_2 = 0, \mathbf{G}(\mathbf{x})=0$.
The scaling constants of the rotating spherical coordinates are
$R=2 \times 10^6$ meter, $T_{\mathrm{orb}}=86164.695$ seconds, $\Theta=0.015$ rad, $\Phi=0.0387$ rad, and $W=7.2920644\times 10^{-5}$ rad/s.
The initial PDF is a Gaussian distribution with mean $\mu_0 =  [\frac{4.2164 \times 10^7}{R}, \frac{\pi/2}{\Theta}, \frac{0}{\Phi}, \frac{0}{R/T_{\mathrm{orb}}}, \frac{0}{\Theta/T_{\mathrm{orb}}} \frac{7.2921 \times 10^{-5} - W}{\Phi / T_{\mathrm{orb}}}]$ and covariance $\Sigma_0 = \mathrm{diag}([\frac{10^{11}}{R^2}, \frac{10^{-5}}{\Theta^2}, \frac{10^{-4}}{\Phi^2}, \frac{10^3}{(R/T_{\mathrm{orb}})^2}, \frac{10^{-13}}{(\Theta/T_{\mathrm{orb}})^2}, \frac{10^{-12}}{(\Phi/T_{\mathrm{orb}})^2}]).$
The state domain is $X =  [15, 27] \times [102, 107], \times [-9, 12] \times [-25, 25] \times [-10, 10] \times [-50, 100].$
$10^7$ samples from semi-analytical Monte Carlo simulation are used to construct the reference ``true'' PDF.
Sine the dynamics is more complicated, we use $K_{\mathrm{GMM}}=11$ for the PINN-GMM.
\rv{The larger $K_{\mathrm{GMM}}$ reflects the more complex density evolution in rotating spherical coordinates, where all state components couple nonlinearly.}
The training time of PINN-GMM $\hat p$ over $k_{\max}=60000$ iterations is 13074 seconds\rv{; its initial-condition loss and FP-PDE residual loss are $6.0 \times 10^{-4}$ and $4.5 \times 10^{-3}$, respectively.} The training time of $\hat e_1$ over $k_{\max}=40000$ iterations is 8816 seconds.
\paragraph{6D Unperturbed Keplerian Orbit in Cartesian Space}\label{appendix:6d_cart}
The Cartesian state is $\mathbf{x}=[x,y,z,v_x,v_y,v_z]$, with the standard Keplerian dynamics: $
    f_1(\mathbf{x})=v_x, f_2(\mathbf{x})=v_y, f_3(\mathbf{x})=v_z,
    f_4(\mathbf{x})=-\mu\,\frac{x}{r^3}, f_5(\mathbf{x})=-\mu\,\frac{y}{r^3}, f_6(\mathbf{x})=-\mu\,\frac{z}{r^3}.$
To define the initial distribution in Cartesian space, we sample from the rotating spherical coordinate's initial distribution, map those samples to Cartesian coordinates, compute the resulting Cartesian mean and variances, and then set $p_0$ to a diagonal Gaussian with the corresponding mean $\mu_0 =  [4.2158868 \times 10^{7}, -2.8649365\times 10^{3}, -6.4722394\times 10^{2}, 6.9467247\times 10^{-2}, 3.0742595\times 10^{3}, 3.1288903\times 10^{-2}]$ and covariance $\Sigma_0 = \mathrm{diag}([9.9124732\times 10^{10}, 1.7442639\times 10^{11}, 1.7627195\times 10^{10}, 1.9275731\times 10^{3}, 2.3256350\times 10^{3}, 1.8096840\times 10^{2}])$ in SI units.
The state domain is $X = [-3.0\times 10^{7}, 5.0\times 10^{7}] \times [-2.0\times 10^{6}, 5.5\times 10^{7}], \times [-8\times 10^{5}, 9\times 10^{5}] \times  [-4000.0, 250.0] \times [-4000.0,  4000.0] \times [-100.0,  100.0]$.
We use the same PINN-GMM architecture for $\hat p$ and the same training hyperparameters as in the previous experiment.
Similar to the previous experiments, we normalize the Cartesian coordinates by $x'=x/R, y'=y/R, z'=z/R, v_x'=v_x/(R/T), v_y'=v_y/(R/T), v_z'=v_z/(R/T)$ during training for numerical stability. \rv{Since the training of $\hat p$ did not converge over 60000 iterations---with its initial-condition loss and FP-PDE residual loss plateauing at $2.9 \times 10^{-2}$ and $2.1 \times 10^{-2}$, respectively---we did not train its error PINN.}

\paragraph{4D Perturbed Keplerian Orbit with Diffusion in Rotating Spherical Space}\label{appendix:4d_sph}
In this experiment (adapted from~\cite[Eq. 16]{sun2016uncertainty}), the four-dimensional state are defined by setting
$
    \mathbf{x} := [ r', \phi', v_r', v_\phi'],\;\theta' = \pi/2, v_{\theta}' = 0,
$
The standard $J_2=1.0826\times10^{-3}$ coefficient is used.
The initial PDF is a Gaussian distribution with mean $\mu_0 =  [\frac{4.2164 \times 10^7}{R}, \frac{0}{\Phi}, \frac{0}{R/T_{\mathrm{orb}}}, \frac{7.2921 \times 10^{-5} - W}{\Phi / T_{\mathrm{orb}}}]$ and covariance $\Sigma_0 = \mathrm{diag}([\frac{10^{11}}{R^2}, \frac{10^{-4}}{\Phi^2}, \frac{10^3}{(R/T_{\mathrm{orb}})^2}, \frac{10^{-12}}{(\Phi/T_{\mathrm{orb}})^2}]).$
The state domain is $X =  [18, 24] \times [-2, 3] \times [-20, 20] \times [-22, 32].$
Due to the computation limit, only $10^6$ samples from a full numerical Monte Carlo simulation are used to construct the reference ``true'' PDF.
\rv{The same $K_{\mathrm{GMM}}=11$ is used as in the rotating-spherical 6D case, as the 4D dynamics involve similar nonlinear coupling under $J_2$ perturbation and diffusion.}
\rv{The initial-condition loss and FP-PDE residual loss for $\hat p$ are $2.5 \times 10^{-4}$ and $1.8 \times 10^{-3}$, respectively.}
\rv{\paragraph{4D Low-Thrust}\label{appendix:4d_lowthrust}
For the 4D low-thrust case, the state domain is enlarged to \(X=[17,30]\times[-8,8]\times[-15,40]\times[-55,65]\). 
To remove initial-condition fitting error, the PINN-GMM with $K_{\mathrm{GMM}}=11$ is parameterized in whitened coordinates with time-gated mean and Cholesky-factor deviations, so that at \(t=0\) all mixture components reduce to the same Gaussian density \(\mathcal N(x; \mu_0,\Sigma_0)\), independent of the mixture weights.
Thus, the initial-condition fitting error is eliminated up to numerical precision, and the final FP-PDE residual loss for PINN-GMM $\hat p$ is $6.6 \times 10^{-3}$.}

\section{Neural Network Architectures for All Orbit Experiments}\label{appendix:nn}
\begin{table}[htb!]
    \centering
    \caption{PINN-GMM $\hat{p}$ architecture of each weight, mean, and covariance parameter for all the orbit experiments.}
    \begin{tabular}{lccc}
        \toprule
        \textbf{Layer Connection} & \textbf{Type} & \textbf{\# Neurons} & \textbf{Activation Function} \\
        \midrule
        Input Layer $\rightarrow$ Hidden Layer 1  & Fully Connected & 64  & Softplus \\
        Hidden Layer 1 $\rightarrow$ Hidden Layer 2  & Fully Connected & 64  & Softplus \\
        Hidden Layer 2 $\rightarrow$ Hidden Layer 3  & Fully Connected & 64  & Softplus \\
        Hidden Layer 3 $\rightarrow$ Output Layer  & Fully Connected & output-dimension  & N/A \\
        \bottomrule
    \end{tabular}
    \label{tab:nn_pinngmm_summary}
\end{table}
\begin{table}[htb!]
    \centering
    \caption{Error network $\hat e$ architecture for all the orbit experiments.}
    \begin{tabular}{lccc}
        \toprule
        \textbf{Layer Connection} & \textbf{Type} & \textbf{\# Neurons} & \textbf{Activation Function} \\
        \midrule
        Input Layer $\rightarrow$ Hidden Layer 
            & Fully Connected 
            & 64 
            & SiLU \\
        Hidden Layer $\rightarrow$ Hidden Layer 
            & 8$\times$ Residual Blocks\footnotemark 
            & 64 
            & SiLU (with LayerNorm) \\
        Hidden Layer $\rightarrow$ Output Layer 
            & Fully Connected 
            & 1 
            & N/A \\
        \bottomrule
    \end{tabular}
    \label{tab:nn_enetxl_summary}
\end{table}
\footnotetext{Each residual block is a 2-layer module with SiLU activations, optional LayerNorm, and a skip connection. A skip connection that also uses the normalized inputs is added after the 4th block.}


\section*{Funding Sources}
This work was supported in part by the Air Force Research Laboratory/RVSW under Contract No. FA945324CX026.


\section*{Acknowledgments}
Generative AI was used to edit grammar and improve readability; no new results, analyses, or citations were generated by the tool.
\bibliography{sample}

\end{document}